\documentclass[
    reprint,
    aps,
    prx,
    twocolumn,
    nofootinbib
]{revtex4-2}

\usepackage{tikz}
\usetikzlibrary{backgrounds}
\usetikzlibrary{arrows}
\usetikzlibrary{shapes,shapes.geometric,shapes.misc}
\usetikzlibrary{decorations.pathmorphing}
\usetikzlibrary{fadings}
\usetikzlibrary{decorations.markings}
\usetikzlibrary{decorations.pathreplacing,calligraphy}
\tikzstyle{tikzfig}=[baseline=-0.25em,scale=0.5]
\pgfkeys{/tikz/tikzit fill/.initial=0}
\pgfkeys{/tikz/tikzit draw/.initial=0}
\pgfkeys{/tikz/tikzit shape/.initial=0}
\pgfkeys{/tikz/tikzit category/.initial=0}
\pgfdeclarelayer{edgelayer}
\pgfdeclarelayer{nodelayer}
\pgfsetlayers{background,edgelayer,nodelayer,main}
\tikzstyle{none}=[inner sep=0mm]
\tikzstyle{every picture}=[tikzfig]
\tikzstyle{every loop}=[]
\tikzstyle{new style 0}=[fill=black, draw=black, shape=circle, scale=0.4]
\tikzstyle{new edge style 0}=[-, draw={rgb,255: red,32; green,172; blue,215}]

\tikzstyle{gate}=[shape=rectangle, text height=1.5ex, text depth=0.25ex, yshift=0.5mm, fill=white, draw=black, minimum height=3mm, yshift=-0.5mm, minimum width=3mm, font={\small}, tikzit category=circuit, inner sep=2pt]
\tikzstyle{big gate}=[shape=rectangle, text height=1.5ex, text depth=0.25ex, yshift=0.5mm, fill=white, draw=black, minimum height=10mm, yshift=-0.5mm, minimum width=5mm, font={\small}, tikzit category=circuit]
\tikzstyle{Z dot}=[inner sep=0mm, minimum size=2mm, shape=circle, draw=black, fill={rgb,255: red,221; green,255; blue,221}, tikzit category=zx]
\tikzstyle{Z phase dot}=[minimum size=5mm, font={\footnotesize\boldmath}, shape=rectangle, rounded corners=2mm, inner sep=0.2mm, outer sep=-2mm, scale=0.8, tikzit shape=circle, draw=black, fill={rgb,255: red,221; green,255; blue,221}, tikzit draw=blue, tikzit category=zx]
\tikzstyle{X dot}=[Z dot, shape=circle, draw=black, fill={rgb,255: red,255; green,136; blue,136}, tikzit category=zx]
\tikzstyle{X phase dot}=[Z phase dot, tikzit shape=circle, tikzit draw=blue, fill={rgb,255: red,255; green,136; blue,136}, font={\footnotesize\boldmath}, tikzit category=zx]
\tikzstyle{hadamard}=[fill=yellow, draw=black, shape=rectangle, inner sep=0.6mm, minimum height=1.5mm, minimum width=1.5mm, tikzit category=zx]
\tikzstyle{paulibox}=[fill={rgb,255: red,221; green,221; blue,255}, draw=black, shape=rectangle, inner sep=0.6mm, minimum height=5mm, minimum width=5mm, font={\footnotesize}, text height=1.5ex, text depth=0.25ex, tikzit category=zx]
\tikzstyle{vertex}=[inner sep=0mm, minimum size=1mm, shape=circle, draw=black, fill=black, tikzit category=misc]
\tikzstyle{vertex set}=[inner sep=0mm, minimum size=1mm, shape=circle, draw=black, fill=white, font={\footnotesize\boldmath}, tikzit category=misc]
\tikzstyle{small black dot}=[fill=black, draw=black, shape=circle, inner sep=0pt, minimum width=1.2mm, tikzit category=circuit]
\tikzstyle{cnot ctrl}=[fill=black, draw=black, shape=circle, inner sep=0pt, minimum width=1.2mm, tikzit category=circuit]
\tikzstyle{cnot targ}=[fill=white, draw=white, shape=circle, tikzit category=circuit, label={center:$\oplus$}, inner sep=0pt, minimum width=2.1mm, tikzit fill={rgb,255: red,102; green,204; blue,255}, tikzit draw=black]
\tikzstyle{ket}=[fill=white, draw=black, shape=regular polygon, regular polygon sides=3, regular polygon rotate=-30, scale=0.7, inner sep=1pt, tikzit category=circuit, tikzit shape=rectangle, tikzit fill=green]
\tikzstyle{bra}=[fill=white, draw=black, shape=regular polygon, regular polygon sides=3, regular polygon rotate=30, scale=0.7, inner sep=1pt, tikzit category=circuit, tikzit shape=rectangle, tikzit fill=red]
\tikzstyle{scalar}=[shape=rectangle, text height=1.5ex, text depth=0.25ex, yshift=0.5mm, fill=white, draw=black, minimum height=5mm, yshift=-0.5mm, minimum width=5mm, font={\small}]
\tikzstyle{clabel}=[fill=white, draw=none, shape=rectangle, tikzit fill={rgb,255: red,56; green,255; blue,242}, font={\footnotesize}, inner sep=1pt, tikzit category=labels]
\tikzstyle{empty diagram}=[draw={gray!40!white}, dashed, shape=rectangle, minimum width=1cm, minimum height=1cm, tikzit category=misc]
\tikzstyle{amap}=[fill=white, draw=black, shape=NEbox, tikzit category=asymmetric, tikzit fill=yellow, tikzit shape=rectangle]
\tikzstyle{amap conj}=[fill=white, draw=black, shape=NWbox, tikzit category=asymmetric, tikzit fill=green, tikzit shape=rectangle]
\tikzstyle{amap adj}=[fill=white, draw=black, shape=SEbox, tikzit category=asymmetric, tikzit fill=red, tikzit shape=rectangle]
\tikzstyle{amap trans}=[fill=white, draw=black, shape=SWbox, tikzit category=asymmetric, tikzit fill=orange, tikzit shape=rectangle]
\tikzstyle{astate}=[fill=white, draw=black, shape=NEtriangle, tikzit category=asymmetric, tikzit shape=circle, tikzit fill=yellow]
\tikzstyle{astate conj}=[fill=white, draw=black, shape=NWtriangle, tikzit category=asymmetric, tikzit shape=circle, tikzit fill=green]
\tikzstyle{astate adj}=[fill=white, draw=black, shape=SEtriangle, tikzit category=asymmetric, tikzit shape=circle, tikzit fill=red]
\tikzstyle{astate trans}=[fill=white, draw=black, shape=SWtriangle, tikzit category=asymmetric, tikzit shape=circle, tikzit fill=orange]
\tikzstyle{white dot}=[inner sep=0mm, minimum size=2mm, shape=circle, draw=black, fill={rgb,255: red,250; green,250; blue,250}]
\tikzstyle{white phase dot}=[minimum size=5mm, font={\footnotesize\boldmath}, shape=rectangle, rounded corners=2mm, inner sep=0.2mm, outer sep=-2mm, scale=0.8, tikzit shape=circle, draw=black, fill={rgb,255: red,250; green,250; blue,250}, tikzit draw=blue]
\tikzstyle{hbox}=[shape=rectangle, text height=2mm, fill={rgb,255: red,255; green,235; blue,61}, draw=black, minimum height=2mm, minimum width=2mm, font={\small}, tikzit category=zh, inner sep=0pt, rounded corners=0.5mm]
\tikzstyle{Z dot (zh)}=[inner sep=0mm, minimum size=2mm, shape=circle, draw=black, fill={rgb,255: red,250; green,250; blue,250}, tikzit category=zh]
\tikzstyle{X dot (zh)}=[Z dot, shape=circle, draw=black, fill={rgb,255: red,193; green,193; blue,193}, tikzit category=zh]
\tikzstyle{triangle}=[fill={rgb,255: red,255; green,136; blue,136}, draw=black, shape=isosceles triangle, isosceles triangle apex angle=60, minimum size=2.5mm, inner sep=0mm]
\tikzstyle{labelled hbox}=[shape=rectangle, text height=1.75ex, text depth=0.5ex, fill={rgb,255: red,255; green,235; blue,61}, draw=black, minimum height=3mm, minimum width=4mm, font={\small}, tikzit category=zh, inner sep=1.3pt, rounded corners=0.5mm]
\tikzstyle{Z phase dot (zh)}=[Z phase dot, tikzit shape=circle, tikzit draw=blue, fill={rgb,255: red,250; green,250; blue,250}, font={\footnotesize\boldmath}, tikzit category=zh]
\tikzstyle{X phase dot (zh)}=[Z phase dot, tikzit shape=circle, tikzit draw=blue, fill={rgb,255: red,193; green,193; blue,193}, font={\footnotesize\boldmath}, tikzit category=zh]
\tikzstyle{W node}=[fill=black, draw=black, shape=regular polygon, regular polygon sides=3, minimum size=2mm]
\tikzstyle{Z dot (zw)}=[fill=white, draw=black, shape=circle, minimum width=1.2mm, inner sep=0pt]
\tikzstyle{Z phase dot XL}=[Z phase dot, fill={rgb,255: red,250; green,250; blue,250}, draw=black, shape=circle, tikzit draw={rgb,255: red,191; green,0; blue,64}, tikzit shape=circle, font={\large\boldmath}, inner sep=0.0mm]

\tikzstyle{hadamard edge}=[-, dashed, dash pattern=on 2pt off 0.5pt, thick, draw={rgb,255: red,68; green,136; blue,255}]
\tikzstyle{box edge}=[-, dashed, dash pattern=on 2pt off 0.5pt, thick, draw={rgb,255: red,203; green,192; blue,225}]
\tikzstyle{brace edge}=[-, tikzit draw=blue, decorate, decoration={brace,amplitude=1mm,raise=-1mm}]
\tikzstyle{diredge}=[->, thick]
\tikzstyle{double edge}=[-, double, shorten <=-1mm, shorten >=-1mm, double distance=2pt]
\tikzstyle{gray edge}=[-, {gray!60!white}]
\tikzstyle{pointer edge}=[->, very thick, gray]
\tikzstyle{boldedge}=[-, line width=1.0pt, shorten <=-0.17mm, shorten >=-0.17mm]
\tikzstyle{bidir edge}=[<->, very thick, draw={rgb,255: red,191; green,191; blue,191}]
\tikzstyle{purple edge}=[->, thick, draw={rgb,255: red,225; green,117; blue,216}]
\tikzstyle{green edge}=[->, thick, draw={rgb,255: red,167; green,231; blue,137}]
\tikzstyle{orange edge}=[->, thick, draw={rgb,255: red,245; green,170; blue,63}]
\tikzstyle{any edge}=[->, thick, draw=cyan]
\tikzstyle{red edge}=[->, thick, draw={rgb,255: red,255; green,136; blue,136}]
\tikzstyle{bidiredge}=[<->, thick]
\tikzstyle{dashed diredge}=[->, dashed, dash pattern=on 1pt off 0.5pt]
\tikzstyle{bidashed diredge}=[<->, dashed, dash pattern=on 1pt off 0.5pt]
\tikzstyle{blue edge}=[-, draw={rgb,255: red,28; green,115; blue,237}]

\usepackage{graphicx}
\usepackage{bm}
\usepackage{physics}
\usepackage{amsthm}
\usepackage{cancel}
\newtheorem{theorem}{Theorem}
\newtheorem*{problem}{Problem}
\usepackage{amssymb}
\newcommand{\im}{\mathrm{i}}
\newcommand{\e}{\mathrm{e}}
\newcommand{\poly}{\mathrm{poly}}
\usepackage{booktabs}
\usepackage{svg}
\usepackage{multirow}
\usepackage[ruled,vlined,linesnumbered]{algorithm2e}
\usepackage{hyperref}
\hypersetup{colorlinks=true, citecolor=blue}

\usepackage{soul}

\begin{document}

\title{End-to-End Quantum Algorithms for the Jones Polynomial}

\author{Tuomas Laakkonen$^1$}
\email[Correspondence to: ]{tsrl@mit.edu}
\author{Enrico Rinaldi$^1$}
\author{Chris N. Self$^1$}
\author{Eli Chertkov$^2$}
\author{Matthew DeCross$^2$}
\author{David Hayes$^2$}
\author{Brian Neyenhuis$^2$}
\author{Marcello Benedetti$^1$}
\author{Konstantinos Meichanetzidis$^1$}
\email[Correspondence to: ]{k.mei@quantinuum.com}

\affiliation{$^1$Quantinuum, Partnership House, Carlisle Place, London SW1P 1BX, United Kingdom\\
$^2$Quantinuum, 303 S. Technology Ct., Broomfield, Colorado 80021, USA}

\date{\today}

\begin{abstract}
\vspace{1em} We present an end-to-end algorithmic pipeline where a noisy digital quantum computer is used to approximate the value of the Jones polynomial at the fifth root of unity for any input link, i.e. a closed braid. This problem is DQC1-complete for Markov-closed braids and BQP-complete for Plat-closed braids, and we accommodate both versions of the problem. Even though it is widely believed that DQC1 is strictly contained in BQP, and so is `less quantum', the resource requirements of classical algorithms for the DQC1 version are at least as high as for the BQP version, and so we potentially gain `more advantage' by focusing on Markov-closed braids in our exposition. We demonstrate our quantum algorithm on Quantinuum's H2-2 quantum computer and show the effect of problem-tailored error-mitigation techniques. Further, leveraging that the Jones polynomial is a link invariant, we construct an efficiently verifiable benchmark to characterise the effect of noise present in a given quantum processor. In parallel, we develop and benchmark the state-of-the-art tensor-network-based classical algorithms for computing the Jones polynomial. The reconfigurable tools provided in this work allow for precise resource estimation to identify minimum link sizes for near-term quantum advantage in practice for a meaningful quantum-native problem in knot theory, if a candidate set of links are provided.
\end{abstract}

\maketitle

\section{Introduction}

Knot theory offers a wide range of computational problems that are notably difficult to solve.
A paradigmatic example is the evaluation of certain \emph{link invariants}, quantities that do not change for knots that are equivalent in the topological sense.
One such quantity is the Jones polynomial, a Laurent polynomial with coefficients and powers determined solely by the link topology.
This means that if two links have different Jones polynomials, then the links are topologically inequivalent~\cite{Kauffman2001}. The exact evaluation of the Jones polynomial is in general \#P-hard~\cite{jaeger1990computational}, meaning that both classical and quantum algorithms are believed to be  inefficient at this task.

We instead consider the problem of
additively approximating the Jones polynomial, up to an exponentially large scale-factor, at non-lattice roots of unity for \emph{a given link}.
Specifically, the problem we study is defined in terms of braids, as any link can be formed by `closing' a braid.
Remarkably, braids have some unitary representations,
and there are two ways to close a braid.
For the `Markov closure', the problem reduces to estimating the weighted trace of a unitary matrix, and is DQC1-complete~\cite{shor2008estimating}.
For the `Plat closure', the problem reduces to estimating a quantum amplitude, i.e. an element of a unitary matrix, and is BQP-complete~\cite{aharonov2006polynomial}.
We illustrate these in the top and bottom panels of Fig.~\ref{fig:markov-plat}(a).
It is widely believed that problems in these complexity classes cannot be efficiently solved by a classical computer in the worst case. Thus, they are excellent candidates for demonstrating a separation between classical and quantum models of computation.

The DQC1-complete formulation of the problem can be solved in the restricted model of quantum computation called `one clean qubit'. There, a single qubit is initialized in a pure state, while the remaining ones are in the maximally mixed state, and measurements are performed only at the end of the computation~\cite{Passante_2009}.
In this work, we instead use a universal \emph{gate-based digital quantum computer}, i.e. a BQP machine, and provide an end-to-end quantum algorithm that can treat \emph{both} DQC1 and BQP versions of the problem.
We choose the fifth root of unity as the evaluation point of the Jones polynomial. This is a well-studied case for which we can use the Fibonacci unitary representation of braids~\cite{KAUFFMAN_2008, shor2008estimating}.
Notably, it is approximately \emph{universal} for quantum computation~\cite{Pachos2012}.
Given this unitary representation of braids, we focus on estimating the desired weighted trace or amplitude, employing the Hadamard test as the key algorithmic primitive.
Using its control-free variant,
and making use of echo verification~\cite{Polla_2023}
and efficient compilation of braid unitaries,
we minimise the circuit depth and number of shots, and therefore, the time-to-solution.
We also employ methods for error detection~\cite{Self_2024} and error mitigation~\cite{cai2023quantum} to increase the accuracy and precision of the algorithm's estimate at the cost of minimal overhead in gates and shots.

\begin{figure*}[t]
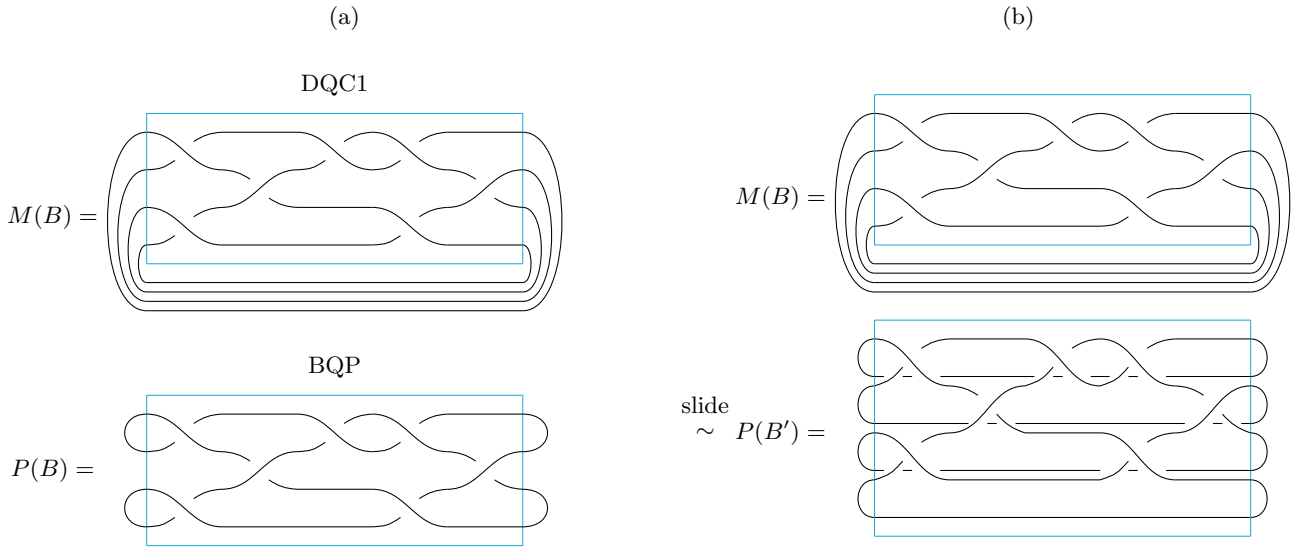

    \centering
    \begin{tabular}{@{}p{.5\textwidth}@{}p{.5\textwidth}@{}}
    \begin{center}(a)\end{center} 
    & 
    \begin{center}(b)\end{center} \\[0.25em]
    \input{figs/markov-plat.tikz}
    &
    \input{figs/markov-plat-slide.tikz}
    \end{tabular}
    \caption{(a) Markov closure $M(B)$ and Plat closure $P(B)$ of braid $B$, which we enclose in the blue box. Evaluating $V_{M(B)}$ to additive error is DQC1-complete and evaluating $V_{P(B)}$ to additive error is BQP-complete. In general $M(B)$ and $P(B)$ are topologically inequivalent. (b) Using slide moves, the Markov closure $M(B)$ can be reduced to the Plat closure $P(B')$ of a braid $B'$ on twice the number of strands and more crossings in general.}
    \label{fig:markov-plat}
\end{figure*}

Our highly optimised quantum algorithm can be used to define a problem-tailored benchmark for noisy digital quantum computers. The Jones polynomial, being a link invariant, remains unchanged when the link changes in a way that preserves its topology.
Previous work leverages this property to design benchmarks and to determine the circuit sizes that a quantum processor can faithfully execute according to a given error budget and noise model. 
Ref.~\cite{self2022estimating} employs the BQP version of the problem and chooses a lattice root of unity as the evaluation point for which the problem is actually in P, witnessed by the resulting circuits being Clifford which are classically efficiently simulable~\cite{gottesman1998heisenberg,townsendteague2022simplification}.
Ref.~\cite{goktas2019benchmarking} uses the DQC1 version with the fifth root of unity as the evaluation point, albeit with a suboptimal compilation of the vanilla Hadamard test, to define a benchmark that cannot be efficiently verified were it to be scaled up.
Our benchmark goes beyond previous proposals as it works for \emph{both} the DQC1 and BQP versions and, importantly, is \emph{efficiently verifiable}.

We begin by generating random braids whose Jones polynomial upon closure can be evaluated classically within a negligible amount of time. 
We then randomly augment them to generate a set of topologically equivalent links, each of which has the same Jones polynomial as the original link.
The corresponding circuits span various widths and depths and are guaranteed to ideally return the pre-computed answer up to efficient classical pre- and post-processing.
The circuits can finally be executed on a quantum computer of given specifications to characterise the effect of noise and identify the largest instances within our family of braids for which a desired accuracy can be achieved.

Having established an efficiently verifiable benchmark, we demonstrate how our pipeline can be used to estimate resource requirements for \emph{quantum advantage}, in terms of \emph{time to solution} and \emph{energy consumption}. After fixing the output precision and assuming a \emph{realistic noise model} for a \emph{near-term} quantum computer of given specifications, we identify braid sizes that can be successfully attacked with our algorithm. We also quantify the classical computational resources necessary to compute the same quantity within the same error. Our family of random braids might neither be representative of the average case nor capture the most difficult instances. Yet, if we assume that the classically hard instances are at least as hard as our family of braids, then the benchmark provides a conservative estimate for where quantum advantage is possible.

Because the approximations of the Jones polynomial we consider are complete for DQC1 and BQP, which are classes widely believed to be classically hard, instances likely exist which are classically hard to solve. The tools provided in this work enable one to narrow down the focus of the search for instances exhibiting a near-term quantum computational advantage, specifically by lower bounding their \emph{size}.
We envision that this practical approach inspires concrete quantification and identification of quantum advantage for other quantum-native problems. We hope our work stimulates new ideas towards quantum utility in the field of computational topology.

\section{Evaluating the Jones polynomial on a noisy digital quantum computer}
\label{sec:problem}

Any braid $B$ can be composed from two generators, $\sigma_i^{\pm 1}$, shown in Fig.~\ref{fig:generators}, by starting from the trivial braid on $l$ strands, which we denote as $1_l$.
Any sequence of generators is called a \emph{braid word}.
The \emph{Markov closure} $M(B)$ of a braid $B$ is obtained by connecting the endpoints on one side with those on the other side and without adding any crossings.
The \emph{Plat closure} $P(B)$ is obtained by connecting neighbouring pairs of strands with ``cups and caps''.
The two closures are shown in Fig.~\ref{fig:markov-plat}(a). Note that any Markov-closed braid can be deformed into a Plat-closed braid on twice the strands, as shown in Fig.~\ref{fig:markov-plat}(b).
In Fig.~\ref{fig:example-markov-braids} we show some examples of links represented as Markov-closed braids.

\begin{figure}[t]
    \centering
    \begin{tikzpicture}
	\begin{pgfonlayer}{nodelayer}
		\node [style=none] (0) at (-4.5, 0.5) {};
		\node [style=none] (2) at (-2.5, 0.5) {};
		\node [style=none] (3) at (-2.5, -0.5) {};
		\node [style=none] (5) at (-4.5, -0.5) {};
		\node [style=none] (6) at (-3.75, -0.25) {};
		\node [style=none] (7) at (-3.25, 0.25) {};
		\node [style=none] (8) at (4.75, 0.5) {};
		\node [style=none] (9) at (6.75, 0.5) {};
		\node [style=none] (10) at (6.75, -0.5) {};
		\node [style=none] (11) at (4.75, -0.5) {};
		\node [style=none] (12) at (5.5, 0.25) {};
		\node [style=none] (13) at (6, -0.25) {};
		\node [style=none] (90) at (-7.25, 0) {$\sigma_i=$};
		\node [style=none] (91) at (2, 0) {$\sigma_i^{-1}=$};
		\node [style=none] (92) at (-5.5, 0.5) {$i$};
		\node [style=none] (93) at (-5.5, -0.5) {$i+1$};
		\node [style=none] (94) at (4, 0.5) {$i$};
		\node [style=none] (95) at (4, -0.5) {$i+1$};
	\end{pgfonlayer}
	\begin{pgfonlayer}{edgelayer}
		\draw [in=-180, out=0] (0.center) to (3.center);
		\draw [bend left=15, looseness=1.25] (6.center) to (5.center);
		\draw [bend right=15] (2.center) to (7.center);
		\draw [in=180, out=0] (11.center) to (9.center);
		\draw [bend right=15] (12.center) to (8.center);
		\draw [bend right=15] (13.center) to (10.center);
	\end{pgfonlayer}
\end{tikzpicture}
    \caption{Braid generators $\sigma_i^{\pm 1}$, out of which any braid $B$ is composed.
    }
    \label{fig:generators}
\end{figure}

\begin{figure}[t]
    \centering
    \input{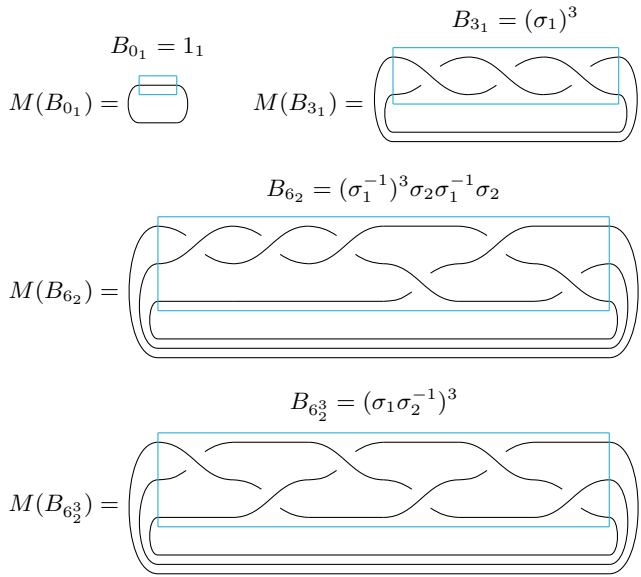}
    \caption{Examples of Markov-closed braids $M(B)$ corresponding to the unknot $0_1$, the trefoil knot $3_1$, the $6_3$ knot, and the $6^3_2$ link. Each example braid is enclosed in a blue box and we also show its braid word.
    }
    \label{fig:example-markov-braids}
\end{figure}

\subsection{Problem definition}

The DQC1-complete and BQP-complete problems of approximating the Jones polynomials $V_{M(B)}(t)$ and $V_{P(B)}(t)$ at $t=\e^{\im \frac{2\pi}{5}}$ of Markov-closed and Plat-closed braids $B$ respectively, are stated as:

\begin{problem}
    APPROX-JONES-MARKOV: Given a braid $B$ on $n-1$ strands and $c=O(\poly(n))$ crossings, compute $R\in\mathbb{C}$ such that $|V_{M(B)}(\e^{\im \frac{2\pi}{5}}) - R| \leq \epsilon \phi^{n-2}$ with high probability, where $\epsilon^{-1}=O(\poly(n))$.
\end{problem}

\begin{problem}
    APPROX-JONES-PLAT: Given a braid $B$ on $n-1$ strands and $c=O(\poly(n))$ crossings, compute $R\in\mathbb{C}$ such that $|V_{P(B)}(\e^{\im \frac{2\pi}{5}}) - R| \leq \epsilon \phi^{\frac{n}{2} - \frac{3}{2}}$ with high probability, where $\epsilon^{-1}=O(\poly(n))$.
\end{problem}

Here $\phi = \frac{1+\sqrt{5}}{2}$ is the golden ratio. Note that the estimates are given up to an additive error with an exponentially large scale-factor of $\phi^{O(n)}$. This is factor is necessary - in fact, \cite[Theorem 1.2]{kuperberg2015how} shows that an additive approximation without this scaling factor is \#P-hard to compute, and hence unlikely to be efficient for a quantum computer. Such additive bounds with an exponential scale factor are also common for similar quantum algorithms for approximate counting \cite[Definition 1]{bordewich2005approximate}. However, the completeness of these problems for DQC1 and BQP respectively indicates that even this weak approximation of the Jones polynomial is likely to be hard for classical computers.

For the chosen evaluation point, the Fibonacci unitary representation of the braid generators, $U_{\sigma_i}$ and $U_{\sigma_i^{-1}}=U_{\sigma_i}^\dagger = U_{\sigma_i}^*$, in the computational basis, is~\cite{KAUFFMAN_2008}:
\begin{equation}
\label{eq:unitary}
\hspace{-2mm}\begin{aligned}
    &\langle 101 | U_{\sigma_i} | 101 \rangle = \phi^{-1}\e^{\im \frac{4\pi}{5}} & &\langle 111 | U_{\sigma_i} | 101 \rangle = \phi^{-\frac{1}{2}}\e^{\im\frac{7\pi}{5}} \\
    &\langle 101 | U_{\sigma_i} | 111 \rangle = \phi^{-\frac{1}{2}}\e^{\im\frac{7\pi}{5}} & &\langle 111 | U_{\sigma_i} | 111 \rangle = -\phi^{-1}\\
    &\langle 110 | U_{\sigma_i} | 110 \rangle = \e^{ \im \frac{3\pi}{5}} & &\langle 011 | U_{\sigma_i} | 011 \rangle = \e^{ \im \frac{3\pi}{5}} \\
    &\langle 010 | U_{\sigma_i} | 010 \rangle = \e^{-\im\frac{4\pi}{5}}
\end{aligned}
\end{equation}
The submatrix defined by Eq.~\eqref{eq:unitary} is fully specified by unitarity and acts on the subspace of $3$-bit strings that do not contain two consecutive zeros.
Then, the braid-unitary $U_B$ is obtained by composition of the $3$-qubit gates $U_{\sigma_i^{\pm 1}}$ according to the braid word,
such that a braid on $(n-1)$ strands is represented as a quantum circuit on $n$ qubits.
As shown by the examples in Fig.~\ref{fig:braid-unitary-and-compiled-circuit}, strands lie in between qubits  such that $U_{\sigma_i}$ acts on qubits $i-1$, $i$, and $i+1$.
Therefore, $U_B$ acts on a $f_{n}$-dimensional subspace of the $n$-qubit Hilbert space, where $f_n$  is the $n$-th Fibonacci number defined by the recurrence relation $f_{n+2}=f_{n+1}+f_n$, $f_0=0$, $f_1=1$,
spanned by the Fibonacci basis states 
\begin{align}\label{eq:fibonacci-basis}
    \mathcal{F}_n = \{ |s\rangle = \otimes_{i=0}^{n-1} |s_i\rangle \; | \; s_0=0 \; \wedge \; s_i + s_{i+1} > 0 \; \forall i \}.
\end{align} 
In general, we call bit strings that contain no consecutive zeros (of which $\mathcal{F}_n$ is a subset) \emph{Fibonacci strings}.

\begin{figure}[t]
    (a)\vspace{6pt}\\
    \resizebox{\linewidth}{!}{\input{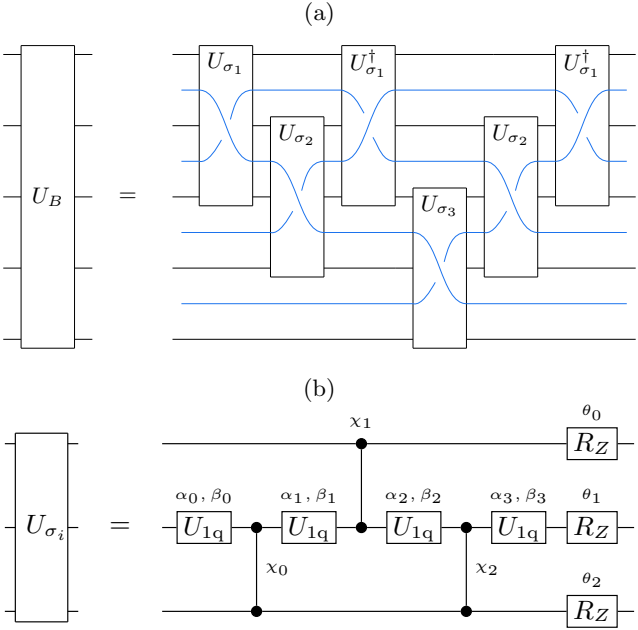}}\vspace{10pt}\\
    (b)\\
    \resizebox{\linewidth}{!}{\begin{tikzpicture}
	\begin{pgfonlayer}{nodelayer}
		\node [style=none] (120) at (-13.5, 0) {};
		\node [style=gate] (121) at (-3.25, 0) {$R_Z$};
		\node [style=none] (128) at (-13.5, -2) {};
		\node [style=none] (131) at (-13.5, -4) {};
		\node [style=gate] (132) at (-3.25, -4) {$R_Z$};
		\node [style=none] (155) at (-2.25, -4) {};
		\node [style=none] (156) at (-2.25, -2) {};
		\node [style=none] (157) at (-2.25, 0) {};
		\node [style=none] (162) at (-17, 0) {};
		\node [style=none] (163) at (-17, 0.25) {};
		\node [style=none] (164) at (-15.75, 0.25) {};
		\node [style=none] (165) at (-17, -4.25) {};
		\node [style=none] (166) at (-15.75, -4.25) {};
		\node [style=none] (167) at (-16.25, -2) {$U_{\sigma_i}$};
		\node [style=none] (168) at (-17, -4) {};
		\node [style=none] (169) at (-15.75, 0) {};
		\node [style=none] (170) at (-15.75, -4) {};
		\node [style=none] (171) at (-15.5, 0) {};
		\node [style=none] (172) at (-17.25, 0) {};
		\node [style=none] (173) at (-17, -4) {};
		\node [style=none] (174) at (-15.75, -4) {};
		\node [style=none] (175) at (-15.5, -4) {};
		\node [style=none] (176) at (-17.25, -4) {};
		\node [style=none] (177) at (-17, -2) {};
		\node [style=none] (178) at (-15.75, -2) {};
		\node [style=none] (179) at (-15.5, -2) {};
		\node [style=none] (180) at (-17.25, -2) {};
		\node [style=none] (181) at (-14.5, -2) {$=$};
		\node [style=cnot ctrl] (182) at (-11.25, -2) {};
		\node [style=cnot ctrl] (187) at (-11.25, -4) {};
		\node [style=cnot ctrl] (190) at (-8.75, 0) {};
		\node [style=cnot ctrl] (191) at (-8.75, -2) {};
		\node [style=cnot ctrl] (196) at (-6.25, -2) {};
		\node [style=cnot ctrl] (197) at (-6.25, -4) {};
		\node [style=gate] (201) at (-12.5, -2) {$U_\mathrm{1q}$};
		\node [style=gate] (206) at (-10, -2) {$U_\mathrm{1q}$};
		\node [style=gate] (212) at (-7.5, -2) {$U_\mathrm{1q}$};
		\node [style=gate] (216) at (-5, -2) {$U_\mathrm{1q}$};
		\node [style=gate] (218) at (-3.25, -2) {$R_Z$};
		\node [style=none] (223) at (-3.25, -1.25) {\tiny $\theta_1$};
		\node [style=none] (224) at (-3.25, 0.75) {\tiny $\theta_0$};
		\node [style=none] (225) at (-3.25, -3.25) {\tiny $\theta_2$};
		\node [style=none] (226) at (-12.5, -1.25) {\tiny $\alpha_0,\beta_0$};
		\node [style=none] (227) at (-10, -1.25) {\tiny $\alpha_1,\beta_1$};
		\node [style=none] (228) at (-7.5, -1.25) {\tiny $\alpha_2,\beta_2$};
		\node [style=none] (229) at (-5, -1.25) {\tiny $\alpha_3,\beta_3$};
		\node [style=none] (230) at (-5.75, -3) {\tiny $\chi_2$};
		\node [style=none] (231) at (-10.75, -3) {\tiny $\chi_0$};
		\node [style=none] (232) at (-8.75, 0.5) {\tiny $\chi_1$};
	\end{pgfonlayer}
	\begin{pgfonlayer}{edgelayer}
		\draw (163.center) to (164.center);
		\draw (164.center) to (166.center);
		\draw (166.center) to (165.center);
		\draw (165.center) to (163.center);
		\draw (172.center) to (162.center);
		\draw (169.center) to (171.center);
		\draw (176.center) to (173.center);
		\draw (174.center) to (175.center);
		\draw (180.center) to (177.center);
		\draw (178.center) to (179.center);
		\draw (182) to (187);
		\draw (190) to (191);
		\draw (196) to (197);
		\draw (201) to (182);
		\draw (206) to (191);
		\draw (212) to (196);
		\draw (216) to (218);
		\draw (218) to (156.center);
		\draw (155.center) to (132);
		\draw (132) to (197);
		\draw (197) to (187);
		\draw (187) to (131.center);
		\draw (157.center) to (121);
		\draw (121) to (190);
		\draw (190) to (120.center);
		\draw (216) to (196);
		\draw (212) to (191);
		\draw (206) to (182);
		\draw (201) to (128.center);
	\end{pgfonlayer}
\end{tikzpicture}}
    \caption{
    Constructing quantum circuits for a braid using optimized circuits for the braid generators.
    (a)~Example of the $5$-qubit unitary for the $4$-strand braid $B=\sigma_1\sigma_2\sigma_1^{-1}\sigma_3\sigma_2\sigma_1^{-1}$. The $3$-qubit gates $U_{\sigma_i}$, from which any such braid-unitary is composed, are defined in Eq.~\eqref{eq:unitary}.
    (b)~Optimized circuit, composed of Z-phase gates $R_Z(\theta)$, native 1-qubit gates $U_{1q}(\alpha, \beta)$ and native 2-qubit (Ising) gates $R_{ZZ}(\chi)$ \cite{Datasheets}, implementing the 3-qubit gate $U_{\sigma_i}^k$ of Eq.~\eqref{eq:unitary}. See App.~\ref{app:compilation-params} for the parameter values.
    }
    \label{fig:braid-unitary-and-compiled-circuit}
\end{figure}

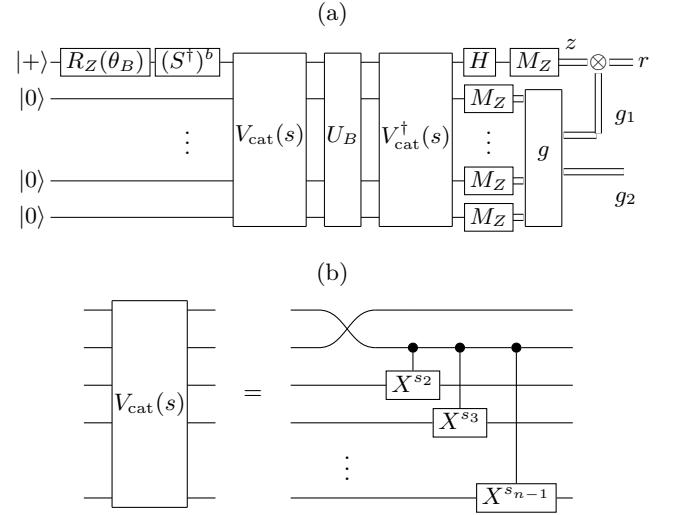
\begin{figure}[t]
    \centering
    (a)\vspace{4pt}\\
    \resizebox{\linewidth}{!}{\begin{tikzpicture}
	\begin{pgfonlayer}{nodelayer}
		\node [style=none] (0) at (-1.5, 2.75) {};
		\node [style=none] (1) at (-1.75, 0.75) {};
		\node [style=none] (2) at (-1.5, 0.75) {};
		\node [style=none] (3) at (-1.75, -0.25) {};
		\node [style=none] (4) at (-1.75, 3) {};
		\node [style=none] (5) at (-1.5, 3) {};
		\node [style=none] (6) at (-1.75, 4) {};
		\node [style=none] (7) at (0, 2.75) {};
		\node [style=none] (8) at (0.25, 0.75) {};
		\node [style=none] (9) at (0, 0.75) {};
		\node [style=none] (10) at (0.25, -0.25) {};
		\node [style=none] (11) at (0.25, 3) {};
		\node [style=none] (12) at (0, 3) {};
		\node [style=none] (13) at (0.25, 4) {};
		\node [style=none] (190) at (-3.75, 4.25) {};
		\node [style=none] (191) at (-1.75, 4.25) {};
		\node [style=none] (192) at (-3.75, -0.5) {};
		\node [style=none] (193) at (-1.75, -0.5) {};
		\node [style=none] (194) at (-2.75, 2) {$V_\mathrm{cat}(s)$};
		\node [style=none] (226) at (2.25, 2.75) {};
		\node [style=none] (227) at (2.25, 0.75) {};
		\node [style=none] (228) at (2.25, 0.75) {};
		\node [style=none] (229) at (2.25, -0.25) {};
		\node [style=none] (230) at (2.25, 3) {};
		\node [style=none] (231) at (2.25, 3) {};
		\node [style=none] (232) at (2.25, 4) {};
		\node [style=none] (233) at (0.25, 4.25) {};
		\node [style=none] (234) at (2.25, 4.25) {};
		\node [style=none] (235) at (0.25, -0.5) {};
		\node [style=none] (236) at (2.25, -0.5) {};
		\node [style=none] (237) at (1.25, 2) {$V_\mathrm{cat}^\dagger(s)$};
		\node [style=none] (238) at (3.25, 0.75) {};
		\node [style=none] (239) at (3.25, -0.25) {};
		\node [style=none] (240) at (3.25, 3) {};
		\node [style=none] (242) at (2.25, 4) {};
		\node [style=none] (243) at (2.25, 3) {};
		\node [style=none] (244) at (2.25, 0.75) {};
		\node [style=none] (245) at (2.25, -0.25) {};
		\node [style=gate] (246) at (4.5, 4) {$M_Z$};
		\node [style=none] (247) at (5.75, 4) {};
		\node [style=gate] (248) at (3.25, 3) {$M_Z$};
		\node [style=none] (249) at (4, 3) {};
		\node [style=gate] (250) at (3.25, 0.75) {$M_Z$};
		\node [style=none] (251) at (4, 0.75) {};
		\node [style=gate] (252) at (3.25, -0.25) {$M_Z$};
		\node [style=none] (253) at (4, -0.25) {};
		\node [style=none] (254) at (-3.75, 4) {};
		\node [style=none] (256) at (-3.75, 3) {};
		\node [style=none] (257) at (-8.75, 3) {};
		\node [style=none] (258) at (-3.75, 0.75) {};
		\node [style=none] (259) at (-8.75, 0.75) {};
		\node [style=none] (260) at (-3.75, -0.25) {};
		\node [style=none] (261) at (-8.75, -0.25) {};
		\node [style=none] (262) at (4.25, 3.25) {};
		\node [style=none] (263) at (5.25, 3.25) {};
		\node [style=none] (264) at (4.25, -0.5) {};
		\node [style=none] (265) at (5.25, -0.5) {};
		\node [style=none] (266) at (5.5, 2) {};
		\node [style=none] (267) at (6, 2) {};
		\node [style=none] (268) at (4.75, 1.5) {$g$};
		\node [style=none] (269) at (6.25, 2.25) {};
		\node [style=none] (270) at (6.25, 3.5) {};
		\node [style=none] (271) at (6.75, 4) {};
		\node [style=none] (272) at (7, 4) {};
		\node [style=none] (273) at (6.25, 4) {$\otimes$};
		\node [style=none] (274) at (-9.25, 4) {\small $|+\rangle$};
		\node [style=none] (275) at (-9.25, 3) {\small $|0\rangle$};
		\node [style=none] (276) at (-9.25, 0.75) {\small $|0\rangle$};
		\node [style=none] (277) at (-9.25, -0.25) {\small $|0\rangle$};
		\node [style=gate] (278) at (-5, 4) {$(S^\dagger)^b$};
		\node [style=none] (281) at (0, 2.75) {};
		\node [style=none] (283) at (0, 0.75) {};
		\node [style=none] (285) at (0.25, 0.75) {};
		\node [style=none] (287) at (0.25, 3) {};
		\node [style=none] (289) at (0.25, -0.25) {};
		\node [style=none] (290) at (0.25, -0.25) {};
		\node [style=none] (291) at (0.25, 3) {};
		\node [style=none] (292) at (0.25, 4) {};
		\node [style=none] (293) at (-1.75, 4) {};
		\node [style=none] (294) at (-1.75, 3) {};
		\node [style=none] (295) at (-1.75, -0.25) {};
		\node [style=none] (296) at (-1.25, 4.25) {};
		\node [style=none] (297) at (-0.25, 4.25) {};
		\node [style=none] (298) at (-1.25, -0.5) {};
		\node [style=none] (299) at (-0.25, -0.5) {};
		\node [style=none] (300) at (-1.25, 4) {};
		\node [style=none] (301) at (-1.25, 3) {};
		\node [style=none] (302) at (-1.25, -0.25) {};
		\node [style=none] (303) at (-0.25, 4) {};
		\node [style=none] (304) at (-0.25, 3) {};
		\node [style=none] (305) at (-0.25, -0.25) {};
		\node [style=none] (306) at (-0.75, 2) {$U_B$};
		\node [style=none] (307) at (-0.25, -0.25) {};
		\node [style=none] (308) at (0.25, -0.25) {};
		\node [style=none] (309) at (-0.25, 3) {};
		\node [style=none] (310) at (0.25, 3) {};
		\node [style=none] (311) at (0.25, 0.75) {};
		\node [style=none] (312) at (-1.75, 0.75) {};
		\node [style=none] (313) at (-1.25, 0.75) {};
		\node [style=none] (314) at (-0.25, 0.75) {};
		\node [style=none] (315) at (-0.25, 0.75) {};
		\node [style=none] (316) at (0.25, 0.75) {};
		\node [style=none] (317) at (-5, 2) {$\vdots$};
		\node [style=none] (318) at (3.25, 2) {$\vdots$};
		\node [style=none] (319) at (-8.75, 4) {};
		\node [style=gate] (320) at (-7.25, 4) {$R_Z(\theta_B)$};
		\node [style=gate] (321) at (3, 4) {$H$};
		\node [style=none] (322) at (5.5, 1) {};
		\node [style=none] (323) at (6.75, 1) {};
		\node [style=none] (324) at (7, 0.25) {$g_2$};
		\node [style=none] (325) at (7, 2.5) {$g_1$};
		\node [style=none] (326) at (5.5, 4.5) {$z$};
		\node [style=none] (327) at (7.5, 4) {$r$};
	\end{pgfonlayer}
	\begin{pgfonlayer}{edgelayer}
		\draw (190.center) to (191.center);
		\draw (191.center) to (193.center);
		\draw (193.center) to (192.center);
		\draw (192.center) to (190.center);
		\draw (233.center) to (234.center);
		\draw (234.center) to (236.center);
		\draw (236.center) to (235.center);
		\draw (235.center) to (233.center);
		\draw (239.center) to (245.center);
		\draw (244.center) to (238.center);
		\draw (240.center) to (243.center);
		\draw [style=double edge] (246) to (247.center);
		\draw [style=double edge] (248) to (249.center);
		\draw [style=double edge] (250) to (251.center);
		\draw [style=double edge] (252) to (253.center);
		\draw (256.center) to (257.center);
		\draw (258.center) to (259.center);
		\draw (260.center) to (261.center);
		\draw (262.center) to (263.center);
		\draw (263.center) to (265.center);
		\draw (265.center) to (264.center);
		\draw (264.center) to (262.center);
		\draw [style=double edge] (266.center) to (267.center);
		\draw [style=double edge] (269.center) to (270.center);
		\draw [style=double edge] (271.center) to (272.center);
		\draw (254.center) to (278);
		\draw (296.center) to (297.center);
		\draw (297.center) to (299.center);
		\draw (299.center) to (298.center);
		\draw (298.center) to (296.center);
		\draw (295.center) to (302.center);
		\draw (301.center) to (294.center);
		\draw (293.center) to (300.center);
		\draw (303.center) to (292.center);
		\draw (307.center) to (308.center);
		\draw (309.center) to (310.center);
		\draw (313.center) to (312.center);
		\draw (315.center) to (316.center);
		\draw (319.center) to (320);
		\draw (320) to (278);
		\draw (242.center) to (321);
		\draw (321) to (246);
		\draw [style=double edge] (322.center) to (323.center);
	\end{pgfonlayer}
\end{tikzpicture}} \vspace{4pt}\\
    (b)\vspace{4pt}\\
    \begin{tikzpicture}
	\begin{pgfonlayer}{nodelayer}
		\node [style=none] (254) at (-2.5, 4) {};
		\node [style=none] (256) at (-2.5, 3) {};
		\node [style=none] (257) at (-5.75, 3) {};
		\node [style=none] (258) at (-2.5, 2) {};
		\node [style=none] (259) at (-5.75, 2) {};
		\node [style=none] (260) at (-2.5, 0) {};
		\node [style=none] (261) at (-5.75, 0) {};
		\node [style=none] (279) at (-3.5, 4) {};
		\node [style=none] (317) at (-4.25, 1) {$\vdots$};
		\node [style=cnot ctrl] (319) at (-2.5, 4) {};
		\node [style=gate] (320) at (-2.5, 3) {$X^{s_2}$};
		\node [style=none] (321) at (-2.5, 5) {};
		\node [style=none] (322) at (-3.5, 5) {};
		\node [style=none] (323) at (-3.5, 4) {};
		\node [style=none] (324) at (-5, 5) {};
		\node [style=none] (325) at (-3.5, 5) {};
		\node [style=none] (326) at (-5, 4) {};
		\node [style=none] (327) at (-5, 4) {};
		\node [style=none] (328) at (-5.75, 4) {};
		\node [style=none] (329) at (-5, 5) {};
		\node [style=none] (330) at (-5.75, 5) {};
		\node [style=none] (332) at (-2.5, 2) {};
		\node [style=cnot ctrl] (333) at (-1.25, 4) {};
		\node [style=gate] (334) at (-1.25, 2) {$X^{s_3}$};
		\node [style=none] (336) at (-2.5, 0) {};
		\node [style=cnot ctrl] (337) at (0.25, 4) {};
		\node [style=gate] (338) at (0.25, 0) {$X^{s_{n-1}}$};
		\node [style=none] (339) at (1.75, 4) {};
		\node [style=none] (340) at (1.75, 3) {};
		\node [style=none] (341) at (1.75, 2) {};
		\node [style=none] (342) at (1.75, 0) {};
		\node [style=none] (343) at (1.75, 5) {};
		\node [style=none] (345) at (-8.5, 2) {};
		\node [style=none] (346) at (-8.25, 2) {};
		\node [style=none] (347) at (-8.5, 0) {};
		\node [style=none] (348) at (-8.5, 4) {};
		\node [style=none] (349) at (-8.25, 4) {};
		\node [style=none] (350) at (-8.5, 5) {};
		\node [style=none] (351) at (-10.5, 5.25) {};
		\node [style=none] (352) at (-8.5, 5.25) {};
		\node [style=none] (353) at (-10.5, -0.25) {};
		\node [style=none] (354) at (-8.5, -0.25) {};
		\node [style=none] (355) at (-9.5, 2.5) {$V_\mathrm{cat}(s)$};
		\node [style=none] (356) at (-10.5, 5) {};
		\node [style=none] (357) at (-10.5, 4) {};
		\node [style=none] (358) at (-11.25, 4) {};
		\node [style=none] (359) at (-10.5, 2) {};
		\node [style=none] (360) at (-11.25, 2) {};
		\node [style=none] (361) at (-10.5, 0) {};
		\node [style=none] (362) at (-11.25, 0) {};
		\node [style=none] (363) at (-11.25, 5) {};
		\node [style=none] (364) at (-8.5, 5) {};
		\node [style=none] (365) at (-8.5, 4) {};
		\node [style=none] (366) at (-8.5, 0) {};
		\node [style=none] (367) at (-7.75, 5) {};
		\node [style=none] (368) at (-7.75, 4) {};
		\node [style=none] (369) at (-7.75, 0) {};
		\node [style=none] (370) at (-8.5, 2) {};
		\node [style=none] (371) at (-7.75, 2) {};
		\node [style=none] (372) at (-8.5, 3) {};
		\node [style=none] (373) at (-8.25, 3) {};
		\node [style=none] (374) at (-10.5, 3) {};
		\node [style=none] (375) at (-11.25, 3) {};
		\node [style=none] (376) at (-8.5, 3) {};
		\node [style=none] (377) at (-7.75, 3) {};
		\node [style=none] (378) at (-6.75, 2.75) {$=$};
	\end{pgfonlayer}
	\begin{pgfonlayer}{edgelayer}
		\draw (256.center) to (257.center);
		\draw (258.center) to (259.center);
		\draw (260.center) to (261.center);
		\draw (279.center) to (254.center);
		\draw (322.center) to (321.center);
		\draw [in=-180, out=0] (324.center) to (323.center);
		\draw [in=-180, out=0] (326.center) to (325.center);
		\draw (328.center) to (327.center);
		\draw (330.center) to (329.center);
		\draw (254.center) to (319.center);
		\draw (256.center) to (320.center);
		\draw (319) to (320);
		\draw (332.center) to (334);
		\draw (333) to (334);
		\draw (336.center) to (338);
		\draw (337) to (338);
		\draw (319) to (333);
		\draw (333) to (337);
		\draw (320) to (340.center);
		\draw (334) to (341.center);
		\draw (338) to (342.center);
		\draw (337) to (339.center);
		\draw (321.center) to (343.center);
		\draw (351.center) to (352.center);
		\draw (352.center) to (354.center);
		\draw (354.center) to (353.center);
		\draw (353.center) to (351.center);
		\draw (357.center) to (358.center);
		\draw (359.center) to (360.center);
		\draw (361.center) to (362.center);
		\draw (366.center) to (369.center);
		\draw (368.center) to (365.center);
		\draw (364.center) to (367.center);
		\draw (371.center) to (370.center);
		\draw (363.center) to (356.center);
		\draw (374.center) to (375.center);
		\draw (377.center) to (376.center);
	\end{pgfonlayer}
\end{tikzpicture}
    \caption{The control-free echo-verification Hadamard test. 
    (a)~We employ cat states prepared by $V_\mathrm{cat}$. We measure all qubits and postprocess with a function $g$ that returns two outputs $g_1$ and $g_2$.  Sampling this circuit multiple times allows us to estimate both $\mathbb{E}[\mathrm{Re}\langle s|U_{B}|s \rangle]$ for $b=0$ and $\mathbb{E}[\mathrm{Im}\langle s|U_{B}|s \rangle]$ for $b=1$ using the $g_1$ output, up to additive error (see Fig.~\ref{fig:braid-unitary-and-compiled-circuit}(a) for an example of a braid-unitary $U_B$). The $g_2$ output can be used to detect when hardware errors have occurred.
    (b)~Circuit preparing the $n$-qubit cat state $\frac{1}{\sqrt{2}}(|0\rangle^{\otimes n} + |s\rangle) = V_\mathrm{cat}(s)(|+\rangle\otimes|0\rangle^{\otimes n-1})$.}
    \label{fig:H-test-CFEV-with-catstate-unitary}
\end{figure}

\subsection{Quantum algorithms}

Now we present the quantum algorithms that estimate $V_{M(B)}(\e^{\im \frac{2\pi}{5}})$ and $V_{P(B)}(\e^{\im \frac{2\pi}{5}})$, respectively, for a given braid $B$. It is possible to estimate $V_{M(B)}(\e^{\im \frac{2\pi}{5}})$ from the \emph{weighted trace} of $U_B$ as done by Shor and Jordan~\cite{shor2008estimating} in the one clean qubit model.
In our work, equipped with a gate-based digital quantum computer, we simulate the one clean qubit model using the Hadamard test, as suggested by Aharonov, Jones, and Landau~\cite{aharonov2006polynomial}.
We express the desired quantity as
\begin{align}
\label{eq:jones}
    V_{M(B)}(\e^{\im \frac{2\pi}{5}}) = (- \e^{- \im \frac{3\pi}{5}})^{3 w_{B}} \phi^{n-2} \mathbb{E}_{s\sim p(s)}[ \langle s | U_{B} | s \rangle ] ,
\end{align}
where $w_{B}=\sum_{\sigma \in B} \mathrm{sign}(\sigma)$ is the writhe of the braid defined as the difference between positive and negative crossings, and the basis states are sampled from the following probability distribution (see App.~\ref{app:zeckendorf-algo}):
\begin{align}
\label{eq:prob}
    p(s) = \begin{cases}
    \frac{\phi^{s_{n-1}}}{\phi^{n-1}} & \text{if } |s\rangle \in \mathcal{F}_n\\
    0 & \text{otherwise.}
    \end{cases}
\end{align}

To estimate the quantity in Eq.~\eqref{eq:jones}, we adapt the control-free echo-verification protocol \texttt{cfev}~\cite{Polla_2023}.
This protocol can be understood as a control-free version of the Hadamard test~\cite{Mitarai2019}, which also combines elements of the SWAP test~\cite{barenco1996stabilisation,Buhrman_2001} to reduce the variance of the vanilla Hadamard test~\cite{aharonov2006polynomial} (see App.~\ref{app:qualgorithm} for further details).  Given $U$, $\ket{\psi}$, and an eigenstate $\ket{\phi}$ of $U$, \texttt{cfev} allows one to estimate $\bra{\psi}U\ket{\psi}$ by preparing the cat state $\ket{\psi_\mathrm{cat}} = \frac{1}{\sqrt{2}}(\ket{\psi} + \ket{\phi})$. We adapt this to estimate $\mathbb{E}_{s \sim p(s)}[\bra{s}U_B\ket{s}]$ by sampling different $\ket{s}$ states for each measurement shot, rather than keeping $\ket{s}$ fixed as in the original \texttt{cfev} protocol.

To determine an eigenstate of $U_B$, since Shor and Jordan's method~\cite{shor2008estimating} does not prescribe the value of $U_{\sigma_i}\ket{x}$ for non-Fibonacci bitstrings $x$, we define that $U_{\sigma_i}\ket{000} = \e^{\im\alpha}\ket{000}$ for some angle $\alpha$ that will be fixed when compiling $U_{\sigma_i}$ to a circuit. Therefore, we will have that $U_B\ket{0}^{\otimes n} = \e^{\im \theta_B}\ket{0}^{\otimes n}$ with $\theta_B = w_B \alpha$, and the cat state is given by $\frac{1}{\sqrt{2}}(|0\rangle^{\otimes n} + |s\rangle) = V_\mathrm{cat}(s)(|+\rangle\otimes|0\rangle^{\otimes n-1})$, where $V_\mathrm{cat}(s)$ is the circuit illustrated in Fig.~\ref{fig:H-test-CFEV-with-catstate-unitary}(b). 

The \emph{full quantum algorithm} runs as follows.
First, we sample a computational basis state $| s \rangle$ with probability $p(s)$ (Eq.~\eqref{eq:prob}). This can be done classically in linear time as shown in App.~\ref{app:zeckendorf-algo}.
Second, we execute the $n$-qubit circuit shown in Fig.~\ref{fig:H-test-CFEV-with-catstate-unitary}(a). Third, we feed the measurement outcome $x_1, x_2, \dots, x_n$ to a function $g_1$ defined as: $g_1 = 1$ if $x_2 = x_3 = \cdots = x_n = 0$, and $g_1 = 0$ otherwise (function $g_2$ in Fig.~\ref{fig:H-test-CFEV-with-catstate-unitary}(a) is discussed later).
Finally, we compute the random variable $r = g_1 \cdot z \in\{-1,0,+1\}$ where $z = (-1)^{x_1}$. As shown in App.~\ref{app:qualgorithm}, $r$ is a crude estimator of the real or imaginary part of $\mathbb{E}_{s \sim p(s)}[\bra{s}U_B\ket{s}]$. By repeating these steps $N$ times and taking the average over samples we obtain a Monte Carlo estimate of the desired quantity
\begin{align}
\label{eq:jones-approx}
     &V_{M(B)}(\e^{\im \frac{2\pi}{5}}) = (- \e^{- \im \frac{3\pi}{5}})^{3 w_{B}} \phi^{n-2} \left(R+\epsilon\right),\\\nonumber
&\textrm{where} ~ R = N^{-1} \sum_{i=1}^N r_i \\\nonumber
&\mathrm{and} ~ \epsilon=\epsilon_\mathrm{shot}+\epsilon_\mathrm{noise}.
\end{align}

The real and imaginary parts, $\mathrm{Re}(R) $ and $\mathrm{Im}(R)$, are computed separately by running the above algorithm without ($b=0$) and with ($b=1$) the phase gate $S^\dagger$ in the circuit of Fig.~\ref{fig:H-test-CFEV-with-catstate-unitary}(a), respectively.

The above protocol can now be minimally tweaked to compute $V_{P(B)}(\e^{\im \frac{2\pi}{5}})$. 
In this case, instead of sampling $s\sim p(s)$, we always prepare the $n$-qubit state $\ket{s} = \ket{0101\cdots 10}$ and estimate (see App.~\ref{app:jones-plat}):
\begin{align}
\label{eq:jones-approx-plat}
    V_{P(B)}(\e^{\im\frac{2\pi}{5}}) &= (-\e^{-\im\frac{3\pi}{5}})^{3w_B}\phi^{\frac{n}{2} - \frac{3}{2}} \bra{s}U_B\ket{s}\\\nonumber
    &=  (-\e^{-\im\frac{3\pi}{5}})^{3w_B}\phi^{\frac{n}{2} - \frac{3}{2}}\left(R+\epsilon\right).
\end{align}

The approximation is characterised by two types of errors:
the \emph{Monte Carlo shot error} $\epsilon_\mathrm{shot}$, which induces variance due to a finite number of shots $N$,
and the \emph{gate noise error}, $\epsilon_\mathrm{noise}$, which induces a bias, which given a quantum computer with certain specifications can be determined.

To implement the circuit shown in Fig.~\ref{fig:H-test-CFEV-with-catstate-unitary} in practice, we optimise the compilation of the $3$-qubit matrix $U_{\sigma_i^{\pm 1}}$ of Eq.~\eqref{eq:unitary} in terms of the native gate set of the underlying hardware. We specifically aim to minimise the number of 2-qubit gates, as they incur the most errors in most quantum processors. With Quantinuum systems in mind~\cite{Datasheets}, as they currently exhibit the highest fidelities, we choose an ansatz composed of Z-phase gates $R_Z(\theta)$, native 1-qubit gates $U_{1q}(a,b)$ and entangling 2-qubit gates $R_{ZZ}(\chi) = \e^{-\im \frac{\chi}{2} Z \otimes Z }$. We determined the parameters of the ansatz, only requiring the desired action in the Fibonacci subspace, and that $\ket{000}$ is an eigenstate. The resulting circuit is shown in Fig.~\ref{fig:braid-unitary-and-compiled-circuit}(b), and contains only three 2-qubit gates. We repeated this process for $U_{\sigma_i}^k$ with $2 \leq k < 10$, and replace any consecutive runs of identical braid generators with these optimized circuits.
Given another quantum computing platform, this compilation is to be carried out accordingly.

\begin{figure*}
    \centering
    \begin{tabular}{@{}c@{}}
        \resizebox{.45\linewidth}{!}{\input{figs/braid.tikz}}
            \vspace{8mm}\\
        \includegraphics[width=.45\linewidth]{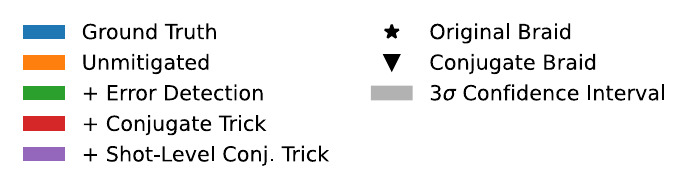}\\
    \end{tabular}
    \qquad
    \begin{tabular}{@{}c@{}}
        \includegraphics[width=.45\linewidth]{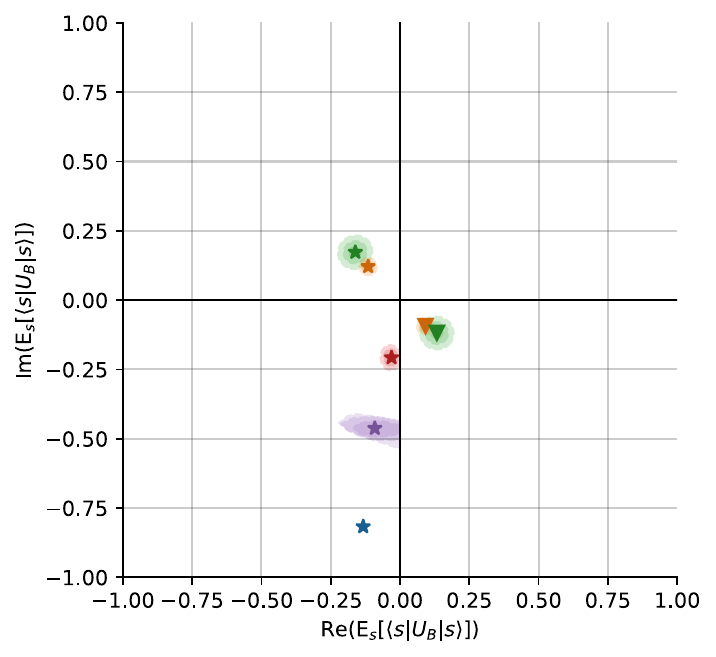}
    \end{tabular}
    \caption{A demonstration of our algorithm using Quantinuum's H2-2 quantum computer. We generated a braid $B$ with 104 crossings using the benchmarking procedure described in Sec.~\ref{sec:benchmark} whose Markov closure is topologically equivalent to the tensor product of one left-handed trefoil knot and three right-handed trefoil knots. The resulting quantum circuits acted on $16$ qubits with $340$ 2-qubit gates. Each value was estimated with 4000 shots. Non-Fibonacci error detection improved the estimate by $\sim 25\%$. There is a substantial phase difference between the unmitigated estimate and the ground truth due to coherent memory errors - this can be mitigated by the conjugate trick, but the shot-level conjugate trick described in App.~\ref{app:mitigation} performs better. Our error mitigation and detection techniques are described in Sec.~\ref{ssec:error_mitigation}.}
    \label{fig:accuracy}
\end{figure*}

\subsection{Error mitigation}
\label{ssec:error_mitigation}

To mitigate the impact of $\epsilon_\mathrm{noise}$, we can exploit the fact that $U_B$ only acts on a small fraction of the total Hilbert space (that is, the Fibonacci subspace, and the state $\ket{0}^{\otimes n}$). Recall that the algorithm consists of sampling a  Fibonacci string $s = s_1, s_2, \dots, s_n$, running the circuit in Fig.~\ref{fig:H-test-CFEV-with-catstate-unitary}(a), and then measuring all the qubits to obtain $x = x_1, x_2, \dots, x_n$. We introduce a function $g_2$ defined as: $g_2 = 1$ if $x_2 = 0$ and $(x_i \oplus s_i) + (x_{i + 1} \oplus s_{i+1}) > 0$ for all $i \geq 3$, and $g_2 = 0$ otherwise. From the definition of $U_{\sigma_i}$, we have that $\bra{x}U_B\ket{y} = 0$ whenever $x \in \mathcal{F}_n$ and $y \notin \mathcal{F}_n$ or vice-versa. Therefore, if $g_2 = 0$, a hardware error \emph{must} have occurred and that sample can be \emph{discarded} and not included in the estimate $R$. This is similar to the method of `symmetry constraint' error mitigation discussed in~\cite{cai2023quantum}. Under reasonable noise assumptions our technique decreases the bias and increases the variance (see App.~\ref{app:mitigation}), potentially reducing the overall error. We call this technique the \emph{non-Fibonacci error detection}. 

Unfortunately, our algorithm is particularly sensitive to coherent phase errors of the form $\e^{-\im\theta Z/2}$ on the second qubit of $U_B$. This is because $U_{\sigma}$ is diagonal in the first qubit, and is also diagonal in its middle qubit if the first qubit is in the $\ket{0}$ state. Since $\ket{\psi_\mathrm{cat}} = \frac{1}{\sqrt{2}}(\ket{s} + \ket{0}^{\otimes n})$ and $s_0 = 0$, then $\ket{\psi_\mathrm{cat}}$ has the first qubit always in the $\ket{0}$ state. This implies that $U_B$ is always diagonal on the second qubit, and hence any coherent phase errors on this qubit will commute with $U_B$ and be combined into one large phase error. Since this qubit acts as the control qubit of the Hadamard test (via $V_\mathrm{cat}$), a phase error of $\theta$ here corresponds to a rotation of the expectation value, so that we estimate $R\e^{\im\theta} \approx \mathbb{E}_s[\e^{\im\theta}\bra{s}U_B\ket{s}]$ instead. 

The relevant source of such errors on Quantinuum's trapped ion platforms is \emph{memory errors}, namely the cumulative effect of errors that happen to qubits while idling. To mitigate this, we can split the algorithm into two phases - first, we estimate $\e^{\im\theta}R$ for braid $B$ using \texttt{cfev}. Then, we estimate $\e^{\im\theta}R^*$ by running the algorithm for braid $B^{*}$ by replacing each $U_{\sigma}$ with $U_{\sigma^{-1}}$. From this, we can calculate
\begin{equation}
    R = \pm\frac{|\e^{\im\theta}R + \e^{\im\theta}R^*|}{2} \pm \im\frac{|\e^{\im\theta}R - \e^{\im\theta}R^*|}{2} ,
\end{equation}
to eliminate the phase error. We call this the \emph{conjugate trick}. Note that the variance of this estimate is half the average variance of $\e^{\im\theta}R$ and $\e^{\im\theta}R^*$, so no additional shots are required. Here, $R$ is determined only up to the sign of each component, but this can be disambiguated under the assumption that $|\theta| < \frac{\pi}{2}$. See App.~\ref{app:mitigation} for more details.
We have assumed that the coherent error $\e^{\im\theta}$ is the same in both cases - we believe this assumption is justified since the structure of the conjugate circuits is identical except for the sign flip of the parameters of every gate. In App.~\ref{app:mitigation}, we discuss how the conjugate trick behaves when this assumption is invalid, and we present a method that can still mitigate the phase error under much weaker assumptions about the form of the coherent errors, but cannot determine the signs of the components and may be biased. We call this the \emph{shot-level conjugate trick}.

Finally, a comment regarding choice of protocol and its efficient compilation is in order. The \texttt{cfev} protocol has two advantages over the vanilla Hadamard test (see App.~\ref{app:qualgorithm} for more details on the variants of the Hadamard test). Firstly, by using echo verification, we have a smaller shot error $\epsilon_\mathrm{shot} \propto N^{-\frac{1}{2}}$, as compared to the vanilla Hadamard test where the shot error is $\epsilon_\mathrm{shot} \propto {(2-R^2)}^{\frac{1}{2}}{N}^{-\frac{1}{2}}$. Secondly, the vanilla Hadamard test requires implementing many \mbox{controlled-$U_{\sigma_i^{\pm 1}}$} operations, and using a similar ansatz to Fig.~\ref{fig:braid-unitary-and-compiled-circuit}(b) these require eight 2-qubit gates each. By comparison, \texttt{cfev} uses only $U_{\sigma_i}$ operations, which require only three 2-qubit gates each, thus reducing the impact of $\epsilon_\mathrm{noise}$ substantially. 
Overall, by switching to the \texttt{cfev} protocol and optimizing the implementation of $U_{\sigma_i}$ for trapped-ion hardware, we reduce the number of 2-qubit gates required by $\approx 15$x and the number of shots required by $\approx 25$\% as compared to the estimates in Ref.~\cite{goktas2019benchmarking}.
This, combined with the error mitigation method mentioned above, may allow much larger instances of the problem to be run on NISQ hardware than was previously anticipated. 

Figure~\ref{fig:accuracy} shows an example of executing our algorithm on Quantinuum's H2-2 quantum computer for a braid with 15 strands and 104 crossings that was randomly generated using the method in Sec.~\ref{sec:benchmark}. Applying both the conjugate trick and non-Fibonacci error detection, and sampling 4000 shots per circuit, we observe a relative error of 75\% (corresponding to $\epsilon_\mathrm{noise} = 0.62$), with a relative standard deviation of 9\% (corresponding to $\epsilon_\mathrm{shot} = 0.02$). The shot-level conjugate trick improves the relative error to 43\% (corresponding to $\epsilon_\mathrm{noise} = 0.36$), but at the cost of increasing the relative standard deviation to 14\% (corresponding to $\epsilon_\mathrm{shot} = 0.06$). In the next section, we will see how this error scales using noisy simulations of a hypothetical NISQ computer.

\subsection{Less quantum, more advantage}

It is believed that P $\subset$ DQC1 $\subset$ BQP, and so DQC1 is `less quantum' than BQP, at least in the worst case.
On the other hand, for a given braid, \emph{classically} estimating the trace of the corresponding exponentially large matrix, which solves the DQC1 version of the problem, is generally more challenging than estimating an amplitude of the same matrix, which is required to solve the BQP version. Thus, the experimental results in the next sections focus on the DQC1 formulation of the problem, where our quantum algorithm can achieve `more advantage', even though our pipeline allows one to search for quantum advantage in the BQP version, as well.

Chen \emph{et al.}~\cite{chen2023complexity} define the NISQ complexity class and prove that BPP$^O$ $\subset$ NISQ$^O$ and NISQ$^O$ $\subset$ BQP$^O$, relative to error-free oracles $O$. Our hardware assumptions are compatible with the NISQ complexity class, but our DQC1 problem does not rely on an error-free oracle. Further work is needed to understand the relation between NISQ and DQC1. A relevant result by Morimae \emph{et al.}~\cite{Morimae2017} is that the one clean qubit model remains hard to simulate classically for suitably small depolarizing noise applied to the clean qubit.

\section{Efficiently verifiable benchmark leveraging topological invariance}
\label{sec:benchmark}

Benchmarks measure the computing performance, such as error-rate and time-to-solution scaling, on certain tasks. An ideal quantum computing benchmark would be based on a task that requires non-trivial quantum resources, resembles tasks of practical relevance, and can be efficiently verified by a classical computer while not being efficiently classically computable. An obvious example would be factoring \cite{Shor_1997}, where classical verification simply amounts to multiplying the factors.
However, it is difficult to implement such an ideal benchmark on near-term hardware. Benchmarks that rely on random circuits from a universal gateset, often do not resemble circuits used in concrete computational problems.
For example, quantum volume~\cite{Cross_2019} is not efficiently classically verifiable and a weakness of mirror benchmarking~\cite{mayer2023theory,Hines_2023} is that classical methods can `cheat' by claiming to have measured the initial state. 
Application-oriented benchmarks~\cite{Lubinski_2023,mills2021application}, are appealing from the practitioner's point of view, but require the exponentially hard classical simulation of circuits for verification. When the benchmark is efficiently verifiable~\cite{magesan2011scalable}, it often relies on non-universal families of quantum circuits, such as Clifford and matchgates, that can be efficiently simulated classically~\cite{Aaronson_2004,Wick1950,Jozsa_2008}. Furthermore, the size of the task used in the benchmark should be under fine control~\cite{blumekohout2020volumetric}. This is needed if we want to study the \emph{scaling} of performance as quantum circuit depth and width are independently varied. Application-oriented benchmarks based on variational algorithms~\cite{Benedetti_2019,mccaskey2019quantum,siddiqui2024stressing} easily decouple circuit width and depth, but, in general, they are affected by an intractable sampling overhead during variational optimisation.

Here, we introduce a benchmark based on APPROX-JONES-MARKOV with the appealing feature that it uses a universal gate set relevant to the knot theory problem while being \emph{efficiently verifiable} by classical means. We aim at characterising the effect of noise on the estimate of $V_{M(B)}(\e^{\im \frac{2\pi}{5}})$, i.e. the scaling of the $\epsilon_\mathrm{noise}$ error with the braid size. Our purpose is to identify the problem sizes that can be solved to a given accuracy and with a given fidelity specification. The links $M(B)$ and $M(B')$ are topologically equivalent iff their braid representatives $B$ and $B'$ can be rewritten into each other via a finite sequence of `Markov' moves (see App.~\ref{app:braid-simplification}, Fig.~\ref{fig:markov-moves}).

\begin{figure}
    \centering
    \input{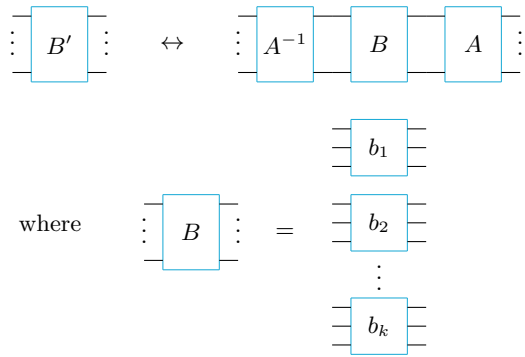}
    \caption{Augmenting a $3k$-strand braid $B=\otimes_{i=1}^k b_i$ by conjugation with a random braid $A$, and obtaining braid $B'$. Braid $B$ is composed of random $3$-strand braids $b_1, \dots, b_k$, so its Jones polynomial can be evaluated classically exactly in negligible time as $V_{M(B)}(\e^{\im \frac{2\pi}{5}}) = \phi^{k-1}\prod_{i=1}^k V_{M(b_i)}(\e^{\im \frac{2\pi}{5}})$. See App.~\ref{app:braid-augmentation} for details. $B$ is generated so that the magnitude of $V_{M(B)}$ is approximately uniformly distributed. The random braid $A$ and its inverse $A^{-1}$ are generated to be inverse to each other but not mirror images.}
    \label{fig:augmentation}
\end{figure}

We start with a braid $B$ constructed by tensoring $k$ $3$-strand braids $b_i$, so we can classically compute $V_{M(B)}(\e^{\im \frac{2\pi}{5}}) = \phi^{k-1} \prod_{i=1}^k V_{M(b_i)}(\e^{\im \frac{2\pi}{5}})$ at negligible cost.
We then augment $B$ by conjugating it with a random braid $A$, making the braid effectively wider and deeper, respectively, as shown in Fig.~\ref{fig:augmentation}. 
This results in a braid $B'=A^{-1} B A$ that by definition shares the same Jones polynomial upon closure, $V_{M(B)}=V_{M(B')}$. Therefore, by measuring the value of $V_{M(B')}$ using our quantum algorithm, and comparing it to $V_{M(B)}$, which is computed efficiently classically, we can measure the error incurred by the quantum computer under test. 

If such a $B'$ is constructed naively, then this is very similar to mirror benchmarking \cite{proctor2022measuring}. However, in App.~\ref{app:braid-augmentation}, we detail a method which generates $A$ and $A^{-1}$ that are inverses to each other, but do not have similar structures. In particular, $A$ is generated randomly and is much longer than $A^{-1}$. It appears close to a random braid in structure, except that the signs of the crossings are picked such that $A^{-1}$ can be constructed with a shallow braid. This results in overall braids $B'$ that appear similar to random brick-wall braids. 

Furthermore, since we wish to characterize $\epsilon_\mathrm{noise}$ rather than $\epsilon_\mathrm{shot}$, which has a known distribution, we pick the braids $b_i$ from a probability distribution that is set up so that the distribution of $|\mathbb{E}_s[\langle s|U_B|s\rangle]|$ is as uniform as possible over the range $[0, 1]$. If the $b_i$ were picked uniformly randomly, then the distribution of $|\mathbb{E}_s[\langle s | U_B | s\rangle]|$ would concentrate exponentially around $0$ as the number of strands increases (since the average value of $|\mathbb{E}_{s_i}[\langle s_i | U_{b_i} | s_i\rangle]|$ is less than one), which would require an exponentially increasing number of shots to ensure that $\epsilon_\mathrm{shot}$ is not the dominant error. 

Therefore, via augmentation, we can obtain benchmarking sets of braids with varying numbers of strands, crossings, and depths. To characterise the scaling of the error with increasing number of crossings for potential NISQ computers, we constructed a dataset of 24k braids of between 10 and 15 strands and 50 to 600 crossings. We classically simulated \texttt{cfev} with non-Fibonacci error detection on these braids and calculated the relative error in the output. We used a large number (10k) of shots per braid, so that this error characterises $\epsilon_\mathrm{noise}$ accurately.

The simulation modelled the native gate-set used in the ansatz in Fig.~\ref{fig:braid-unitary-and-compiled-circuit}(b) with errors inserted as depolarizing channels on 1- and 2-qubit gates with probabilities $\epsilon_\mathrm{1q}$ and $\epsilon_\mathrm{2q}$. We tested both $\epsilon_\mathrm{2q} = 5\cdot 10^{-4}$ and $\epsilon_\mathrm{2q} = 1 \cdot 10^{-4}$ as the dominant error, and set $\epsilon_\mathrm{1q} \approx \epsilon_\mathrm{2q} / 10$. SPAM bit-flip errors were also included with $\epsilon_\mathrm{SPAM} \approx \epsilon_\mathrm{2q}$ per qubit. We did not attempt to model coherent effects such as memory errors since this varies widely by platform and technology. See App.~\ref{app:simulations} for more details.

\begin{figure}[t]   
    \centering
    \includegraphics[width=\linewidth]{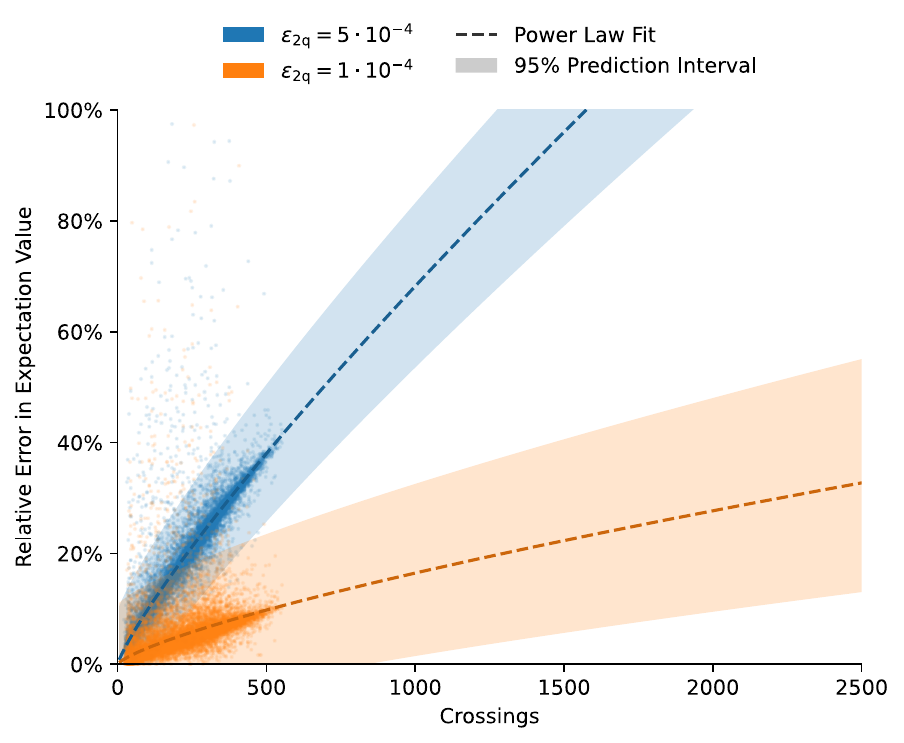}
    \caption{For a benchmark set of 24k randomly generated braids and two different rates of depolarizing 2-qubit errors, this shows how the relative error $|R - \mathbb{E}_s[\bra{s}U_B\ket{s}]| / |\mathbb{E}_s[\bra{s}U_B\ket{s}]|$ between the estimated value (over 10k shots) and the exactly known value depends on the number of crossings (and hence 2-qubit gates).}
    \label{fig:noise-characterisation}
\end{figure}

In Fig.~\ref{fig:noise-characterisation}, we show how the relative error for this benchmark set scaled with the number of crossings, along with a power-law extrapolation to larger numbers of crossings. We observed only weak scaling in terms of number of strands and depth when fixing the number of crossings (Fig.~\ref{fig:unexplained-error}). This aligns with the expectation that errors from a depolarizing noise model ought to depend mainly on the number of gates. These results inform us about the maximum crossing number, or equivalently gate count (specifically 2-qubit gates since they are the dominant noise factor~\cite{Datasheets}), we can reach while remaining within a desired error budget.

These simulations should be broadly representative of NISQ devices with similar error rates, at least to the extent that the depolarizing channel is accurate, or to the extent that the circuit structure sufficiently mixes any structure in the actual error channel. This picture can be ensured by employing randomized compiling techniques, and we note that this should be approximately realized in our experiments on random braids~\cite{mann2017complexity}. There is one notable exception, which is coherent phase effects, for example due to memory errors (which are especially apparent on quantum computers with a relatively slower gate speed, such as trapped-ion devices) or cross-talk between qubits (which are more common on superconducting devices).

For the results of these simulations to be applicable, we assume that the coherent portion of this error is negligible, either by applying the conjugate trick or by mitigating it at a hardware level, and that any remaining coherent errors can be incorporated into an effective 1- and 2-qubit depolarizing error rates. For example, for trapped-ion technology, this can be achieved via dynamical decoupling~\cite{viola1999dynamical}. This is a reasonable assumption as we expect rates of coherent errors to be improved at a hardware level before the 2-qubit gate-error rates considered here are achieved.

Since these results are obtained via simulation, we must \emph{extrapolate} to larger sizes. If one uses an actual quantum backend which is challenging to simulate classically, one would perform an \emph{interpolation} on similarly obtained data. Further, note that this benchmark will take significantly longer to run on an actual device as clock speeds of quantum processors are orders of magnitude lower than those of classical processors, but one may compensate with fewer braids in the benchmark set or fewer shots, in which case it may be beneficial to apply a technique such as deconvolution~\cite{diggle2018deconvolution} to separate the effects of $\epsilon_\mathrm{shot}$ and $\epsilon_\mathrm{noise}$. While we have performed a relatively naive extrapolation by gathering as much data as is feasible given the constraints of classical simulation, our benchmarking pipeline can be used as-is with other extrapolation methods, or alternative data (for instance, combining large-scale classical simulation with a few data points from quantum hardware).

\section{Resource estimates for near-term quantum advantage}
\label{sec:resources}

\begin{figure}[t]
    \centering
    \input{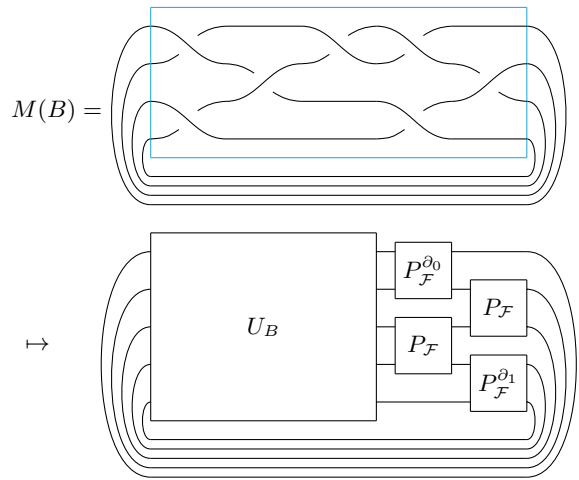}
    \caption{The Markov closure $M(B)$ of a braid $B$, and the corresponding tensor network \texttt{tn-proj} that computes the weighted trace $\sum_{s}p(s)\bra{s}U_B\ket{s}$, with the 2-qubit projectors $P_\mathcal{F}$ inserting the appropriate weights.}
    \label{fig:tn-proj}
\end{figure}

\begin{figure*}[t]
    \centering
    \qquad (a) \qquad\qquad\qquad\qquad\qquad\qquad\qquad\qquad\qquad\qquad\qquad\qquad\qquad (b) \qquad\qquad \\
    \includegraphics[width=0.49\textwidth]{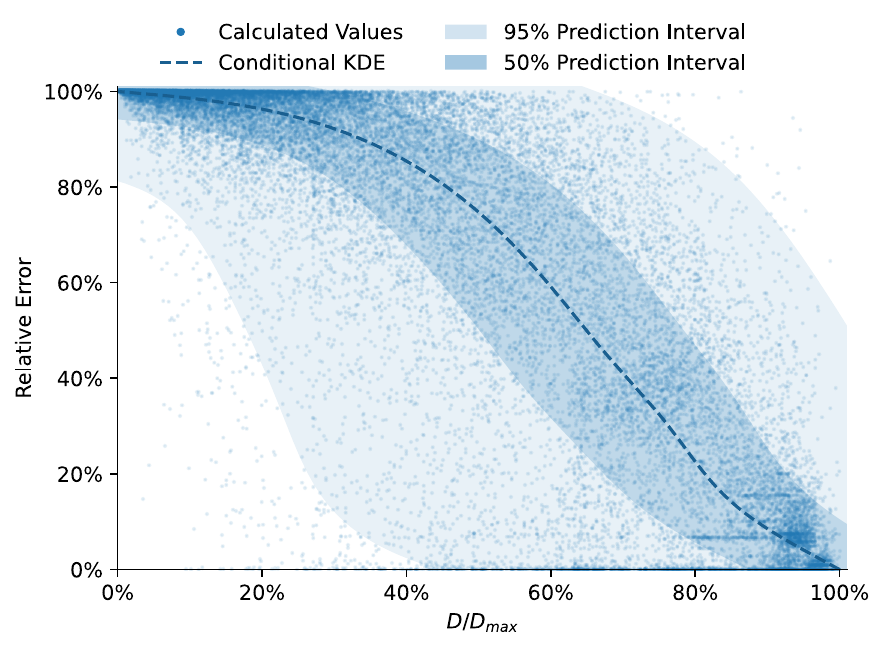}
    \includegraphics[width=0.49\textwidth]{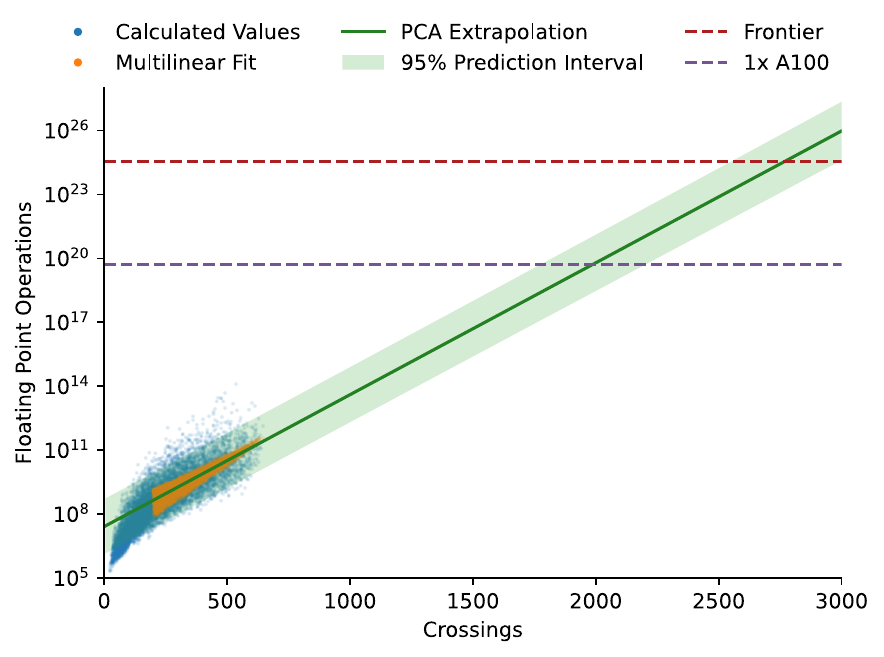}
    \caption{
    Studying the error introduced by compression of the MPO computation through limiting the bond dimension, as well as the scaling of run-time with number of crossings when the bond dimension is large enough that the computation is error-free. The calculated values are from a dataset constructed using the benchmark method in Sec.~\ref{sec:benchmark} and comprising 10k braids of up to 600 crossings and 30 strands.
    (a) A comparison of the relative error in $V_{M(B)}(\e^{\im\frac{2\pi}{5}})$ achieved by the \texttt{mpo-proj} algorithm as it depends on the maximum required memory $D$, normalized to the case $D_\mathrm{max}$ where $\chi$ is set very large (so that the relative error is zero). We can see that as the available memory decreases, the relative error increases. Note that above a threshold of $\sim 80\%$ relative error, $D = 0$ is sufficient with at least $5\%$ confidence, so we consider the cost of executing \texttt{mpo-proj} as effectively zero. The prediction intervals and regression line were obtained via Nadaraya-Watson conditional kernel density estimation~\cite{nadaraya1964estimating}. See Fig.~\ref{fig:flops-derating} for how the amount of computation required depends on the required memory and number of crossings. 
    (b) The number of floating-point operations (FLOPS) required to evaluate $V_{M(B)}(\e^{\im\frac{2\pi}{5}})$ using the \texttt{mpo-proj} method for the same benchmark set. The bond dimension $\chi = \chi_\mathrm{max}$ was allowed to grow as large as needed to avoid any lossy compression. The extrapolation is a multi-linear fit of $\log_{10}(\mathrm{FLOPS})$ against the number of strands $n - 1$, crossings $c$, and depth of the braid for $c \geq 200$. It is extrapolated along the principal component of the dataset to represent possible performance for larger datasets generated in the same way. The lines labelled `Frontier' and `1x A100' represent one month of computation on the respective hardware, for scale. See Fig.~\ref{fig:mem-mpo} for the equivalent plot in terms of memory consumption.}
    \label{fig:chi-and-mpo-crossings}
\end{figure*}

\begin{figure*}[t]
    \centering
    \qquad (a) \qquad\qquad\qquad\qquad\qquad\qquad\qquad\qquad\qquad\qquad\qquad\qquad\qquad (b) \qquad\qquad \\
    \includegraphics[width=0.49\textwidth]{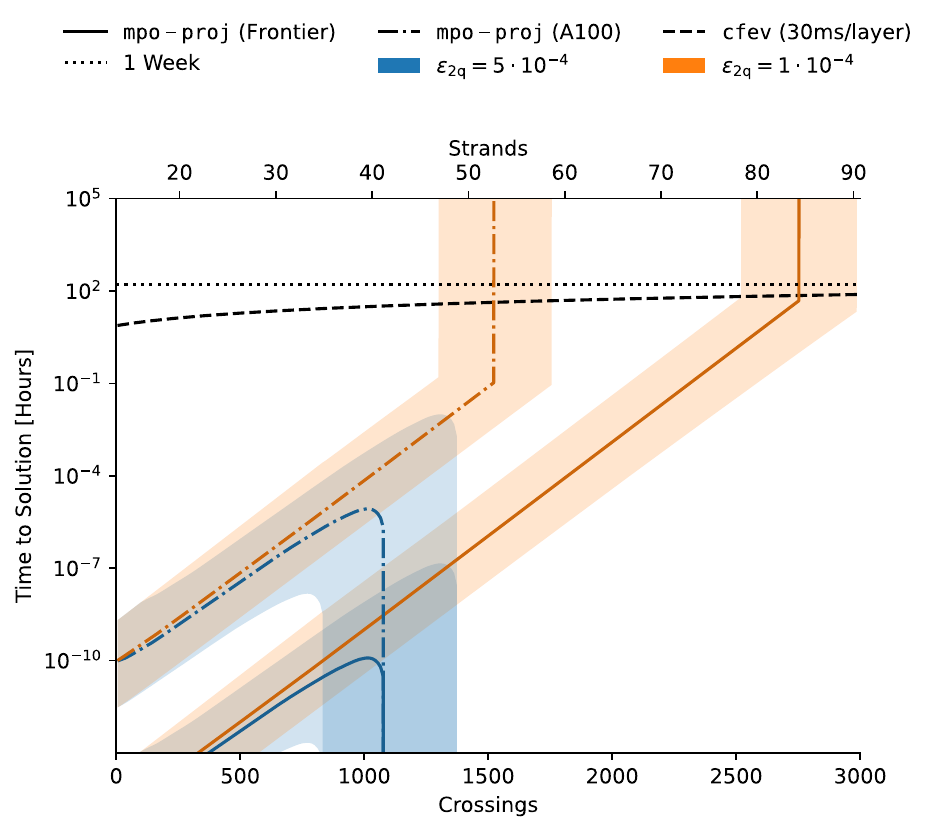}
    \includegraphics[width=0.49\textwidth]{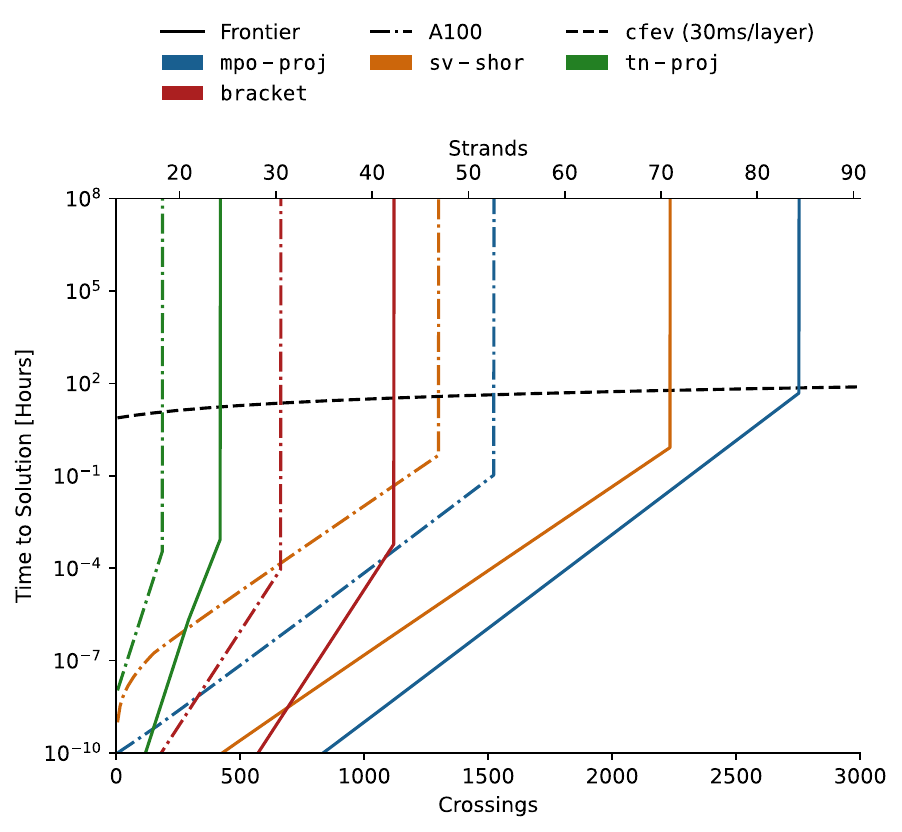}
    \caption{
    Time comparisons for calculating $V_{M(B)}(\e^{\im\frac{2\pi}{5}})$ on the same benchmark set as in Fig.~\ref{fig:chi-and-mpo-crossings}, to the same relative error achieved by \texttt{cfev}, as it depends on the number of crossings $c$ and the 2-qubit error rate $\epsilon_\mathrm{2q}$. The solid lines represent computation on the Frontier cluster, while the dot-dashed lines represent the same computation on a single A100 GPU. The lines vertically downward represent the points where the relative error of \texttt{cfev} exceeds the 80\% threshold given by Fig.~\ref{fig:chi-and-mpo-crossings}, whereas the lines vertically upward represent the points where required memory exceeds the available memory of the classical hardware.
    (a) The time needed for the \texttt{mpo-proj} method, as it depends on the number of crossings $c$ and the 2-qubit error rate $\epsilon_\mathrm{2q}$. We can see that for $\epsilon_\mathrm{2q} = 5 \cdot 10^{-4}$, the relative error of \texttt{cfev} grows large enough that \texttt{mpo-proj} is always faster than \texttt{cfev}. Conversely, for $\epsilon_\mathrm{2q} = 1 \cdot 10^{-4}$, the memory limit is reached before this point. Even if this were not the case, we would expect \texttt{cfev} to be faster than \texttt{mpo-proj} on Frontier when $c \geq 2800$.
    (b) A comparison of the time needed for each classical method when $\epsilon_\mathrm{2q} = 1 \cdot 10^{-4}$, as it depends on the number of crossings $c$. We can see that \texttt{mpo-proj} is the most competitive classical method, although the memory limit of Frontier is reached before it becomes slower than \texttt{cfev}. See Figs.~\ref{fig:time-sv}-\ref{fig:mem-bracket} for plots of the FLOPS, memory consumption, and time-to-solution of each classical method individually.
    }
    \label{fig:mpo-comparison-and-time-to-soln}
\end{figure*}

\begin{table}
    \centering
    \begin{tabular}{p{0.19\linewidth}@{\hskip 4mm}p{0.15\linewidth}p{0.18\linewidth}p{0.15\linewidth}p{0.20\linewidth}}
    \toprule
    Algorithm  & Space & Time & $\epsilon_\mathrm{shot}$ & $1 - \epsilon_\mathrm{noise}$ \\
    \midrule
    \texttt{cfev} & $n$  & $(c + n)N$   & $N^{-1/2}$ & $(1 - \epsilon_\mathrm{2q})^{3 c}$  \\
    \midrule
    \texttt{tn-proj} & $\alpha^c$ & $\beta^c$ & - & -\\
    \texttt{mpo-proj} & $n\chi^2$ & $cn\chi^3$ & - & $n\chi^2 \gamma^{-c}$ \\
    \texttt{sv-shor} & $\phi^n$ & $c \phi^n N$ & $N^{-1}$ & - \\
    \midrule
    \texttt{bracket} & $\mu^c $ & $\nu^c$ & - & - \\
    \bottomrule
    \end{tabular}
    \caption{Approximate scaling in runtime and space of the algorithms we consider as a function of $n$ qubits or equivalently strands, $c$ crossings, $N$ shots, maximum bond dimension $\chi$, and $\epsilon_\mathrm{2q}$ 2-qubit gate error. The parameters $\alpha \approx 1.04$, $\beta \approx 1.05$, $\gamma \approx 1.01$, and $\mu, \nu \approx 1.03$ are representative numbers taken from the extrapolations shown in Fig.~\ref{fig:chi-and-mpo-crossings}(a) and~\ref{fig:mpo-comparison-and-time-to-soln}(b). Note that since $c \sim 10^3$, the difference between $\gamma^c$ and $\beta^c$ is very large (e.g $10^{16}$).}
    \label{tab:complexities-markov}
\end{table}

We now demonstrate how one may use the algorithmic tools introduced above to perform resource estimates for APPROX-JONES-MARKOV, for our quantum algorithm and for classical algorithms that either simulate a quantum circuit or approach the problem in a non-quantum-related way.
Regarding our quantum algorithm, given a braid of certain size, i.e. number of strands $n-1$ and crossing number $c$, with a required additive error $\epsilon$, we can infer the required number of shots $N=\epsilon_\mathrm{shot}^{-2}$ from $\epsilon_\mathrm{shot}=\epsilon-\epsilon_\mathrm{noise}$.
Then, $N$ determines the time-to-solution of the quantum algorithm (see Tab.~\ref{tab:complexities-markov}),
noting that $\epsilon_\mathrm{noise}$ depends on the braid size. 

From this, we estimate the braid \emph{size} for which we might obtain quantum advantage, by analysing the runtime of the most competitive classical algorithm returning the solution to the same error. We note that this method has limitations, and we must make some assumptions -- these are explained in Section \ref{sec:assumptions}. Only if these assumptions are satisfied can we draw concrete conclusions about the existence of a quantum advantage for generic braids. If this is not the case, we can still use the method of Section \ref{sec:benchmark} as a flexible benchmark of noisy quantum hardware.

\subsection{Classical simulations of the quantum algorithm}\label{sec:simulatingquantum}

We can construct a tensor network to exactly evaluate $V_{M(B)}(\e^{\im\frac{2\pi}{5}})$ by representing the weighted trace of Ref.~\cite{shor2008estimating} as the usual trace $\mathrm{tr}(U_B)$ and insert the weights using 2-qubit projectors onto the Fibonacci subspace, $P_\mathcal{F}^{\partial_0}=\ket{01}\bra{01}$,
$P_\mathcal{F}=\ket{01}\bra{01}+\ket{10}\bra{10}+\ket{11}\bra{11}$,
$P_\mathcal{F}^{\partial_1}=\ket{10}\bra{10}+\phi(\ket{01}\bra{01}+\ket{11}\bra{11})$,
as shown in Fig.~\ref{fig:tn-proj} for a five qubit example. From this, we construct three classical algorithms. Firstly, the tensor network can be contracted directly and exactly, we call this method \texttt{tn-proj}. Since it cannot exploit the sparsity of $U_B$ (which is only defined on a small subspace of the total Hilbert space), but only its structure, we would expect this to be very inefficient in general. 

Alternatively, we consider the approximating $\mathrm{tr}(U_B P_\mathcal{F})$ using the Hutch++ stochastic trace estimator~\cite{meyer2021hutch}, which estimates such traces to relative error $\epsilon$ as a linear combination of $O(\epsilon^{-1})$ amplitudes $\bra{\psi}U_BP_\mathcal{F}\ket{\psi}$. Each amplitude is computed using a statevector simulation, which can effectively exploit the sparsity of $U_B$ by storing only the coefficients corresponding to bitstrings in $\mathcal{F}$, thus lowering the complexity of computation from $O(2^n)$ to $O(\phi^n)$. This is effectively a statevector simulation of Shor and Jordan's algorithm, so we call it \texttt{sv-shor}. 

Finally, we consider approximating the contraction of \texttt{tn-proj} by constructing a matrix product operator (MPO) from $U_B P_\mathcal{F}$, from which the trace can be computed in polynomial time with respect to the bond dimension $\chi$ of the MPO by exact contraction. We call this method \texttt{mpo-proj}. By fixing a maximum bond dimension and compressing bonds using the usual singular-value method, we can control the approximation cost. 

The scaling of the $\chi$ required to meet the desired error budget can be characterised empirically using the same benchmark method used to characterise the quantum algorithm (Sec.~\ref{sec:benchmark}). As in \texttt{sv-shor}, this method can take advantage of the sparsity of $U_B$ by discarding singular values which are exactly zero. In practice, we find that the maximum bond dimensions are often Fibonacci numbers, indicating that this is indeed occurring (see Fig.~\ref{fig:maxchi-fib}).

\subsection{Classical algorithms for the Jones polynomial}

Various classical methods to compute the Jones polynomial have been proposed, however, there are comparatively few results on the complexity of these algorithms at scale, either empirically or theoretically. As discussed further in App.~\ref{app:classicalalgos}, we believe the most competitive method is due to Kauffman~\cite[p. 125]{Kauffman2001} which computes the Jones polynomial, via the bracket polynomial, by contracting a tensor network defined over a polynomial ring (this network is distinct from the one considered by \texttt{tn-proj}). We call this method \texttt{bracket}.

\begin{figure}[t]
    \includegraphics[width=0.5\textwidth]{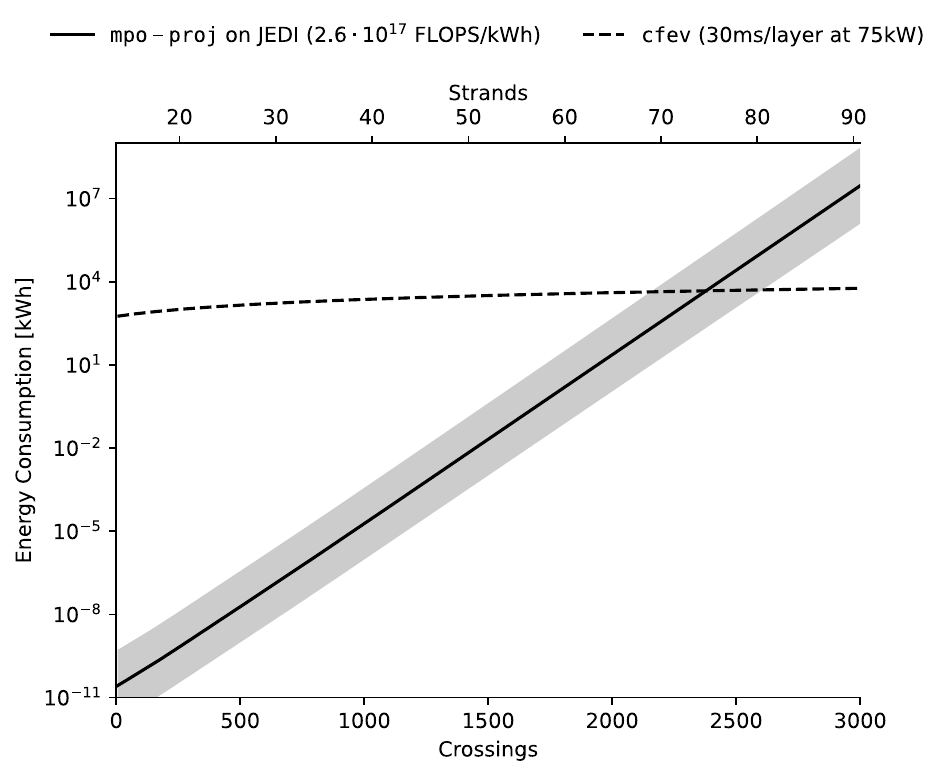}
    \caption{A comparison of the energy consumption between the most competitive classical algorithm \texttt{mpo-proj}, assuming it is running on J\"ulich's JEDI supercomputer \cite{2024jedi} (without considering memory limitations) with an efficiency of $2.6\cdot 10^{17}$ FLOPS/kWh, and our quantum algorithm \texttt{cfev}, assuming the same estimated 75kW power usage as Quantinuum's H2 quantum computer~\cite{decross2024computational}.}
    \label{fig:energy}
\end{figure}

\subsection{Empirical characterization}

To determine the exact scaling of \texttt{tn-proj}, \texttt{sv-shor}, \texttt{mpo-proj}, and \texttt{bracket}, we benchmark them empirically on a dataset of braids generated according to the method in Sec.~\ref{sec:benchmark}. We aim to compare the time required for each classical method to approximate $V_{M(B)}(\e^{\im \frac{2\pi}{5}})$ to the same relative error as \texttt{cfev}. 

We assume two different classical hardware targets: an A100 80GB GPU~\cite{choquette2021a100} to represent typical maximum performance of a single compute node, and Frontier~\cite{atchley2023frontier} representing typical maximum performance available in a large high-performance supercomputer. In both cases, we assume perfect parallelism and utilization, and calculate the time required as the number of 32-bit floating point operations (FLOPS) divided by theoretical maximum FLOPS per second. For the quantum algorithms, we assume the time to solution scales with the depth of the quantum circuit as 30ms per layer (this is similar to existing trapped-ion systems~\cite[Table I]{moses2023h2}), and that the relative error scales as in Fig.~\ref{fig:noise-characterisation}.

The dataset contained 10k braids of up to 600 crossings and between 10 and 30 strands. When constructing large instances, it is not sufficient to scale up the number of crossings or strands individually while keeping the other fixed, since this will lead to braids that are too shallow or too deep. Therefore, in order to extrapolate to larger sizes than we could benchmark in practice, we computed the principal component analysis of the dataset in terms of crossings, depth, and strands, and extrapolated along the principal component, so that crossings and strands increase together.

For \texttt{tn-proj} and \texttt{bracket}, we used Cotengra~\cite{gray2021hyper} to calculate optimized contraction paths and estimate the contraction cost. For \texttt{sv-shor}, we utilize the estimates of distributed statevector computations from~\cite{imamura2022mpiqulacs} to calculate the cost per crossing. We use an optimistic QBF factor of $0.25$ (a parameter related to the FLOPS per statevector entry per gate) so as to handicap \texttt{cfev} as much as possible. The number of amplitudes $N$ to calculate was taken as $\lceil 4\epsilon_\mathrm{noise}^{-1}\rceil$ as in~\cite[Theorem 10]{meyer2021hutch}. 

Finally, for \texttt{mpo-proj}, we used Quimb~\cite{gray2018quimb} to implement MPO evolution. For each braid, we first ran the algorithm without compressing any non-zero singular values, so that we could observe the run-time scaling for an error-free computation. These results are shown in Fig.~\ref{fig:chi-and-mpo-crossings}(b). Then, we ran the same computation with the bond dimension limited to $\chi \leq 2^k$ for $k \in [3, 10]$, to estimate the error resulting from a given level of compression. In Fig.~\ref{fig:chi-and-mpo-crossings}(a), using the known value of the Jones polynomial for each braid, we computed the relative error as it depends on the maximum available memory. Taking the $97.5\%$ lowest quantile of this relationship (that is, the lower bound of a 95\% confidence interval) to handicap \texttt{cfev} as much as possible, we computed the maximum value of $\chi$ needed to achieve the same relative error as \texttt{cfev}, and using the theoretical FLOP counts of each operation~\cite[Table 3.13]{blackford1999lapack} we computed the time to solution across the extrapolated dataset for both $\epsilon_\mathrm{2q} = 1 \cdot 10^{-4}$ and $\epsilon_\mathrm{2q} = 5 \cdot 10^{-4}$.

The results are shown in Fig.~\ref{fig:mpo-comparison-and-time-to-soln}(a). We can see that for $c \geq 1100$, the relative error of the quantum computer for $\epsilon_\mathrm{2q} = 5 \cdot 10^{-4}$ is so high so that the bond dimension required to approximate it (according to Fig.~\ref{fig:chi-and-mpo-crossings}) drops to zero. This happens while the time to solution for \texttt{mpo-proj} is still much smaller than that of \texttt{cfev}, and hence it is likely not possible to achieve quantum advantage with $\epsilon_\mathrm{2q} = 5\cdot 10^{-4}$.

In Fig.~\ref{fig:mpo-comparison-and-time-to-soln}(b), we compare the time-to-solution scaling for all methods when $\epsilon_\mathrm{2q} = 1\cdot 10^{-4}$. We observe that \texttt{mpo-proj} is the most competitive classical method, but that all methods hit the memory limits of the corresponding hardware target before they become slower than the quantum computer. Above these limits, the computation would likely not be possible on that classical hardware. Even if this were not the case, our results suggest that \texttt{mpo-proj} executed on Frontier will likely be slower than \texttt{cfev} when $c \geq 2800$, at which point the relative error of \texttt{cfev} would be $\sim 40\%$.

Furthermore, at large scales, \texttt{cfev} may be more energy efficient. For example, \cite{decross2024computational} pessimistically estimates the energy consumption of Quantinuum's H2 quantum computer at 75kW. By comparison, the most efficient classical computing cluster as of November 2024 is the JEDI system at the J\"ulich Supercomputer Center~\cite{2024jedi}, with an efficiency of $2.6 \cdot 10^{17}$ FLOPS/kWh. Assuming these values, we expect that \texttt{cfev} becomes more energy-efficient than \texttt{mpo-proj} around $c \geq 2400$, as shown in Fig.~\ref{fig:energy}.

\subsection{Limitations}\label{sec:assumptions}

In the previous section, we determined the size of braid for which the \texttt{cfev} method may be faster or more energy-efficient than the competing classical methods, when evaluated on the set of benchmark braids. A priori, there is no reason to conclude that the advantage extends to braids outside of this set. Indeed, the benchmark braid instances are not generic or typical, and they have a special structure by virtue of their construction. 

However, it is reasonable to assume that, 
for generic classical algorithms, generic braids without the same special structure will be at least as hard to evaluate. Under this assumption, the results of the previous section would provide a lower bound on the classical hardness of a generic family of braids. 
In Appendix~\ref{app:limitations}, we show that the \texttt{mpo-proj} method may be able to exploit some of the structure of the benchmark braids. This is encouraging, as it suggests that generic braids may be even harder to solve with the classical methods tested here. In contrast, the runtime of the \texttt{cfev} quantum algorithm does not depend on the family of braids, but only on their size. Thus, given the aforementioned lower bound on the classical hardness of generic braids, the estimated crossover point where we expect a quantum advantage for the benchmark braids is an estimate of the upper bound for where we might see a quantum advantage for generic braids.

It is also important to stress that our conclusions here are based on extrapolations from necessarily limited data (up to $c = 600$). The main obstacle to benchmarking the classical methods (especially \texttt{mpo-proj}) on larger braids was the amount of memory available on a single GPU and the computation time required. This is promising, as it indicates these problems are genuinely hard to solve classically, but conversely it means that if one wanted to be certain that a particular instance of this problem is faster on a quantum computer than the classical equivalent, then more compute and larger-scale datasets would be needed.

On the one hand, generating random braids is reminiscent of generating random circuits to demonstrate quantum computational supremacy. On the other hand, the problem of approximating the Jones polynomial is not only a complete problem for quantum complexity classes but also potentially interesting in knot theory, a rich field of research in low-dimensional topology. See App.~\ref{app:simulations} for further details on the estimates, extrapolation, and benchmark results.

\section{End to End Pipeline}

We have introduced a comprehensive, configurable pipeline enabling the identification of problem sizes exhibiting quantum advantage, which serves as the main contribution of this work, beyond the introduction of the most competitive classical and quantum algorithms for the Jones polynomial.
Our benchmark can characterise the error scaling for both a quantum algorithm running on a noisy quantum computer and classical approximation algorithms running on state-of-the-art classical hardware.

Code for the end-to-end implementation of our quantum algorithm, the efficiently verifiable benchmark, as well as the competitive classical algorithms, can be found in an online repository~\cite{repo}. Given an input braid or link, this implementation can produce quantum circuits to be executed and perform the necessary classical pre- and post-processing. Additional tools for braid simplification and to perform the compilation of the braid generators for new architectures are also included.

Given the most efficient quantum and classical algorithms for solving a problem, the search for a problem instance for which it is advantageous to use a quantum processor over a classical computer is multidimensional.
These dimensions are the error on the estimated quantity, the wall-clock time taken to carry out the computation, the specifications of the quantum computer (2-qubit gate fidelity and depth-1 circuit time), and the analogous specifications of the classical computer (FLOPS per second). Note that achieving a quantum gate fidelity above a certain value may require error detection, error mitigation, or even error correction, which can significantly impact the clock speed of the quantum processor. The specifications of the quantum computer may vary widely over different platforms, but our approach is agnostic and applicable to all. 
To detect quantum advantage, one then fixes a desired error budget and searches for an instance for which time-to-solution on a quantum computer with given specifications is lower than on a state-of-the-art classical computer. Alternatively, given an instance of interest, one may search for the quantum system requirements for achieving a time-to-solution advantage. We have also demonstrated how energy efficiency advantage can be quantified and estimated. The reconfigurability of our pipeline enables the exploration of this multidimensional space. More generally, with our rigorously quantitative and practical approach, we aim to shift the quantum advantage discussion towards \emph{problem instances} rather than problem classes.

Since we have reduced the search for advantage instances, given hardware specifications, to a complex optimisation problem, we envision that the search for such quantum-easiest classically-hardest instances can readily be approached with state-of-the-art AI methods for scientific discovery~\cite{wagner2021constructionscombinatoricsneuralnetworks,Davies2021}, which we leave for future work. For instance, given a particular braid, finding a smaller but topologically equivalent braid that may be easier to solve classically is a difficult task that has received some attention~\cite{bangert2001search}, including by reinforcement learning~\cite{gukov2021learning}.

Finally, we stress that even though we focused on the APPROX-JONES-MARKOV for benchmarking and resource estimation, the problem APPROX-JONES-PLAT can also be approached similarly. The benchmark set of braids can be converted to Plat-closed braids (see Fig.~\ref{fig:markov-plat}(b)), and by construction, their Jones polynomials are invariant under this conversion. Similarly, \texttt{mpo-proj} can be adapted to use matrix product states rather than matrix product operators, and we would expect it to remain the most competitive classical baseline. 

\section{Discussion}

We have provided rigorous resource estimates for the DQC1-complete problem of approximating the Jones polynomial of a Markov-closed braid, up to an exponentially large scale-factor, at the fifth root of unity when solved on a noisy BQP machine, i.e. a digital quantum computer.
We optimally compiled the control-free version of the Hadamard test to reduce the 2-qubit gate count, and used echo verification to reduce the number of shots.
Our extrapolated results suggest that a quantum processor equipped with $\approx 100$ qubits, $\epsilon_\mathrm{2q} \leq 10^{-4}$, and depth-$1$ circuit-times on the order of tens of milliseconds is promising for \emph{demonstrating quantum advantage in terms of time-to-solution}.
The classical algorithms used as baselines include classical simulations of the quantum circuits, which are in general more challenging than they are for the BQP-complete formulation of the problem which regards Plat-closed braids.
Therefore, we have used a `less quantum' problem to gain `more advantage'.

Our problem-specific efficiently verifiable benchmark can be used to characterize the error scaling of a noisy digital quantum computer,
as well as the heuristic performance of approximate classical methods.
Thus, we provide practical tools for locating the advantage boundary given the current state-of-the-art classical methods.
Once a quantum computer with sufficient specifications is available, as the one assumed in this work, a quantum advantage for this problem may be identified.

We also showed that the BQP-complete version of the problem, where the braid is instead Plat-closed, can be solved using the same quantum protocol with a simple modification.
Therefore, our end-to-end pipeline can be used to identify the minimum requirements for quantum advantage for evaluating the Jones polynomial for Plat-closed braids as well.

Future work involves rigorous resource estimates in the fault-tolerant regime, compiling the algorithm to the appropriate gateset and assuming the relevant overheads and realistic methods. This analysis will inevitably be required for obtaining better precision, which depends on the fidelity of the quantum protocol. The current belief is that gate errors below $10^{-4}$ will require more error mitigation and perhaps error correction.
In the near term, memory errors may be addressed via decoherence-free subspace codes, and given that our quantum algorithm operates only in a small subspace of the full Hilbert space, it may be possible to exploit this to decrease the cost of error correction, for example by allowing smaller code distances.

Our algorithm and resource estimation pipeline can also be generalised to other non-lattice roots of unity $\e^{\im 2\pi/k}$, for $k\in\mathbb{N}$, where other unitary representations of the braid group would need to be compiled (see App.~\ref{app:non-fibo-sketch} for a brief sketch)~\cite{Kauffman_2010}.
With the ability to approximate the value of the Jones polynomial at roots of unity, one could distinguish knots and links with high probability, a computational tool that practitioners in the field of knot theory could utilise.
Consequently, our algorithm could be adapted to compute the coloured Jones polynomial \cite{Kauffman2001}.
Our general approach for benchmarking and diagnosing system requirements for quantum advantage in practice also applies to other more powerful invariants such as the Khovanov homology \cite{schmidhuber2025quantumalgorithmkhovanovhomology}.

\vspace{2mm}
\section*{Acknowledgements}
\vspace{-2mm}

We thank David Amaro and Etienne Granet for comments on the manuscript, and Adam Connolly, Natalie Brown, and Harry Buhrman for discussions.
We thank Julia Cline, Joan M. Dreiling, Alex Hall, Aaron Hankin, Azure Hansen, Nathan Hewitt, Chris Langer, Elliot Lehman, Dominic Lucchetti, and the entire operations team for technical support during the implementation on the Quantinuum H2-2 quantum computer.
This research used resources of the National Energy Research Scientific Computing Center, a DOE Office of Science User Facility supported by the Office of Science of the U.S. Department of Energy under Contract No.~{DE-AC02-05CH11231} using NERSC award NERSC~{DDR-ERCAP0029825}.

\onecolumngrid
\appendix

\vspace{20pt}
\begin{center}
{\bf\Large Appendix}    
\end{center}

\section{The control-free Hadamard test with echo-verification}
\label{app:qualgorithm}

In this work, we employ Hadamard tests as key algorithmic primitives.
Such protocols are designed to estimate $\bra{s}U\ket{s}$ for a given unitary $U$ and initial state $\ket{s}$. 
The execution of Hadamard test circuits on NISQ computers is affected by two main sources of errors: shot noise $\epsilon_\mathrm{shot}$, which contributes to the variance of any estimator, and gate noise $\epsilon_\mathrm{noise}$, which contributes to the bias of any estimator. It is therefore important to optimise the estimator and reduce the number of qubits and gates in the circuit. We now briefly summarise \texttt{ht}, and its variant \texttt{cfev} that forms the basis for our algorithm. The reader may refer to Ref.~\cite{Polla_2023} for a fully detailed exposition.

\texttt{ht} encodes the relevant information in the relative phase between the $\ket{0}$ and $\ket{1}$ states of the ancilla qubit by means of the following circuit: 
\begin{center}
    \begin{tikzpicture}
	\begin{pgfonlayer}{nodelayer}
		\node [style=none] (164) at (-3, -1.75) {};
		\node [style=none] (165) at (2.75, -1.75) {};
		\node [style=none] (166) at (-4, -1.75) {\small$| s \rangle$};
		\node [style=none] (167) at (2.75, -1) {};
		\node [style=none] (168) at (2.75, -2.5) {};
		\node [style=none] (169) at (3, -1.25) {};
		\node [style=none] (170) at (3, -2.25) {};
		\node [style=none] (171) at (3.25, -1.5) {};
		\node [style=none] (172) at (3.25, -2) {};
		\node [style=none] (175) at (-4, -0.25) {\small$|+\rangle$};
		\node [style=gate] (176) at (-1.35, -0.25) {$(S^\dagger)^b$};
		\node [style=none] (177) at (-3, -0.25) {};
		\node [style=gate] (178) at (0.25, -1.75) {$U$};
		\node [style=gate] (179) at (1.25, -0.25) {$H$};
		\node [style=gate] (180) at (2.75, -0.25) {$M_Z$};
		\node [style=none] (181) at (3.75, -0.25) {};
		\node [style=cnot ctrl] (182) at (0.25, -0.25) {};
	\end{pgfonlayer}
	\begin{pgfonlayer}{edgelayer}
		\draw (167.center) to (168.center);
		\draw (169.center) to (170.center);
		\draw (171.center) to (172.center);
		\draw (177.center) to (176);
		\draw (164.center) to (178);
		\draw (178) to (165.center);
		\draw [style=double edge] (180) to (181.center);
		\draw (182) to (178);
		\draw (176) to (182);
		\draw (182) to (179);
		\draw (179) to (180);
	\end{pgfonlayer}
\end{tikzpicture}
\end{center}
The relevant information is then extracted by measuring the ancilla and averaging the binary outcomes. By setting $b=0$ one can estimate $\Re(\bra{s} U \ket{s})$, and setting $b=1$ allows one to estimate $\Im(\bra{s} U \ket{s})$.

Now, suppose that an eigenvalue/eigenstate pair $U\ket{\lambda} = \e^{\im \theta} \ket{\lambda}$ is known, and that the initial state is orthogonal to the eigenstate, $\braket{s}{\lambda} = 0$. Then one can \emph{remove the ancilla} qubit in \texttt{ht} and encode the relevant information in the relative phase between $\ket{\lambda}$ and $\ket{s}$. This is achieved using a cat-state preparation circuit, $V_\mathrm{cat}(s)(\ket{+} \otimes \ket{0}^{\otimes {n-1}}) = \frac{1}{\sqrt{2}}(\ket{\lambda} + \ket{s})$. A single qubit rotation gate $R_Z(\theta)$ is used to compensate the phase of the known eigenvalue.
Importantly, the desired unitary, $U$, can be compiled using fewer entangling gates than its controlled version, and so the gate error $\epsilon_\mathrm{noise}$ is smaller. 
Further, an error mitigation trick called \emph{echo verification} reduces the shot noise $\epsilon_\mathrm{shot}$.
Since the state preparation circuit for the initial state, $V_\mathrm{cat}(s)$, is known, at the end of the circuit its inverse, $V_\mathrm{cat}^\dagger(s)$, is applied.
The measurement outcomes are postprocessed by a Boolean function $g$ which checks whether the initial state is recovered or not. This information is then combined with measurements on the ancilla to form an unbiased estimator with reduced variance. This leads to the control-free Hadamard test with echo-verification (\texttt{cfev}) protocol that we use in this work.

Let us now specialise the discussion to our specific use case. Recall that for the estimation of both Markov and Plat closures of braid $B$, we have constructed a unitary $U_B$ with known eigenvector $\ket{0}^{\otimes n}$ and corresponding eigenvalue $\e^{\im \theta_B}$. Moreover, the initial state $\ket{s}$ always encodes a bitstring $s$, so there exist an efficient circuit $V_\mathrm{cat}(s)$ for cat-state preparation. This leads us to \texttt{cfev} circuit shown in Fig.~\ref{fig:H-test-CFEV-with-catstate-unitary}(a), main text. As we show next, this protocol has better scaling in both $\epsilon_\mathrm{shot}$ and $\epsilon_\mathrm{noise}$ errors compared to \texttt{ht}. Moreover, the final measurement reveals whether the main register is in the Fibonacci subspace. This is used for our error detection strategy explained in App.~\ref{app:mitigation}. 

For simplicity of notation we drop subscripts $B$ and $\mathrm{cat}$, and denote the initial state as $\ket{s} = \ket{01s_3s_{4..}}$. Without loss of generality we assume that $\theta_B = 0$, i.e. that the known eigenvalue is $1$. The action of the cat-state preparation circuit $V$ is such that $V\ket{0 x_2 x_{3..}} = \ket{x_2 0 x_{3..}}$ and $V\ket{1x_2x_{3..}} = \ket{x_21(x_{3..} \oplus s_{3..})}$. Note that $V\ket{0}^{\otimes n} = \ket{0}^{\otimes n}$ and $V\ket{10\cdots} = \ket{s}$.
Therefore 
\begin{align*}
    \ket{\psi} = UV (H \otimes I^{\otimes n -1}) \ket{0}^{\otimes n} = \frac{1}{\sqrt{2}}UV(\ket{00\cdots} + \ket{10\cdots}) = \frac{1}{\sqrt{2}}U(\ket{0}^{\otimes n} + \ket{s}) = \frac{1}{\sqrt{2}}(\ket{0}^{\otimes n} + U\ket{s}) 
\end{align*}

At the end of the \texttt{cfev} circuit, the computational basis has the following amplitudes
\begin{align*}
    \bra{\psi}V^\dag(H \otimes I^{\otimes n-1})\ket{x_1x_2x_{3..}} &= \frac{1}{\sqrt{2}}\bra{\psi}V^\dag(\ket{0x_2x_{3..}} + (-1)^{x_1} \ket{1x_2x_{3..}}) \\
    &= \frac{1}{\sqrt{2}}\bra{\psi} (\ket{x_20x_{3..}} + (-1)^{x_1} \ket{x_21(x_{3..} \oplus s_{3..})}) \\
    &= \frac{1}{2}(\braket{00\cdots}{x_2 0 x_{3..}} + \overbrace{(-1)^{x_1} \braket{00\cdots}{x_2 1 (x_{3..} \oplus s_{3..})}}^{=0} \\
    &\quad + \overbrace{\bra{s}U^\dag\ket{x_2 0 x_{3..}}}^{=0} + (-1)^{x_1}\bra{s}U^\dag\ket{x_2 1 (x_{3..} \oplus s_{3..})}) 
\end{align*}
Note that $U$ maps Fibonacci strings starting with $01$ to Fibonacci strings starting with $01$. Thus in the last step the third term is zero because $U\ket{s}$ is a Fibonacci string starting with $01$, while $\ket{x_2 0 x_{3\dots}}$ is not. Using this property of $U$ we can also write
\begin{align}
\label{eq:post_measurement_cases}
    \bra{\psi}V^\dag(H \otimes I^{\otimes n-1})\ket{x_1x_2x_{3..}} = \begin{cases}
    \frac{1}{2}(1 + (-1)^{x_1}\bra{s}U^\dag\ket{s}) & \text{if }x_2 = 0 \text{ and } x_{3..} = 0 \quad\text{(1)} \\
    \frac{1}{2}(-1)^{x_1}\bra{s}U^\dag\ket{x_2 1 (x_{3..} \oplus s_{3..})} & \text{if } x_2 = 0 \text{ and } x_{3..} \oplus s_{3..} \text{ is Fibonacci} \quad\text{(2)} \\
    0 & \text{otherwise} \quad\text{(3)}
\end{cases}
\end{align}

Let us now define two post-processing functions $g_1$ and $g_2$ for the measurement outcomes. If the measurement detects case (1), we set $g_1 = 1$, $g_2 = 1$. If the measurement detects case (2), we set $g_1 = 0$, $g_2 = 1$. Finally if the measurement detects case (3), we set $g_1 = 0$, $g_2 = 0$. The amplitude of case (3) is zero, meaning that this event should never happen (we will utilise this information for error mitigation in App.~\ref{app:mitigation}). Let us define the ternary random variable $r = (-1)^{x_1} \cdot g_1 \in \{-1, 0, 1\}$. From the Born rule, the probabilities of the three outcomes are:
\begin{align*}
    &P(r = +1) = P(x_1 = 0, x_2 = 0, x_{3..} = 0) = \frac{1}{4}|1 + \bra{s}U^\dagger\ket{s}|^2 = \frac{1}{4} + \frac{|\bra{s}U\ket{s}|^2}{4} + \frac{\mathrm{Re}(\bra{s}U\ket{s})}{2}  \\
    &P(r = -1) = P(x_1 = 1, x_2 = 0, x_{3..} = 0) = \frac{1}{4}|1 - \bra{s}U^\dagger\ket{s}|^2 = \frac{1}{4} + \frac{|\bra{s}U\ket{s}|^2}{4} - \frac{\mathrm{Re}(\bra{s}U\ket{s})}{2}  \\
    &P(r = 0) = 1 - P(r = +1) - P(r = -1) = \frac{1}{2} - \frac{|\bra{s}U\ket{s}|^2}{2} 
\end{align*}
The first two raw moments of $r$ are
\begin{align*} 
    &\mathbb{E}[r] = \left[\frac{1}{4} + \frac{|\bra{s}U\ket{s}|^2}{4} + \frac{\mathrm{Re}(\bra{s}U\ket{s})}{2}\right] - \left[\frac{1}{4} + \frac{|\bra{s}U\ket{s}|^2}{4} - \frac{\mathrm{Re}(\bra{s}U\ket{s})}{2}\right] = \mathrm{Re}(\bra{s}U\ket{s})  \\
    &\mathbb{E}[r^2] = \left[\frac{1}{4} + \frac{|\bra{s}U\ket{s}|^2}{4} + \frac{\mathrm{Re}(\bra{s}U\ket{s})}{2}\right] + \left[\frac{1}{4} + \frac{|\bra{s}U\ket{s}|^2}{4} - \frac{\mathrm{Re}(\bra{s}U\ket{s})}{2}\right] = \frac{1}{2} + \frac{|\bra{s}U\ket{s}|^2}{2} 
\end{align*}
Hence the variance is
\begin{align*}
    \mathrm{Var}(r) = \frac{1}{2} + \frac{|\bra{s}U\ket{s}|^2}{2} - \mathrm{Re}(\bra{s}U\ket{s})^2 
\end{align*}

Let us compare to the variance of the \texttt{ht} protocol. Recall that the \texttt{ht} protocol uses a controlled-$U$ operation denoted $CU$ such that $CU\ket{0}\ket{\psi} = \ket{0}\ket{\psi}$ and $CU\ket{1}\ket{\psi} = \ket{1}U\ket{\psi}$. The computational basis measurement for the ancilla has probabilities:
\begin{align*}
    P(x) &= \bra{+}\bra{s}CU^\dagger(H \otimes I^{\otimes n})(\dyad{x}\otimes I^{\otimes n})(H \otimes I^{\otimes n})CU\ket{+}\ket{s} \\
         &= \frac{1}{2}(\bra{0}\bra{s} + \bra{1}\bra{s}U^\dagger)(H \otimes I^{\otimes n})(\dyad{x}\otimes I^{\otimes n})(H \otimes I^{\otimes n})(\ket{0}\ket{s} + \ket{1}U\ket{s}) \\
         &= \frac{1}{4}(\bra{0}\bra{s} + \bra{1}\bra{s} + \bra{0}\bra{s}U^\dagger - \bra{1}\bra{s}U^\dagger)(\dyad{x}\otimes I^{\otimes n})(\ket{0}\ket{s} + \ket{1}\ket{s} + \ket{0}U\ket{s} - \ket{1}U\ket{s}) \\
         &= \frac{1}{4}(|\braket{0}{x}|^2(2 + \bra{s}U^\dagger\ket{s} + \bra{s}U\ket{s}) + |\braket{1}{x}|^2(2 - \bra{s}U^\dagger\ket{s} - \bra{s}U\ket{s})) \\
         &= \frac{1}{2}(1 + (-1)^{x}\mathrm{Re}(\bra{s}U\ket{s})) 
\end{align*}

We now define the binary random variable $r' = (-1)^x$ which is estimated by measuring the ancilla and discarding the rest. We have that:
\begin{align*}
    \mathbb{E}[r'] = \mathrm{Re}(\bra{s}U\ket{s})  \qquad \qquad 
    \mathbb{E}[(r')^2] = 1  \qquad \qquad
    \mathrm{Var}(r') = 1 - \mathrm{Re}(\bra{s}U\ket{s})^2     
\end{align*}
Therefore:
\begin{align*}
    \mathrm{Var}(r) = \mathrm{Var}(r') - \frac{1 - |\bra{s}U\ket{s}|^2}{2} \leq \mathrm{Var}(r') 
\end{align*}
with equality only when $|\bra{s}U\ket{s}|^2 = 1$. An identical result is obtained when estimating the imaginary part (this corresponds to inserting an extra $S^\dagger$ gate into the circuit). So \texttt{cfev} is usually better than \texttt{ht} in terms of variance.

Let's now consider the case of estimating $\bra{s}U\ket{s}$ by combining the estimator of the real part, $r$, with the one for the imaginary part, $r_c$. For \texttt{cfev} we have:
\begin{align*}
    &\mathbb{E}[r + \im r_c] = \bra{s}U\ket{s} \\
    &\mathrm{Var}(r + \im r_c) = \mathrm{Var}(r) + \mathrm{Var}(r_c) = 1 + |\bra{s}U\ket{s}|^2 - \mathrm{Re}(\bra{s}U\ket{s})^2 - \mathrm{Im}(\bra{s}U\ket{s})^2 = 1 
\end{align*}
Remarkably the variance of \texttt{cfev} does not depend on the state $\ket{s}$. This is not the case for \texttt{ht} since
\begin{align*}
    &\mathbb{E}[r' + \im r_c'] = \bra{s}U\ket{s} \\
    &\mathrm{Var}(r' + \im r_c') = \mathrm{Var}(r') + \mathrm{Var}(r_c') = 2 - \mathrm{Re}(\bra{s}U\ket{s})^2 - \mathrm{Im}(\bra{s}U\ket{s})^2 = 2 - |\bra{s}U\ket{s}|^2 
\end{align*} 
so we have
\begin{align*} 
    \mathrm{Var}(r + \im r_c) = \mathrm{Var}(r' + \im r_c') - (1 - |\bra{s}U\ket{s}|^2) \leq \mathrm{Var}(r' + \im r_c') 
\end{align*}
with equality only when $|\bra{s}U\ket{s}|^2 = 1$. \texttt{cfev} can have half the variance of \texttt{ht} when $\bra{s}U\ket{s} = 0$.

Finally, we comment on weighted trace estimation, which we use to estimate the Jones polynomial of Markov closed braids. In this case, the estimator comprises an additional average over different realizations of the initial state $\ket{s}$ each with probability $P(s)$. Let us denote the $s$-dependent random variables as $r_s +\im r_{c,s}$ and $r_s' +\im r_{c,s}'$ for \texttt{cfev} and \texttt{ht}, respectively. We have 
\begin{align*} 
    \mathbb{E}[ r_s + \im r_{c,s}] = \mathbb{E}[ \mathbb{E}[r_s + \im r_{c,s} | s] ] =  \mathbb{E}[\bra{s}U\ket{s}] = \tr[ \sum_s P(s) \dyad{s} U] 
\end{align*}
where we use the law of total expectation. This is the trace of $U$ weighted by $P(s)$. The result is identical for the \texttt{ht} protocol, $\mathbb{E}[ r_s' + \im r_{c,s}']$. Instead, for the variance we have
\begin{align*} 
    \mathrm{Var}(r_s + \im r_{c,s}) &= \mathbb{E}[ \mathrm{Var}(r_s + \im r_{c,s} | s)] + \mathrm{Var}( \mathbb{E}[r_s + \im r_{c,s} | s]) 
    = 1 + \mathbb{E}[|\bra{s}U\ket{s}|^2] - |\mathbb{E}[\bra{s}U\ket{s}]|^2  \\
    \mathrm{Var}(r_s' + \im r_{c,s}') &= \mathbb{E}[ \mathrm{Var}(r_s' + \im r_{c,s}' | s)] + \mathrm{Var}( \mathbb{E}[r_s' + \im r_{c,s}' | s]) 
    = 2 - |\mathbb{E}[\bra{s}U\ket{s}]|^2 
\end{align*}
where we use the law of total variance. Once again we observe that the variance of \texttt{cfev} is upper bounded by that of \texttt{ht}.

\section{Weighted sampling from \texorpdfstring{$\mathcal{F}$}{F}}
\label{app:zeckendorf-algo}

Consider $|s\rangle \in \mathcal{F}_n$ as defined in Eq.~\eqref{eq:fibonacci-basis}. Since $s_0 = 0$, we must have $s_1 = 1$, and that $s_2s_3\dots s_{n-1} \in \mathcal{F}'_{n-2}$, where we define the set $\mathcal{F}'_n$ of Fibonacci strings of length $n$ as
$$\mathcal{F}'_n = \{ s\in \mathbb{B}^n \mid s_i + s_{i+1} > 0\}$$
and indeed we have that $\{ s_2s_3\dots s_{n-1} \mid s \in \mathcal{F} \} = \mathcal{F}'_{n-2}$. Suppose we want to uniformly sample from $\mathcal{F}$ conditioned on the fact that $s_{n-1} = 0$. Then we must have $s_{n-2} = 1$, and the remaining substrings $s_2\dots s_{n-3}$ satisfy $\{ s_2s_3\dots s_{n-3} \mid |s\rangle \in \mathcal{F}, s_{n-1} = 0 \} = \mathcal{F}'_{n-4}$. Likewise if we condition on $s_{n - 1} = 1$, then we have $\{ s_2s_3\dots s_{n-2} \mid |s\rangle \in \mathcal{F}, s_{n-1} = 1 \} = \mathcal{F}'_{n-3}$. In either case, this is reduced to sampling uniformly from $\mathcal{F}'_m$ for some $m$. Zeckendorf's theorem states that there is a bijection between the integers $0 \leq k < f_{n+2}$ and the elements of $\mathcal{F}'_n$, so $|\mathcal{F}'_n| = f_{n+2}$. The bijection can be computed in linear time. Define a sequence $k_i$ with $k_{n - 1} = k$, then:
$$ k_{i - 1} = \begin{cases}
    k_{i} & \text{if } k_{i} < f_{i+2}  \\
    k_{i} - f_{i+2} & \text{if }  k_{i} \geq f_{i+2} 
\end{cases} \qquad \qquad s_{i} = \begin{cases}
    1 & \text{if } k_{i} < f_{i+2} \\ 
    0 & \text{if } k_{i} \geq f_{i+2}
\end{cases}$$
Now consider sampling from $\mathcal{F}$ weighted by $\kappa(s) = \phi^{s_{n-1}}$. Since $|\mathcal{F}'_{n-3}| = f_{n-1}$ and $|\mathcal{F}'_{n-4}| = f_{n-2}$, the probability of obtaining a given $s_{n-1}$ should be:
$$ P(s_{n-1} = 1) = \frac{\phi f_{n-1}}{\phi f_{n-1} + f_{n-2}} \qquad \qquad P(s_{n-1} = 0) = \frac{f_{n-2}}{\phi f_{n-1} + f_{n-2}}$$
Then given $s_{n-1}$, the problem reduces to sampling uniformly from some $\mathcal{F}'_m$, which can be done by picking an integer $0 \leq k < f_{n+2}$ and computing $s_i$ as described above.  The overall method (see Algorithm~\ref{alg:sampling}) is as follows:
\begin{enumerate}
    \item By definition $s_0 = 0$ and $s_1 = 1$. To obtain $s_{n-1}$ sample according to $P(s_{n-1})$.
    \item If $s_{n-1} = 0$ then $s_{n-2} = 1$ by definition. Pick $0 \leq k < f_{n-2}$ uniformly and use the bijection to fill in $s_2\dots s_{n-3}$.
    \item If $s_{n-1} = 1$ then pick $0 \leq k < f_{n-1}$ uniformly and use the bijection to  fill in $s_2\dots s_{n-2}$.
\end{enumerate}
We can see that sampling from $\mathcal{F}$ weighted by $\kappa(s)$ is the same as sampling from the distribution 
$$ P(s) = \begin{cases}
    0 & \text{if } |s\rangle \notin \mathcal{F}\\
    \frac{\phi^{s_{n-1}}}{\phi^{n-1}} & \text{otherwise}
    \end{cases} $$
since the total weight of $\kappa(s)$ over $\mathcal{F}$ is given by $\phi f_{n-1} + f_{n-2} = \phi^{n-1}$.

\begin{center}

\begin{minipage}{0.7\textwidth}
\begin{algorithm}[H]
\caption{A procedure based on Zeckendorf's theorem to efficiently classically sample a bit string $\ket{s}$ from $\mathcal{F}_n$ according to distribution $P(s)$.}\label{alg:sampling}
$s_0 \gets 0$\;
$s_1 \gets 1$\;
$r \gets \text{sample uniformly in } [0, 1]$\;
\eIf{$r \leq \frac{f_{n-1}}{\phi^{n-2}}$}{
    $s_{n - 1} \gets 1$\;
    $j \gets n - 2$\;
    $k \gets$ sample random integer in $[0, f_{n - 1})$\;
}{
    $s_{n - 1} \gets 0$\;
    $s_{n - 2} \gets 1$\;
    $j \gets n - 3$\;
    $k \gets$ sample random integer in $[0, f_{n - 2})$\;
}
\While{$j \geq 2$}{
  \eIf{$k \geq f_j$}{
    $s_j \gets 0$\;
    $k \gets k - f_j$\;
  }{
    $s_j \gets 1$\;
  }
  $j \gets j - 1$\;
}
\Return $\ket{s}$
\end{algorithm}
\end{minipage}

\end{center}

\section{Non-Fibonacci error detection and the conjugate trick}
\label{app:mitigation}

Suppose that $x_1, \dots, x_n$ is a sample from the circuit in Fig.~\ref{fig:H-test-CFEV-with-catstate-unitary}(a) in the main text. Without loss of generality let's assume that $\theta_B = 0$ and $b = 0$. According to Eq.~\eqref{eq:post_measurement_cases} case (3) has zero probability of occurring. That is, it should never be the case that $x_1 \neq 0$ or $(x_i \oplus s_i) + (x_{i+1} \oplus s_{i+1}) = 0$ for some $i \geq 3$. However, due to gate noise in the NISQ computer, we may actually observe case (3) implying that at least one hardware error has occurred. In this case the post-processing function defined after Eq.~\eqref{eq:post_measurement_cases} will take value $g_2=0$ on the erroneous samples. We now argue that under reasonable noise assumptions these samples can be discarded during the estimation of expectation values.

From the definitions of $r, g_1$ and $g_2$ it can be verified that 
$$\mathbb{E}[r] = \mathbb{E}[\mathbb{E}[r\mid g_2]] = P(g_2=1) \mathbb{E}[r | g_2=1]$$
Estimating $r$ from the retained samples ($g_2=1$) is then equivalent to rescaling the original estimator by a factor of $1/P(g_2 = 1)$, without affecting the phase. Certain noise models, such as the depolarizing noise, have the effect of shrinking the magnitude of any expectation value towards zero. Under these noise assumptions we can safely discard any sample for which $g_2 = 0$. This will reduce the bias at the cost of an increase in variance. We expect the same to be at least approximately true for any non-adversarial noise model. This effect can be observed in our experiments by comparing the green and orange markers in Fig.~\ref{fig:accuracy}. We call this technique the \emph{non-Fibonacci error detection}. 

Next we discuss the \emph{conjugate trick}. Suppose we estimated $\mathbb{E}_s[\e^{\im\theta}\bra{s}U_{B}\ket{s}]$ for braid $B$ from samples, obtaining the value $\e^{\im\theta}R$.
For the conjugate braid $B^*$ we have that $U_{B^*} = U_B^*$, since the braid generators satisfy $U_{\sigma^{-1}} = U_\sigma^\dag = U_{\sigma}^*$ due to symmetry. Let us then estimate $\mathbb{E}_s[\e^{\im\theta}\bra{s}U_{B^*}\ket{s}] = \mathbb{E}_s[\e^{\im\theta}\bra{s}U_{B}^*\ket{s}]$ from samples as well, obtaining $\e^{\im\theta}R^*$. We can now recover the value of $R$ up to a choice of signs as
\begin{align*}
    R_{1\cdots 4} = \pm\frac{1}{2}\left|\e^{\im\theta}R + \e^{\im\theta}R^*\right| \pm \frac{\im}{2} \left|\e^{\im\theta}R - \e^{\im\theta}R^*\right|
\end{align*}
where each $R_i$ correspond to a different choice of signs. To disambiguate, we proceed as follows. We can eliminate two of the four options by additionally calculating
\begin{align*}
    R'_{1,2} = \pm |\e^{\im\theta} R| \sqrt{\frac{\e^{\im\theta}R}{\e^{\im\theta}R^*}}
\end{align*}
and discarding the antipodal pair of $R_{1\cdots 4}$ that is furthest from $R'_{1,2}$. The reason for not using $R'_{1, 2}$ initially is that it is not obvious that the variance of $R'_i$ is the same as for $R_i$. The remaining values of $R_i$ are positioned at antipodal points. By selecting the value nearest to $\e^{\im\theta}R$, we ensure the signs are chosen correctly, provided that $|\theta| < \frac{\pi}{2}$.

Applying the conjugate trick relies on the assumption that the overall phase error $\e^{\im\theta}$ is the same for every measurement shot, for both $B$ and $B^*$. This may not be the case in practice -- for example, if each gate accumulated phase error at a fixed rate, then even slight changes to this rate (for instance due to hardware calibration) could cause large changes in the overall phase error $\theta$. Suppose that the phase error is consistent across shots, but for $B^*$ it is instead given by $\e^{\im\theta'}$ with $\theta' \neq \theta$. Then we can see that the value recovered by the conjugate trick is given by:
\begin{align*}
    \tilde{R}_{1\cdots4} &= \pm\frac{1}{2}\left|\e^{\im\theta}R + \e^{\im\theta'}R^*\right| \pm \frac{\im}{2}\left|\e^{\im\theta}R - \e^{\im\theta'}R^*\right| \\
    & = \pm\frac{1}{2}\left|\e^{\im\frac{\theta+\theta'}{2}}\e^{\im\frac{\theta-\theta'}{2}}R + \e^{\im\frac{\theta+\theta'}{2}}\e^{-\im\frac{\theta-\theta'}{2}}R^*\right| \pm \frac{\im}{2} \left|\e^{\im\frac{\theta+\theta'}{2}}\e^{\im\frac{\theta-\theta'}{2}}R - \e^{\im\frac{\theta+\theta'}{2}}\e^{-\im\frac{\theta-\theta'}{2}}R^* \right| \\
    & = \pm\frac{1}{2}\left|\e^{\im\frac{\theta+\theta'}{2}}(\e^{\im\frac{\theta-\theta'}{2}}R) + \e^{\im\frac{\theta+\theta'}{2}}(\e^{\im\frac{\theta-\theta'}{2}}R)^*\right| \pm \frac{\im}{2} \left|\e^{\im\frac{\theta+\theta'}{2}}(\e^{\im\frac{\theta-\theta'}{2}}R) - \e^{\im\frac{\theta+\theta'}{2}}(\e^{\im\frac{\theta-\theta'}{2}}R)^* \right|
\end{align*}
So we can see that this is the same as the case where $\theta = \theta'$, except with $R \to \e^{\im\frac{\theta-\theta'}{2}}R$ and $\theta \to \frac{\theta+\theta'}{2}$. This will be disambiguated to $\tilde{R} = \e^{\im\frac{\theta-\theta'}{2}}R$ if $\left|\frac{\theta + \theta'}{2}\right| < \frac{\pi}{2}$ by the same argument as above. Then, in the absence of finite sampling noise, the relative errors of the estimates with and without the conjugate trick are given by:
\begin{align*}
    \varepsilon_\textrm{with} &= \frac{|R - \tilde{R}|}{|R|} = \left|1 - \e^{\im\frac{\theta - \theta'}{2}}\right| = 2\left|\sin\left(\frac{\theta - \theta'}{4}\right)\right| \\ 
    \varepsilon_\textrm{without} &= \frac{|R -\e^{\im\theta}R|}{|R|} = \left|1 - \e^{\im\theta}\right| = 2\left|\sin\left(\frac{\theta}{2}\right)\right| 
\end{align*}
So applying the conjugate trick will be beneficial compared to the original estimate whenever $|\theta - \theta'| \leq 2|\theta|$, i.e. the relative error between $\theta$ and $\theta'$ must be at most 200\%.

Conversely, in the worst case, suppose that the phase error of each shot is uniformly random, then we have
$$\mathbb{E}[r] = \mathbb{E}[\mathbb{E}[r\mid \theta]] = \mathbb{E}[\e^{\im \theta}]\mathbb{E}[\bra{s}U_B\ket{s}] = 0$$
so the conjugate trick can't recover any information. Therefore, we will now consider the case where the change in phase error is roughly zero within a single measurement shot, but may be arbitrarily different between measurement shots. We will assume that we are estimating the amplitude of a single bitstring $\bra{s}U_B\ket{s}$, and extend this to the case of trace estimation later. We call this the \emph{shot-level conjugate trick}. Consider batching together four different measurements into a single circuit: the real and imaginary parts of $R$, and the real and imaginary parts of $R^*$. We assume that they all have the same phase error because they are within the same circuit (which is not too large). For measurement shot $i$, define the following random variables, corresponding to the measurement outcomes of each of the four parts respectively:
\begin{align*}
    X_i \sim \mathrm{EV}(x, y, \theta_i) \qquad
    Y_i \sim \mathrm{EV}(-y, x, \theta_i) \qquad
    X^*_i \sim \mathrm{EV}(x, -y, \theta_i) \qquad
    Y^*_i \sim \mathrm{EV}(y, x, \theta_i) 
\end{align*}
where $x + \im y = \bra{s}U_B\ket{s}$ and $\theta_i$ is the phase error for this measurement shot. $\mathrm{EV}(x, y, \theta)$ is the distribution of measurement outcomes given by \texttt{cfev} estimating a quantity with real part $x$, imaginary part $y$, and phase error $\theta$, so that
\begin{align*}
    P(X_i = 0) &= \frac{1 - x^2 - y^2}{2} \\
    P(X_i = +1) &= \frac{1 + x^2 + y^2}{4} +  \frac{x\cos(\theta_i) - y\sin(\theta_i)}{2}\\
    P(X_i = -1) &= \frac{1 + x^2 + y^2}{4} -  \frac{x\cos(\theta_i) - y\sin(\theta_i)}{2}
\end{align*}
as computed in App.~\ref{app:qualgorithm}. Then a straightforward but tedious calculation shows that
\begin{align*}
    \mathbb{E}[X_iX_i^* + Y_iY_i^*] = x^2 - y^2 \qquad \qquad\mathbb{E}\left[\frac{X_i^2 + Y_i^2 + (X_i^*)^2 + (Y_i^*)^2}{2} - 1\right] = x^2 + y^2
\end{align*}
which is independent of $\theta_i$, and hence we can recover $x^2$ and $y^2$ as:
\begin{align*}
    X^{EC}_i &= \frac{(X_i + X_i^*)^2 + (Y_i + Y_i^*)^2 - 2}{4} \qquad \qquad \mathbb{E}[X^{EC}_i]= x^2 \\
    Y^{EC}_i &= \frac{(X_i - X_i^*)^2 + (Y_i - Y_i^*)^2 - 2}{4} \qquad \qquad \mathbb{E}[Y^{EC}_i] = y^2
\end{align*}
Then taking $N$ shots, even if each $\theta_i$ is arbitrary, we have that
$$
    \pm\sqrt{\mathbb{E}\left[\frac{1}{N}\sum_{i=1}^NX^{EC}_i\right]} = \pm\sqrt{\frac{1}{N}\sum_{i=1}^N\mathbb{E}[X_i^{EC}]} = \pm\sqrt{\frac{1}{N}\sum_{i=1}^Nx^2} = \pm x
$$
and likewise for $Y^{EC}_i$. Let $M_N = \frac{1}{N}\sum_{i=1}^N X^{EC}_i$, then because $-\frac{1}{2} \leq X^{EC}_i \leq \frac{3}{2}$ we have
$$P(|M_N - x^2| \geq t) \leq 2 \e^{-N\frac{t^2}{2}}$$
from Hoeffding's inequality, and thus:
$$P\left(\left|\sqrt{M_N} - |x|\right| \geq t\right) = P\left(|M_N - x^2| \geq t\left|\sqrt{M_N} + |x|\right|\right) \leq P\left(|M_N - x^2| \geq t|x|\right) \leq 2\e^{-N\frac{t^2x^2}{2}}$$
Note that this inequality can be tightened further \cite{pinelis2023concentration}, but this is enough to imply that the following is a consistent estimator for $x + \im y$ up to the sign of its components:
$$
    R^{EC} = \pm\sqrt{\frac{1}{N}\sum_{i=1}^NX^{EC}_i} \pm \im\sqrt{\frac{1}{N}\sum_{i=1}^NY^{EC}_i}
$$
Since the square root function is not Lipschitz continuous, the variance will depend inversely on the value of $|x + \im y|$, and in any case will be substantially larger than the variance from the regular conjugate trick. 

To adapt this to trace estimation, if we let $x$ and $y$ vary with each shot such that $\mathbb{E}[x] = x'$ and $\mathbb{E}[y] = y'$ then this estimator will be biased, since $\mathbb{E}[x^2] \neq (x')^2$ in general. If the variance of $x$ is small, this may be a reasonable estimator, but to obtain a consistent estimator for the trace, one can instead estimate $R^{EC}$ for each bit string $\ket{s}$ individually and combine these. However, it is important to note that this can only be unbiased (up to a choice of signs) when the signs of the components of each amplitude on the diagonal are all positive or all negative, since the sign information cannot be recovered from the shot-level trick.

We will now prove that this works for estimating the absolute value of the real part of the trace. The imaginary part follows similarly. Let $\{s_j\}$ for $1 \leq j \leq m$ and $s$ be independent identically distributed Fibonacci strings according to $s \sim p(s)$. Then we will estimate $q(s_j) = |\mathrm{Re}(\bra{s_j}U_B\ket{s_j})|$ using the shot-level conjugate trick with $N$ shots, and average over these. Our estimator is thus defined as $Z = \frac{1}{m}\sum_{i=j}^m \sqrt{M_N(s_j)}$ where $M_N(s_j) = \frac{1}{N}\sum_{i = 1}^N X_i^{EC}(s_j)$ is defined above. Furthermore, we can define $B$ as follows, which is the expected bias due to the sign information of each amplitude being unavailable:
$$ B = \Big||\mathrm{Re}(\mathbb{E}[\bra{s}U_B\ket{s}])| - \mathbb{E}[q(s)]\Big|$$
Using the stronger concentration inequality of \cite{pinelis2023concentration}, we bound the conditional probability
$$ P\left(\left.\left|\sqrt{M_N(s_j)} - q(s_j)\right| \geq t~\right|s_j\right) \leq \begin{cases}
    2\e^{-\frac{N}{2}(q(s_j)^2 - (q(s_j) - t)^2)^2} & \text{if } t < q(s_j) \\
    2\e^{-\frac{N}{2}(q(s_j)^2 - (q(s_j) + t)^2)^2} & \text{if } t \geq q(s_j)
\end{cases} \leq 2\e^{-\frac{N}{2}t^4} .$$
Applying the tail-sum formula for expected values, we find:
$$ \mathbb{E}\left[\left.\left|\sqrt{M_N(s_j)} - q(s_j)\right|~\right|s_j\right] = \int_0^{\infty} P\left(\left.\left|\sqrt{M_N(s_j)} - q(s_j)\right| \geq t~\right|s_j\right) ~\mathrm{d}t \leq \int_0^{\infty} 2\e^{-\frac{N}{2}t^4} ~\mathrm{d}t = \frac{2^{\frac{5}{4}}\Gamma\left(\frac{5}{4}\right)}{N^{\frac{1}{4}}} \leq \frac{3}{N^{\frac{1}{4}}} $$
First we will show that $Z$ is a consistent estimator for $\mathbb{E}[q(s)]$:
\begin{align*}
    |\mathbb{E}[Z - \mathbb{E}[q(s)]]|
    &= \left|\mathbb{E}\left[\frac{1}{m}\sum_{j=1}^m \left(\sqrt{M_N(s_j)} - \mathbb{E}[q(s)]\right)\right]\right|\\
    &= \left|\frac{1}{m}\sum_{j=1}^m\mathbb{E}\left[\sqrt{M_N(s_1)} - \mathbb{E}[q(s)]\right]\right| = \left|\mathbb{E}\left[\sqrt{M_N(s_1)} - \mathbb{E}[q(s)]\right]\right|\\
    &=\left|\mathbb{E}\left[\sqrt{M_N(s_1)} - q(s)\right]\right| =\left|\mathbb{E}\left[\sqrt{M_N(s_1)} - q(s_1)\right]\right| \leq \mathbb{E}\left[\left|\sqrt{M_N(s_1)} - q(s_1)\right|\right] \\
    &= \mathbb{E}\left[\mathbb{E}\left[\left.\left|\sqrt{M_N(s_1)} - q(s_1)\right|~\right|s_1\right]\right] \leq \mathbb{E}\left[\frac{3}{N^\frac{1}{4}}\right] = \frac{3}{N^\frac{1}{4}}
\end{align*}
And so from the triangle inequality, we get a bound on the bias of $Z$ as an estimator for $|\mathrm{Re}(\mathbb{E}[\bra{s}U_B\ket{s}])|$:
\begin{align*}
    |\mathrm{Bias}| = |\mathbb{E}[Z - |\mathrm{Re}(\mathbb{E}[\bra{s}U_B\ket{s}])|]| &\leq \left|\mathbb{E}[Z - \mathbb{E}[q(s)]]\right| + B \leq \frac{3}{N^\frac{1}{4}} + B
\end{align*}
Thus, in the situation where $B = 0$, $Z$ is a consistent estimator, with bias that goes like $O(N^{-\frac{1}{4}})$. On the other hand, because each $M_N(s_j)$ is bounded, we would expect that the standard deviation of $Z$ is $O(m^{-\frac{1}{2}})$ from Hoeffding's inequality. This characterises the bias-variance trade-off. Now, imagine that we have $B =0$ and a fixed budget of $k = Nm$ shots, then we can pick $N = k^{\frac{2}{3}}$ and $m = k^{\frac{1}{3}}$ so that the bias and standard deviation both scale as $O(k^{-\frac{1}{6}})$. Note that this is much slower than standard Monte Carlo convergence, but we would expect that it performs much better than this in practice, since the bias of $Z$ will converge asymptotically faster than $O(N^{-\frac{1}{4}})$ whenever $\mathbb{E}_s[|\mathrm{Re}(\bra{s}U_B\ket{s})|]$ is bounded away from zero by a constant.

\section{Compilation Pipeline}

\subsection{Braid augmentation}
\label{app:braid-augmentation}

\begin{figure}[ht]
    \centering
    \input{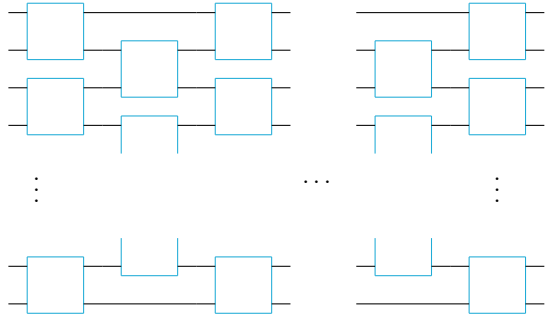}
    \caption{Random braid generated on a brick-wall pattern, where `bricks' are chosen from the set $\{\sigma, \sigma^{-1}, 1_2\}$ containing the braid generators and the 2-strand identity according to a strategy described in App.~\ref{app:braid-augmentation}.}
    \label{fig:random-braid}
\end{figure}

\begin{figure}[t]
    \centering
    \input{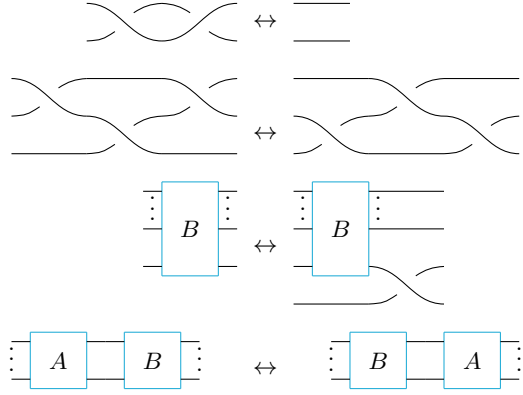}
    \caption{Braid rewrites that relate topologically equivalent braids upon Markov closure. From top to bottom we have the local moves: Reidemeister II or `poke' ($\sigma_{i}\sigma_{i}^{-1}=1_2$), Reidemeister III or `slide' ($\sigma_{i}\sigma_{i+1}\sigma_{i}\leftrightarrow \sigma_{i+1}\sigma_{i}\sigma_{i+1}$), `stabilisation' ($B\leftrightarrow B\sigma_{n+1}$), and the global move `cycle' or 'conjugation' ($A B \leftrightarrow B A $), where $A,B$ are arbitrary $n$-strand braids.}
    \label{fig:markov-moves}
\end{figure}

To generate braids for which the value of the Jones polynomial is known in advance, we take the following approach. Let $B = \otimes_{i=1}^k b_i$ with $3k$ strands be the tensor product of 3-strand braids $b_i$. The Jones polynomial for each $M(b_i)$ can be calculated trivially, and the Jones polynomial of $M(B)$ is given as
$$V_{M(B)}(\e^{\im\frac{2\pi}{5}}) = \phi^{k - 1}\prod_{i = 1}^k V_{M(b_i)}(\e^{\im\frac{2\pi}{5}})$$
so it can be computed efficiently classically. Then, we generate random braids $A$ and $A^{-1}$ such that $AA^{-1} = I$ is the identity braid on $3k$ strands. Let $B' = A^{-1}BA$, then since the Markov closure of a braid is unchanged by conjugation, we have $V_{M(B')}(\e^{\im\frac{2\pi}{5}}) = V_{M(B)}(\e^{\im\frac{2\pi}{5}})$ so this can also be computed efficiently. By varying the number of crossings in $A$ and the number of strands in $B$, the size of $B'$ can be controlled.

First, we describe how to generate $B$. We want the dominant error of \texttt{cfev} when applied to $B'$ to be $\epsilon_\mathrm{noise}$ (which is what we wish to characterize) rather than $\epsilon_\mathrm{shot}$, so we must ensure that the value of $|\mathbb{E}_s[\bra{s}U_B\ket{s}]|$ does not concentrate exponentially around zero. To this end, we first enumerate all 3-strand braids $b_j$ up to a given number of generators (we used four in practice), and calculated $T_j = \log |\mathbb{E}_s[\bra{s}U_{b_j}\ket{s}]|$ for each. Then, suppose that we want to construct braid $B$ on $3k$ strands. 
Each $b_i$ is sampled independently and identically from this set $\{b_j\}$ such that 
$$|\mathbb{E}_s[\bra{s}U_{B}\ket{s}]| = \prod_{i=1}^k |\mathbb{E}_s[\bra{s}U_{b_i}\ket{s}]| = \exp\left[\sum_{i=1}^k T_i\right]$$
is approximately uniformly distributed in $[0, 1]$. Note that it is not possible for this to be exactly uniformly distributed for any particular $k$ because the set $\{T_j\}$ is finite. We construct $D_k$ as 
$$P(D_k = b_i) = \frac{P(D'_k = T_i)}{|\{b_j \mid T_j = T_i\}|}$$
where $D'_k$ is a distribution over the set $\{T_j\}$. Clearly $T_i \sim D'_k$, and $b_i$ is selected uniformly at random from all $b_j$ such that $T_j = T_i$. Thus, the condition that $|\mathbb{E}_s[\bra{s}U_{B}\ket{s}]|$ be uniformly distributed depends only on $D'_k$. Let $t_1, t_2, \dots, t_n$ be the unique values of $T_j$, and define $P(D'_k = t_j) = p_j$. 

Next we use the idea of a moment generating function, which for a random variable $R$ is defined in terms of a free variable $x$ as $\mathbb{E}[\e^{xR}]$. It is defined this way because when expressed as a formal power series in $x$, the coefficient of $x^k$ is $m_k/k!$ where $m_k$ is the $k$-th raw moment of $R$. Let $Z = \log |\mathbb{E}_s[\bra{s}U_{B}\ket{s}]| = \sum_{i=1}^k T_i$ so that $|\mathbb{E}_s[\bra{s}U_{B}\ket{s}]| \sim \e^Z$. The moment generating function of $Z$ can be calculated as:
$$\mathbb{E}[\e^{xZ}] = \mathbb{E}[\e^{x\sum_{i=1}^k T_i}] = \prod_{i=1}^k\mathbb{E}[\e^{xT_i}] = \left(\sum_{j=1}^n p_j \e^{x t_j}\right)^k$$
To enforce that $|\mathbb{E}_s[\bra{s}U_{B}\ket{s}]| \sim \e^Z$ is approximately uniform, we can pick $p_j$ such that the moment generating function of $Z$ matches the desired distribution. In particular, let $U \sim \mathcal{U}([0, 1])$ then we want $\e^Z$ to match $U$, and hence $Z$ to match $\log U$. Since we have
$$\mathbb{E}[\e^{x \log(U)}] = \mathbb{E}[U^x] = \int_0^1 u^x ~\mathrm{du} = \frac{1}{x + 1}$$
then matching this for $Z$ yields:
$$\mathbb{E}[\e^{xZ}] = \left(\sum_{j=1}^n p_j \e^{x t_j}\right)^k \approx \frac{1}{x + 1} \quad \implies \quad \sum_{j=1}^n p_j \e^{x t_j} \approx \frac{1}{\sqrt[k]{x + 1}}$$
By picking a finite set of $x$ values where we want this to be satisfied (we used 100 equally spaced points in $0 \leq x \leq \frac{15}{2}$), this is transformed into a regression problem subject to the constraint that $p_j \geq 0$ and $\sum_j p_j = 1$, which can be solved by any standard constrained convex optimization method.

In theory, any method of generating random braids could be used to construct $A$, while $A^{-1}$ is given by reversing the order and sign of all the crossings in $A$. However, with this method, $A$ and $A^{-1}$ will be structurally very similar, and the overall circuit will be quite similar to mirror benchmarking techniques. Therefore, to generate a greater variety of circuit structures, and thus a more comprehensive benchmark, we use a method based on sorting networks.

First, we generate a random network $A$ of crossings on $n$ strands, not yet committing to whether each crossing is under or over. This is done in a brick-wall layout (see Fig.~\ref{fig:random-braid}) by picking randomly whether each `brick' is a crossing or the identity. The number of layers in the brick-wall can be varied to control the amount of crossings in the final braid. Regardless of the sign of each crossing, this induces the same permutation $\pi$ on the strands themselves, considering the group homomorphism $f$ from the braid group $B_n$ to the symmetric group $S_n$ that maps crossings to transpositions $f : \sigma_i^{\pm1} \to (i~~i+1)$. Now we can generate a similar network $A^{-1}$ which induces the permutation $\pi^{-1}$ via any deterministic method of decomposing a permutation into transpositions of consecutive pairs, for instance a sorting network. We used the odd-even sorting network to produce a brick-wall style network with depth at most $n$. For instance, see the following example with $n = 5$ where $A$ is generated using a $10$-layer brick-wall:
\begin{center}\begin{tikzpicture}
	\begin{pgfonlayer}{nodelayer}
		\node [style=none] (0) at (0, 1) {};
		\node [style=none] (1) at (0, 0) {};
		\node [style=none] (2) at (1, 1) {};
		\node [style=none] (3) at (1, 0) {};
		\node [style=none] (9) at (0, -1) {};
		\node [style=none] (10) at (0, -2) {};
		\node [style=none] (11) at (1, -1) {};
		\node [style=none] (12) at (1, -2) {};
		\node [style=none] (13) at (1, 2) {};
		\node [style=none] (14) at (1, 1) {};
		\node [style=none] (15) at (2, 2) {};
		\node [style=none] (16) at (2, 1) {};
		\node [style=none] (17) at (2, 1) {};
		\node [style=none] (18) at (2, 0) {};
		\node [style=none] (19) at (3, 1) {};
		\node [style=none] (20) at (3, 0) {};
		\node [style=none] (21) at (2, -1) {};
		\node [style=none] (22) at (2, -2) {};
		\node [style=none] (23) at (3, -1) {};
		\node [style=none] (24) at (3, -2) {};
		\node [style=none] (25) at (3, 2) {};
		\node [style=none] (26) at (3, 1) {};
		\node [style=none] (27) at (4, 2) {};
		\node [style=none] (28) at (4, 1) {};
		\node [style=none] (29) at (3, 0) {};
		\node [style=none] (30) at (3, -1) {};
		\node [style=none] (31) at (4, 0) {};
		\node [style=none] (32) at (4, -1) {};
		\node [style=none] (33) at (4, 1) {};
		\node [style=none] (34) at (4, 0) {};
		\node [style=none] (35) at (5, 1) {};
		\node [style=none] (36) at (5, 0) {};
		\node [style=none] (37) at (4, -1) {};
		\node [style=none] (38) at (4, -2) {};
		\node [style=none] (39) at (5, -1) {};
		\node [style=none] (40) at (5, -2) {};
		\node [style=none] (41) at (5, 2) {};
		\node [style=none] (42) at (5, 1) {};
		\node [style=none] (43) at (6, 2) {};
		\node [style=none] (44) at (6, 1) {};
		\node [style=none] (45) at (6, 1) {};
		\node [style=none] (46) at (6, 0) {};
		\node [style=none] (47) at (7, 1) {};
		\node [style=none] (48) at (7, 0) {};
		\node [style=none] (49) at (7, 2) {};
		\node [style=none] (50) at (7, 1) {};
		\node [style=none] (51) at (8, 2) {};
		\node [style=none] (52) at (8, 1) {};
		\node [style=none] (53) at (7, 0) {};
		\node [style=none] (54) at (7, -1) {};
		\node [style=none] (55) at (8, 0) {};
		\node [style=none] (56) at (8, -1) {};
		\node [style=none] (57) at (8, -1) {};
		\node [style=none] (58) at (8, -2) {};
		\node [style=none] (59) at (9, -1) {};
		\node [style=none] (60) at (9, -2) {};
		\node [style=none] (66) at (-1, 2) {};
		\node [style=none] (67) at (9, 0) {};
		\node [style=none] (68) at (9, -1) {};
		\node [style=none] (69) at (9, -2) {};
		\node [style=none] (70) at (-3, 0) {$A$};
		\node [style=none] (71) at (-2, 0) {$=$};
		\node [style=none] (73) at (14, 0) {};
		\node [style=none] (74) at (14, -1) {};
		\node [style=none] (75) at (15, 0) {};
		\node [style=none] (76) at (15, -1) {};
		\node [style=none] (77) at (15, 1) {};
		\node [style=none] (78) at (15, 0) {};
		\node [style=none] (79) at (16, 1) {};
		\node [style=none] (80) at (16, 0) {};
		\node [style=none] (81) at (15, -1) {};
		\node [style=none] (82) at (15, -2) {};
		\node [style=none] (83) at (16, -1) {};
		\node [style=none] (84) at (16, -2) {};
		\node [style=none] (85) at (16, 2) {};
		\node [style=none] (86) at (16, 1) {};
		\node [style=none] (87) at (17, 2) {};
		\node [style=none] (88) at (17, 1) {};
		\node [style=none] (89) at (16, 0) {};
		\node [style=none] (90) at (16, -1) {};
		\node [style=none] (91) at (17, 0) {};
		\node [style=none] (92) at (17, -1) {};
		\node [style=none] (93) at (17, 1) {};
		\node [style=none] (94) at (17, 0) {};
		\node [style=none] (95) at (18, 1) {};
		\node [style=none] (96) at (18, 0) {};
		\node [style=none] (97) at (9, 1) {};
		\node [style=none] (98) at (9, 2) {};
		\node [style=none] (101) at (14, 2) {};
		\node [style=none] (102) at (14, 1) {};
		\node [style=none] (103) at (15, 2) {};
		\node [style=none] (104) at (15, 1) {};
		\node [style=none] (105) at (-1, 0) {};
		\node [style=none] (106) at (-1, -1) {};
		\node [style=none] (107) at (0, 0) {};
		\node [style=none] (108) at (0, -1) {};
		\node [style=none] (109) at (-1, 1) {};
		\node [style=none] (110) at (-1, -2) {};
		\node [style=none] (111) at (18, 2) {};
		\node [style=none] (112) at (18, -1) {};
		\node [style=none] (113) at (18, -2) {};
		\node [style=none] (114) at (14, -2) {};
		\node [style=none] (115) at (11.5, 0) {$A^{-1}$};
		\node [style=none] (116) at (13, 0) {$=$};
		\node [style=none] (117) at (9, -0.75) {\tiny $1$};
		\node [style=none] (118) at (9, -1.75) {\tiny $2$};
		\node [style=none] (119) at (9, 1.25) {\tiny $3$};
		\node [style=none] (120) at (9, 2.25) {\tiny $4$};
		\node [style=none] (121) at (9, 0.25) {\tiny $5$};
		\node [style=none] (122) at (14, -0.75) {\tiny $1$};
		\node [style=none] (123) at (14, -1.75) {\tiny $2$};
		\node [style=none] (124) at (14, 1.25) {\tiny $3$};
		\node [style=none] (125) at (14, 2.25) {\tiny $4$};
		\node [style=none] (126) at (14, 0.25) {\tiny $5$};
		\node [style=none] (127) at (18, 2.25) {\tiny $1$};
		\node [style=none] (128) at (18, 1.25) {\tiny $2$};
		\node [style=none] (129) at (18, 0.25) {\tiny $3$};
		\node [style=none] (130) at (18, -0.75) {\tiny $4$};
		\node [style=none] (131) at (18, -1.75) {\tiny $5$};
		\node [style=none] (132) at (-1, 2.25) {\tiny $1$};
		\node [style=none] (133) at (-1, 1.25) {\tiny $2$};
		\node [style=none] (134) at (-1, 0.25) {\tiny $3$};
		\node [style=none] (135) at (-1, -0.75) {\tiny $4$};
		\node [style=none] (136) at (-1, -1.75) {\tiny $5$};
	\end{pgfonlayer}
	\begin{pgfonlayer}{edgelayer}
		\draw [in=-180, out=0, looseness=0.75] (1.center) to (2.center);
		\draw [in=180, out=0, looseness=0.75] (0.center) to (3.center);
		\draw [in=-180, out=0, looseness=0.75] (10.center) to (11.center);
		\draw [in=180, out=0, looseness=0.75] (9.center) to (12.center);
		\draw [in=-180, out=0, looseness=0.75] (14.center) to (15.center);
		\draw [in=180, out=0, looseness=0.75] (13.center) to (16.center);
		\draw [in=-180, out=0, looseness=0.75] (18.center) to (19.center);
		\draw [in=180, out=0, looseness=0.75] (17.center) to (20.center);
		\draw [in=-180, out=0, looseness=0.75] (22.center) to (23.center);
		\draw [in=180, out=0, looseness=0.75] (21.center) to (24.center);
		\draw [in=-180, out=0, looseness=0.75] (26.center) to (27.center);
		\draw [in=180, out=0, looseness=0.75] (25.center) to (28.center);
		\draw [in=-180, out=0, looseness=0.75] (30.center) to (31.center);
		\draw [in=180, out=0, looseness=0.75] (29.center) to (32.center);
		\draw [in=-180, out=0, looseness=0.75] (34.center) to (35.center);
		\draw [in=180, out=0, looseness=0.75] (33.center) to (36.center);
		\draw [in=-180, out=0, looseness=0.75] (38.center) to (39.center);
		\draw [in=180, out=0, looseness=0.75] (37.center) to (40.center);
		\draw [in=-180, out=0, looseness=0.75] (42.center) to (43.center);
		\draw [in=180, out=0, looseness=0.75] (41.center) to (44.center);
		\draw [in=-180, out=0, looseness=0.75] (46.center) to (47.center);
		\draw [in=180, out=0, looseness=0.75] (45.center) to (48.center);
		\draw [in=-180, out=0, looseness=0.75] (50.center) to (51.center);
		\draw [in=180, out=0, looseness=0.75] (49.center) to (52.center);
		\draw [in=-180, out=0, looseness=0.75] (54.center) to (55.center);
		\draw [in=180, out=0, looseness=0.75] (53.center) to (56.center);
		\draw [in=-180, out=0, looseness=0.75] (58.center) to (59.center);
		\draw [in=180, out=0, looseness=0.75] (57.center) to (60.center);
		\draw (49.center) to (43.center);
		\draw (41.center) to (27.center);
		\draw (25.center) to (15.center);
		\draw (13.center) to (66.center);
		\draw (3.center) to (18.center);
		\draw (11.center) to (21.center);
		\draw (12.center) to (22.center);
		\draw (24.center) to (38.center);
		\draw (39.center) to (54.center);
		\draw (40.center) to (58.center);
		\draw (36.center) to (46.center);
		\draw (55.center) to (67.center);
		\draw (59.center) to (68.center);
		\draw (60.center) to (69.center);
		\draw [in=-180, out=0, looseness=0.75] (74.center) to (75.center);
		\draw [in=180, out=0, looseness=0.75] (73.center) to (76.center);
		\draw [in=-180, out=0, looseness=0.75] (78.center) to (79.center);
		\draw [in=180, out=0, looseness=0.75] (77.center) to (80.center);
		\draw [in=-180, out=0, looseness=0.75] (82.center) to (83.center);
		\draw [in=180, out=0, looseness=0.75] (81.center) to (84.center);
		\draw [in=-180, out=0, looseness=0.75] (86.center) to (87.center);
		\draw [in=180, out=0, looseness=0.75] (85.center) to (88.center);
		\draw [in=-180, out=0, looseness=0.75] (90.center) to (91.center);
		\draw [in=180, out=0, looseness=0.75] (89.center) to (92.center);
		\draw [in=-180, out=0, looseness=0.75] (94.center) to (95.center);
		\draw [in=180, out=0, looseness=0.75] (93.center) to (96.center);
		\draw (52.center) to (97.center);
		\draw (98.center) to (51.center);
		\draw [in=-180, out=0, looseness=0.75] (102.center) to (103.center);
		\draw [in=180, out=0, looseness=0.75] (101.center) to (104.center);
		\draw [in=-180, out=0, looseness=0.75] (106.center) to (107.center);
		\draw [in=180, out=0, looseness=0.75] (105.center) to (108.center);
		\draw (109.center) to (0.center);
		\draw (110.center) to (10.center);
		\draw (103.center) to (85.center);
		\draw (111.center) to (87.center);
		\draw (112.center) to (92.center);
		\draw (113.center) to (84.center);
		\draw (114.center) to (82.center);
	\end{pgfonlayer}
\end{tikzpicture}\end{center}
Here, the small numbers on each strand indicate the induced permutation. Clearly, the concatenation of these two would induce the identity permutation as desired. Because of this structure, $A^{-1}$ will in general be much shallower than $A$ if the number of layers in the brick-wall is much larger than $n$. We need to pick a sign for each crossing so that $AA^{-1}$ is indeed the identity braid on $n$ strands. Note that not all sets of signs work, in general we might end up with some braid from the pure braid group $P_n \cong \ker f$ (that is, braids whose induced permutation is the identity). Thus, we use the following simple method which guarantees success, but could perhaps be improved in the future: we choose at random a total ordering on the strands given by $x \prec y$, and for every crossing $\sigma_i^{\pm 1}$ between strands $i$ and $i + 1$ we pick $\sigma_i^{+1}$ if $\pi'(i + 1) \prec \pi'(i)$ and $\sigma_i^{-1}$ if $\pi'(i) \prec \pi'(i + 1)$, where $\pi'$ is the permutation induced by all the crossings up to this point. Continuing the example above, using the ordering $4 \prec 3 \prec 5 \prec 1 \prec 2$, we obtain
\begin{center}\input{figs/sorting-network-step2.tikz}\end{center}
where the numbers after each layer indicate the induced permutation up to that point $\pi'$. To see that this construction will always generate $A$ and $A^{-1}$ such that $AA^{-1}$ is the identity braid, we can use induction on the strands. Starting from the top-most strand, since it lies either fully atop or beneath every other strand, a series of `slide' and `poke' moves (cf. Fig.~\ref{fig:markov-moves}) can reduce the braid so that no generators act on this strand. This can be repeated for all strands in sequence until the identity is reached.

Lastly, we comment briefly on why this method was chosen - we previously considered two options: either constructing $ABA^{-1}$ as above, but using a family of $n$-strand braids with a known Jones polynomial, in particular the torus knots, or taking a small braid for which the Jones polynomial of the closure is known or can be computed classically, and augmenting it with random Markov moves to form a large braid. In both cases, the value of the Jones polynomial does not scale exponentially with the number of strands, and hence the expectation we would need to measure in \texttt{cfev} would decrease as $O(\phi^{-n})$. Thus $\epsilon_\mathrm{shot}$ would quickly become the dominant error, which is unhelpful for this benchmark as it is intended to measure $\epsilon_\mathrm{noise}$.

\subsection{Braid simplification}
\label{app:braid-simplification}

Given a specific knot or link that you might wish to evaluate with $\texttt{cfev}$ it is useful to obtain a braid whose closure is the given link, called a \emph{braid representative}, which minimizes the number of crossings so as to minimize both the time required to run the algorithm and the $\epsilon_\mathrm{noise}$ error incurred. A braid representative can be efficiently obtained via Vogel's algorithm~\cite{Vogel1990}, but this may have many crossings. Note that identifying a small braid representative is hard in general. For example minimizing the number of strands is hard~\cite{Lee2010}, minimizing the number of crossings is NP-hard~\cite{paterson1991set}, and when minimizing both strands and crossings this is a complete knot invariant and hence even harder~\cite{gittings2004minimum}. Therefore, in our pipeline~\cite{repo} we include a heuristic solution to this problem based on applying peephole optimization to a given braid. Using Berger's algorithm for minimizing 3-strand braids~\cite{berger1994minimum}, we rewrite each 3-strand sub-braid and apply random `slide' and conjugation moves until no further improvement is obtained. This is fast in practice but may lead to larger numbers of crossings than other slower heuristic methods. Anecdotally, we observed that our method is $\sim 20\%$ worse than the physics-based heuristic proposed in Ref.~\cite{bangert2001search} for uniformly random $10$-strand braid words.

\subsection{Braid generator circuits}
\label{app:compilation-params}

For $U_{\sigma}^1$, the parameters for the native 1- and 2-qubit gates in Fig.~\ref{fig:braid-unitary-and-compiled-circuit}(b) are as follows
\begin{align*}
    \alpha_0 &= a_1 & \beta_0 &= 0 & \chi_0 &= -b_1 & \theta_0 &= -\frac{\pi}{10} \\
    \alpha_1 &= -a_1 & \beta_1 &= -b_1 & \chi_1 &= \frac{\pi}{2} & \theta_1 &= 2b_1 + \frac{3\pi}{10} \\
    \alpha_2 &= -a_1 & \beta_2 &= -b_1 - \frac{3\pi}{10} & \chi_2 &= -b_1 & \theta_2 &= \frac{\pi}{5} \\
    \alpha_3 &= a_1 & \beta_3 &= -2b_1 - \frac{3\pi}{10}
\end{align*}
where the angles $a_1$ and $b_1$ are defined by
$$ a_1 = \sin^{-1}\left(\sqrt{\frac{4}{3\phi}}\right) - \pi \qquad \qquad b_1 = \sin^{-1}\left(\phi\sqrt{\frac{3}{8}}\right)$$
and the eigenvalue corresponding to $\ket{000}$ is $\e^{\im \alpha} = \e^{\im \frac{2\pi}{5}}$. Similarly for $U_{\sigma}^2$ we have
\begin{align*}
    \alpha_0 &= \pi - a_2 & \beta_0 &= 0 & \chi_0 &= -b_2 & \theta_0 &= \frac{\pi}{5} \\
    \alpha_1 &= a_2 & \beta_1 &= -b_2 & \chi_1 &= \frac{3\pi}{5} & \theta_1 &= 2b_2 + \frac{3\pi}{5} \\
    \alpha_2 &= -a_2 & \beta_2 &= \frac{2\pi}{5} - b_2 & \chi_2 &= -b_2 & \theta_2 &= -\frac{2\pi}{5} \\
    \alpha_3 &= \pi - a_2 & \beta_3 &= -2b_2 - \frac{3\pi}{5}
\end{align*}
where the angles $a_2$ and $b_2$ are defined by
$$ a_2 = \sin^{-1}\left(\frac{2}{\sqrt{3 + 2\phi}}\right) \qquad \qquad b_2 = \sin^{-1}\left(\frac{\sqrt{3\phi - 1}}{2}\right) - \pi$$
and the eigenvalue corresponding to $\ket{000}$ is $\e^{\im \alpha} = \e^{-\im \frac{2\pi}{5}}$. The parameters for $U_\sigma^k$ with $k \geq 3$ have more complicated closed forms, so we calculated them numerically instead, and the values are listed in an online repository along with the pipeline code~\cite{repo}. Note that these values are not the only possible expression of these unitaries in the form given in Fig.~\ref{fig:braid-unitary-and-compiled-circuit}(b).

\section{Evaluating the Jones for Plat-closed braids}
\label{app:jones-plat}

\begin{theorem} For any braid $B$ with $n - 1$ strands, where $n - 1$ is even,
    $$V_{P(B)}(\e^{\im\frac{2\pi}{5}}) = (-\e^{-\im\frac{3\pi}{5}})^{3w_B}\phi^{\frac{n}{2} - \frac{3}{2}}\bra{\alpha}U_B\ket{\alpha}$$
    where $\ket{\alpha} = \ket{01010\cdots 10}$, $w_B$ is the writhe of $B$ and $P(B)$ is the link corresponding to the Plat closure of $B$.
\end{theorem}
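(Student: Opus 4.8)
\emph{Proof proposal.} The plan is to go through the path-model (Temperley--Lieb) description of the Jones polynomial at $t=\e^{\im 2\pi/5}$, in the style of Aharonov--Jones--Landau, which expresses $V_{P(B)}$ as an explicitly normalized matrix element of a braid-group representation between ``all-cup'' vectors, and then to carry out three identifications: (i) that representation is the Fibonacci gate sequence $U_B$ of Eq.~\eqref{eq:unitary}; (ii) the all-cup vector is $\ket{\alpha}$; (iii) the scalar prefactors combine into the stated ones.

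For (i) I would recall that for a link diagram $D$ with writhe $w(D)$ one has $V_D(t)=(-A^3)^{-w(D)}\langle D\rangle$, where $\langle\,\cdot\,\rangle$ is the Kauffman bracket normalized so that $\langle\text{unknot}\rangle=1$ (hence $\langle\text{empty}\rangle=\delta^{-1}$, loop value $\delta=-A^2-A^{-2}$) and $t=A^4$; for $t=\e^{\im 2\pi/5}$ one may take $A=\e^{\im 3\pi/5}$, giving $\delta=\phi$. Mapping the braid group into $TL_{n-1}(\phi)$ by $\sigma_i\mapsto A\cdot\mathrm{id}+A^{-1}e_i$, the resulting path-model representation $\rho$ has basis the Fibonacci strings $\mathcal{F}_n$ of Eq.~\eqref{eq:fibonacci-basis} (the ``no two consecutive $0$s'' constraint being the Fibonacci fusion rule $\tau\otimes\tau=1\oplus\tau$), and one checks basis state by basis state that it agrees with $U_{\sigma_i}$ of Eq.~\eqref{eq:unitary}: e.g. $\rho(\sigma_i)\ket{010}=(A+A^{-1}\phi)\ket{010}=-A^{-3}\ket{010}=\e^{-\im 4\pi/5}\ket{010}$, the $\tau$-channel states $\ket{011},\ket{110}$ get the eigenvalue $A=\e^{\im 3\pi/5}$, and the $\{\ket{101},\ket{111}\}$ block has determinant $A(-A^{-3})=-A^{-2}=-\e^{\im 4\pi/5}$, matching Eq.~\eqref{eq:unitary}. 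Since $\rho(\sigma_i^{-1})=\rho(\sigma_i)^{\dagger}$ matches $U_{\sigma_i^{-1}}$ as well, we get $\rho(B)=U_B$ on the Fibonacci subspace. For (ii) I would note that the Plat closure caps the $n-1$ endpoints with $k=(n-1)/2$ adjacent cups, and tracking the fusion-tree (total-charge) labels through $k$ adjacent cups forces the unique alternating path $1,\tau,1,\tau,\dots,1$, i.e. the length-$n$ bitstring $0101\cdots10=\alpha$; so the all-cup vector $v$ is a multiple of $\ket{\alpha}$, with $\langle v|v\rangle=\delta^k=\phi^k$ (each cup/cap pair closing one loop), whence $v=\phi^{k/2}\ket{\alpha}$ up to a phase that cancels below.

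Putting the pieces together: writing the Plat closure as (caps)$\circ B\circ$(cups) and evaluating gives $\langle P(B)\rangle=\delta^{-1}\langle v|\rho(B)|v\rangle=\phi^{k-1}\bra{\alpha}U_B\ket{\alpha}$, and $w(P(B))=w_B$ since the closure introduces no crossings, so
\[
 V_{P(B)}(\e^{\im 2\pi/5})=(-A^3)^{-w_B}\,\phi^{k-1}\,\bra{\alpha}U_B\ket{\alpha}.
\]
Now $k-1=\frac{n-1}{2}-1=\frac{n}{2}-\frac{3}{2}$, and with $A=\e^{\im 3\pi/5}$ we have $-A^3=\e^{\im 4\pi/5}$, so $(-A^3)^{-w_B}=\e^{-\im 4\pi w_B/5}=(-\e^{-\im 3\pi/5})^{3w_B}$, exactly the stated prefactor. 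As a sanity check, for $B=1_{n-1}$ the Plat closure is $k$ disjoint unknots, for which $V=\phi^{k-1}=\phi^{n/2-3/2}$, matching the formula with $w_B=0$ and $\bra{\alpha}U_B\ket{\alpha}=1$.

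I expect the main obstacle to be purely a matter of bookkeeping: the normalization of $V$ (unknot $\mapsto 1$), the bracket/loop-value convention, the normalization of the cup and cap states, and the choice of root $A$ all have to be fixed simultaneously and consistently, and one must verify on the nose that the computational basis of the Fibonacci subspace is literally the path basis, so that $\ket{\alpha}$ really is the all-cups path state rather than a superposition of basis vectors. Once those conventions are locked in, matching the power of $\phi$ and the twentieth-root-of-unity phase is a short computation; the only genuinely combinatorial point is the forced-alternating-path argument in (ii).
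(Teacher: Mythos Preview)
Your argument is correct, and the scalar bookkeeping checks out (including the identification $(-A^3)^{-1}=\e^{-\im 4\pi/5}=(-\e^{-\im 3\pi/5})^{3}$ and $\langle v|v\rangle=\phi^k$). However, the route is genuinely different from the paper's.

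The paper does \emph{not} work directly from the Kauffman bracket. Instead it first observes the ambient isotopy $P(B)\cong M(B\cdot E)$, where $E$ is the Temperley--Lieb element consisting of $(n-1)/2$ nested cups-and-caps, and then invokes the already-established Markov-closure formula (Eq.~\eqref{eq:jones}) for the tangle $B\cdot E$. The work is then purely algebraic: writing $U_E$ as a product of three-qubit matrices $M$ (with $M=\e^{\im 3\pi/5}U_{\sigma}+\e^{\im\pi/5}I$), one shows by an inductive check on the bits $s_{2i}$ that $U_E\ket{s}=0$ for every $\ket{s}\in\mathcal{F}_n$ except $\ket{\alpha}$, and $U_E\ket{\alpha}=\phi^{(n-1)/2}\ket{\alpha}$. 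Substituting into the weighted trace collapses the sum over $s$ to the single term $s=\alpha$, and the prefactors combine to $\phi^{n/2-3/2}$.

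The two proofs share the same combinatorial core---that the cup state is supported entirely on $\ket{\alpha}$---but establish it differently: you argue it by the forced-alternating fusion-tree path, the paper by direct matrix computation of $U_E$ on Fibonacci strings. The paper's route is shorter because it reuses Eq.~\eqref{eq:jones} as a black box and avoids all Kauffman-bracket normalization choices; your route is more self-contained (it would stand on its own without the Markov formula) but, as you correctly anticipate, carries more convention-tracking overhead. Your determinant check on the $\{\ket{101},\ket{111}\}$ block is not quite a full verification that $\rho=U_B$ there, but the complete basis-by-basis check is routine and is in fact carried out in the paper's Appendix on general roots of unity.
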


\begin{proof}
    We adapt the proof of~\cite[Theorem 3.2]{aharonov2006polynomial} that uses the path model representation of $B_n$ to the Fibonacci string representation of~\cite{KAUFFMAN_2008} which we use in Sec.~\ref{sec:problem}. For any braid $B$, the following two link diagrams are ambient isotopic and hence have the same Jones polynomial:
    $$\begin{tikzpicture}
	\begin{pgfonlayer}{nodelayer}
		\node [style=none] (0) at (0, 0.5) {};
		\node [style=none] (1) at (0, -3.5) {};
		\node [style=none] (2) at (3, -3.5) {};
		\node [style=none] (3) at (3, 0.5) {};
		\node [style=none] (4) at (1.5, -1.5) {$B$};
		\node [style=none] (5) at (3, 0.25) {};
		\node [style=none] (6) at (3, -0.75) {};
		\node [style=none] (7) at (3, -2.25) {};
		\node [style=none] (8) at (3, -3.25) {};
		\node [style=none] (9) at (3.5, -0.75) {};
		\node [style=none] (10) at (3.5, 0.25) {};
		\node [style=none] (11) at (3.5, -3.25) {};
		\node [style=none] (12) at (3.5, -2.25) {};
		\node [style=none] (13) at (3.5, -1.25) {$\vdots$};
		\node [style=none] (14) at (0, 0.25) {};
		\node [style=none] (15) at (0, -0.75) {};
		\node [style=none] (16) at (0, -2.25) {};
		\node [style=none] (17) at (0, -3.25) {};
		\node [style=none] (18) at (-0.5, -0.75) {};
		\node [style=none] (19) at (-0.5, 0.25) {};
		\node [style=none] (20) at (-0.5, -3.25) {};
		\node [style=none] (21) at (-0.5, -2.25) {};
		\node [style=none] (22) at (-0.5, -1.25) {$\vdots$};
		\node [style=none] (23) at (10.75, 1.75) {};
		\node [style=none] (24) at (10.75, -2.25) {};
		\node [style=none] (25) at (13.75, -2.25) {};
		\node [style=none] (26) at (13.75, 1.75) {};
		\node [style=none] (27) at (12.25, -0.25) {$B$};
		\node [style=none] (28) at (13.75, 1.5) {};
		\node [style=none] (29) at (13.75, 0.5) {};
		\node [style=none] (30) at (13.75, -1) {};
		\node [style=none] (31) at (13.75, -2) {};
		\node [style=none] (32) at (9, 0.5) {};
		\node [style=none] (33) at (9, 1.5) {};
		\node [style=none] (34) at (9, -2) {};
		\node [style=none] (35) at (9, -1) {};
		\node [style=none] (36) at (14, -0.15) {$\vdots$};
		\node [style=none] (37) at (10.75, 1.5) {};
		\node [style=none] (38) at (10.75, 0.5) {};
		\node [style=none] (39) at (10.75, -1) {};
		\node [style=none] (40) at (10.75, -2) {};
		\node [style=none] (41) at (10.25, 0.5) {};
		\node [style=none] (42) at (10.25, 1.5) {};
		\node [style=none] (43) at (10.25, -2) {};
		\node [style=none] (44) at (10.25, -1) {};
		\node [style=none] (45) at (10.25, 0) {$\vdots$};
		\node [style=none] (46) at (9, -2.75) {};
		\node [style=none] (47) at (9, -3.25) {};
		\node [style=none] (48) at (9, -4.25) {};
		\node [style=none] (49) at (9, -4.75) {};
		\node [style=none] (50) at (9, -3.625) {$\vdots$};
		\node [style=none] (51) at (13.75, -2.75) {};
		\node [style=none] (52) at (13.75, -3.25) {};
		\node [style=none] (53) at (13.75, -4.25) {};
		\node [style=none] (54) at (13.75, -4.75) {};
		\node [style=none] (55) at (13.75, -3.625) {$\vdots$};
		\node [style=none] (56) at (9, 0) {$\vdots$};
		\node [style=none] (59) at (5.75, -1.5) {$\leftrightsquigarrow$};
	\end{pgfonlayer}
	\begin{pgfonlayer}{edgelayer}
		\draw (3.center) to (0.center);
		\draw (0.center) to (1.center);
		\draw (1.center) to (2.center);
		\draw (2.center) to (3.center);
		\draw [bend left=90, looseness=1.50] (10.center) to (9.center);
		\draw (9.center) to (6.center);
		\draw (5.center) to (10.center);
		\draw (12.center) to (7.center);
		\draw (8.center) to (11.center);
		\draw [bend right=90, looseness=1.50] (11.center) to (12.center);
		\draw [bend right=90, looseness=1.50] (19.center) to (18.center);
		\draw (18.center) to (15.center);
		\draw (14.center) to (19.center);
		\draw (21.center) to (16.center);
		\draw (17.center) to (20.center);
		\draw [bend left=90, looseness=1.50] (20.center) to (21.center);
		\draw (26.center) to (23.center);
		\draw (23.center) to (24.center);
		\draw (24.center) to (25.center);
		\draw (25.center) to (26.center);
		\draw [bend left=90, looseness=1.50] (33.center) to (32.center);
		\draw [bend right=90, looseness=1.50] (34.center) to (35.center);
		\draw [bend right=90, looseness=1.50] (42.center) to (41.center);
		\draw (41.center) to (38.center);
		\draw (37.center) to (42.center);
		\draw (44.center) to (39.center);
		\draw (40.center) to (43.center);
		\draw [bend left=90, looseness=1.50] (43.center) to (44.center);
		\draw [bend right=90, looseness=1.25] (34.center) to (46.center);
		\draw [bend right=90] (35.center) to (47.center);
		\draw [bend right=90, looseness=0.75] (32.center) to (48.center);
		\draw [bend right=90, looseness=0.75] (33.center) to (49.center);
		\draw (46.center) to (51.center);
		\draw (52.center) to (47.center);
		\draw (48.center) to (53.center);
		\draw (54.center) to (49.center);
		\draw [bend right=90, looseness=1.25] (51.center) to (31.center);
		\draw [bend right=90] (52.center) to (30.center);
		\draw [bend left=90, looseness=0.75] (29.center) to (53.center);
		\draw [bend right=90, looseness=0.75] (54.center) to (28.center);
	\end{pgfonlayer}
\end{tikzpicture}$$
    Note that the left diagram corresponds to the $P(B)$ while the right corresponds to $M(B \cdot E)$ where $B \cdot E$ is a tangle, and $E$ is an element of the Temperley-Lieb algebra:
    $$\begin{tikzpicture}
	\begin{pgfonlayer}{nodelayer}
		\node [style=none] (1) at (1, 1.75) {};
		\node [style=none] (2) at (1, 0.75) {};
		\node [style=none] (3) at (1, -0.75) {};
		\node [style=none] (4) at (1, -1.75) {};
		\node [style=none] (5) at (0.5, 0.75) {};
		\node [style=none] (6) at (0.5, 1.75) {};
		\node [style=none] (7) at (0.5, -1.75) {};
		\node [style=none] (8) at (0.5, -0.75) {};
		\node [style=none] (9) at (0.5, 0.25) {$\vdots$};
		\node [style=none] (10) at (-1.25, 1.75) {};
		\node [style=none] (11) at (-1.25, 0.75) {};
		\node [style=none] (12) at (-1.25, -0.75) {};
		\node [style=none] (13) at (-1.25, -1.75) {};
		\node [style=none] (14) at (-0.75, 0.75) {};
		\node [style=none] (15) at (-0.75, 1.75) {};
		\node [style=none] (16) at (-0.75, -1.75) {};
		\node [style=none] (17) at (-0.75, -0.75) {};
		\node [style=none] (18) at (-0.75, 0.25) {$\vdots$};
		\node [style=none] (19) at (-3.25, 0) {$E$};
		\node [style=none] (20) at (-2.25, 0) {$=$};
		\node [style=none] (21) at (4, 0) {$B \cdot E$};
		\node [style=none] (22) at (5.75, 0) {$=$};
		\node [style=none] (23) at (9, 2) {};
		\node [style=none] (24) at (9, -2) {};
		\node [style=none] (25) at (12, -2) {};
		\node [style=none] (26) at (12, 2) {};
		\node [style=none] (27) at (10.5, 0) {$B$};
		\node [style=none] (28) at (12, 1.75) {};
		\node [style=none] (29) at (12, 0.75) {};
		\node [style=none] (30) at (12, -0.75) {};
		\node [style=none] (31) at (12, -1.75) {};
		\node [style=none] (32) at (7.25, 0.75) {};
		\node [style=none] (33) at (7.25, 1.75) {};
		\node [style=none] (34) at (7.25, -1.75) {};
		\node [style=none] (35) at (7.25, -0.75) {};
		\node [style=none] (36) at (9, 1.75) {};
		\node [style=none] (37) at (9, 0.75) {};
		\node [style=none] (38) at (9, -0.75) {};
		\node [style=none] (39) at (9, -1.75) {};
		\node [style=none] (40) at (8.5, 0.75) {};
		\node [style=none] (41) at (8.5, 1.75) {};
		\node [style=none] (42) at (8.5, -1.75) {};
		\node [style=none] (43) at (8.5, -0.75) {};
		\node [style=none] (44) at (8.5, 0.25) {$\vdots$};
		\node [style=none] (45) at (7.25, 0.25) {$\vdots$};
		\node [style=none] (46) at (12.5, 1.75) {};
		\node [style=none] (47) at (12.5, 0.75) {};
		\node [style=none] (48) at (12.5, -0.75) {};
		\node [style=none] (49) at (12.5, -1.75) {};
		\node [style=none] (50) at (6.75, 1.75) {};
		\node [style=none] (51) at (6.75, 0.75) {};
		\node [style=none] (52) at (6.75, -0.75) {};
		\node [style=none] (53) at (6.75, -1.75) {};
	\end{pgfonlayer}
	\begin{pgfonlayer}{edgelayer}
		\draw [bend right=90, looseness=1.50] (6.center) to (5.center);
		\draw (5.center) to (2.center);
		\draw (1.center) to (6.center);
		\draw (8.center) to (3.center);
		\draw (4.center) to (7.center);
		\draw [bend left=90, looseness=1.50] (7.center) to (8.center);
		\draw [bend left=90, looseness=1.50] (15.center) to (14.center);
		\draw (14.center) to (11.center);
		\draw (10.center) to (15.center);
		\draw (17.center) to (12.center);
		\draw (13.center) to (16.center);
		\draw [bend right=90, looseness=1.50] (16.center) to (17.center);
		\draw (26.center) to (23.center);
		\draw (23.center) to (24.center);
		\draw (24.center) to (25.center);
		\draw (25.center) to (26.center);
		\draw [bend left=90, looseness=1.50] (33.center) to (32.center);
		\draw [bend right=90, looseness=1.50] (34.center) to (35.center);
		\draw [bend right=90, looseness=1.50] (41.center) to (40.center);
		\draw (40.center) to (37.center);
		\draw (36.center) to (41.center);
		\draw (43.center) to (38.center);
		\draw (39.center) to (42.center);
		\draw [bend left=90, looseness=1.50] (42.center) to (43.center);
		\draw (51.center) to (32.center);
		\draw (33.center) to (50.center);
		\draw (52.center) to (35.center);
		\draw (34.center) to (53.center);
		\draw (31.center) to (49.center);
		\draw (48.center) to (30.center);
		\draw (29.center) to (47.center);
		\draw (46.center) to (28.center);
	\end{pgfonlayer}
\end{tikzpicture}$$
    Therefore, we have that $V_{P(B)}(x) = V_{M(B \cdot E)}(x)$. In Sec.~\ref{sec:problem} we give an expression for $V_{M(B')}(\e^{\im\frac{2\pi}{5}})$ for any braid $B'$ in terms of a unitary representation $U_{B'}$, which was originally proven by Kauffman and Lomonaco~\cite[Theorem 2]{KAUFFMAN_2008} (this is equivalent to~\cite[Lemma 2.2]{aharonov2006polynomial} in the path model representation). They show that this also holds when $B'$ is any tangle (i.e. it contains Temperley-Lieb generators), although $U_{B'}$ may be non-unitary, and give the representation in this case. Specifically, if $B$ has $n-1$ strands, we have that
    $$U_{B\cdot E} = U_B U_E \qquad\qquad U_E = \prod_{i = 0}^{\frac{n-1}{2} - 1} I_{2i} \otimes M \otimes I_{n-1-2(i+1)} \qquad\qquad M = \begin{pmatrix} 
        1 &   &      &   &   &   &   &   \\ 
          & 1 &      &   &   &   &   &   \\ 
          &   & \phi &   &   &   &   &   \\ 
          &   &      & 0 &   &   &   &   \\ 
          &   &      &   & 1 &   &   &   \\ 
          &   &      &   &   & a &   & b \\ 
          &   &      &   &   &   & 0 &  \\ 
          &   &      &   &   & b &   & 1 \\ 
    \end{pmatrix}$$
    where $a = \frac{1}{\phi}$, $b = \sqrt{1 - \frac{1}{\phi^2}}$, and $I_{k}$ is the identity matrix of dimension $2^k \times 2^k$ (note that $M = \e^{\im\frac{3\pi}{5}}U_{\sigma_i} + \e^{\im\frac{\pi}{5}}I_3$).

    We will now show that $U_E\ket{s} = 0$ for all $\ket{s} \in \mathcal{F}$ except for $\ket{\alpha}$. Suppose that $U_E\ket{s} \neq 0$. Since the terms $I_{2i} \otimes M \otimes I_{n-1-2(i+1)}$ and $I_{2j} \otimes M \otimes I_{n-1-2(j+1)}$ commute for any $i$ and $j$, we must have that $(I_{2i} \otimes M \otimes I_{n-1-2(i+1)})\ket{s} \neq 0$ for all $0 \leq i < \frac{n-1}{2}$. If $s_{2i} = 0$, then $s_{2i+1} = 1$ by the definition of $\mathcal{F}$. In order for $(I_{2i} \otimes M \otimes I_{n-1-2(i+1)})\ket{s} \neq 0$, we must have $s_{2(i+1)} = 0$ since
    \begin{align*}
        (I_{2i} \otimes M \otimes I_{n-1-2(i+1)})\ket{s} &= \ket{s_0\dots s_{2i-1}} \otimes M\ket{s_{2i}s_{2i+1}s_{2(i+1)}}\otimes \ket{s_{2i+3}\dots s_{n-2}} \\
        &= \ket{s_0\dots s_{2i-1}} \otimes M\ket{01s_{2(i+1)}}\otimes \ket{s_{2i+3}\dots s_{n-2}}
    \end{align*}
    but $M\ket{011} = 0$ and $M\ket{010} = \phi\ket{010} \neq 0$. From the definition of $\mathcal{F}$, we have $s_0 = 0$ and $s_1 = 1$, so by induction we must have $s_{2i} = 0$ and $s_{2i+1} = 1$ for all $0 \leq i < \frac{n-1}{2}$, and thus $\ket{s} = \ket{\alpha}$. In this case, $(I_{2i} \otimes M \otimes I_{n-1-2(i+1)})\ket{\alpha} = \phi\ket{\alpha}$, so $U_E\ket{\alpha} = \phi^{\frac{n-1}{2}}\ket{\alpha}$. Therefore, we have that:
    \begin{align*}
        V_{P(B)}(\e^{\im\frac{2\pi}{5}}) &= V_{M(B\cdot E)}(\e^{\im\frac{2\pi}{5}}) = (-\e^{-\im\frac{3\pi}{5}})^{3w_B}\phi^{n-2}\mathbb{E}_s[\bra{s}U_{B\cdot E}\ket{s}] \\
        &= (-\e^{-\im\frac{3\pi}{5}})^{3w_B}\phi^{n-2}\sum_{s \in \mathcal{F}} \frac{\phi^{s_{n-1}}}{\phi^{n-1}}\bra{s}U_{B\cdot E}\ket{s} = (-\e^{-\im\frac{3\pi}{5}})^{3w_B}\phi^{-1}\sum_{s \in \mathcal{F}} \phi^{s_{n-1}}\bra{s}U_{B}U_E\ket{s} \\
        &= (-\e^{-\im\frac{3\pi}{5}})^{3w_B}\phi^{\frac{n}{2} - \frac{3}{2}}\bra{\alpha}U_{B}\ket{\alpha} \qedhere
    \end{align*}
\end{proof}

The quantum protocol presented in Sec.~\ref{sec:problem} can be used to solve the BQP-complete problem of additively approximating the Jones polynomial of Plat-closed braids at the fifth root of unity.
For every circuit execution, we fix $\ket{s}=\ket{01010\cdots 10}$ and estimate the desired quantity as $V_{P(B)}(\e^{\im\frac{2\pi}{5}}) = (-\e^{-\im\frac{3\pi}{5}})^{3w_B}\phi^{\frac{n}{2} - \frac{3}{2}}\bra{\alpha}U_B\ket{\alpha}$.

We will briefly comment on why this definition appears different to that in the Aharonov-Jones-Landau (AJL) algorithm. In the AJL algorithm, their computation of the Jones polynomial for the Plat closure has the following form \cite[Theorem 3.2]{aharonov2006polynomial} (where $n$ in their notation is $n - 1$ in our notation, since we count qubits rather than strands):
$$ V_{P(B)}(\e^{\im\frac{2\pi}{k}}) = (-A)^{3w_B}\frac{d^{\frac{3}{2}(n - 1) - 1}\sin\left(\frac{\pi}{k}\right)}{N}\bra{\alpha}U_B\ket{\alpha} $$
where $A$ is a phase depending on $k$ and $N$ is an exponentially large factor defined by
$$ N = \sum_{l = 1}^{k-1} \sin\left(\frac{l\pi}{k}\right) \dim \mathcal{H}_{n-1,k,l}$$
and $\dim \mathcal{H}_{n-1,k,l}$ is the number of walks of length $n - 1$ on the path graph of $k - 1$ vertices that start on the left-most vertex $l=1$ and end on vertex $l$. Note that their definition of $\ket{\alpha}$ and $U_B$ are slightly different than ours, although it is not important in this case. Substituting $k = 5$, we have
\begin{align*}
    d &= \phi & A &= \e^{-\im\frac{3\pi}{5}} &  \dim \mathcal{H}_{n-1,5,l} &= \begin{cases}
        f_{n - 2} & \text{if } l = 1 \text{ and } n \text{ is odd or } l = 4 \text{ and } n  \text{ is even}  \\
        f_{n - 1} & \text{if } l = 2 \text{ and } n \text{ is even or } l = 3 \text{ and } n  \text{ is odd} \\
        0 & \text{otherwise}
    \end{cases}
\end{align*}
so since $\sin(x)$ is symmetric around $\frac{\pi}{2}$ and $n - 1$ is either even or odd we can compute $N$ as
$$ \frac{N}{\sin\left(\frac{\pi}{5}\right)} = f_{n-2} + \frac{\sin\left(\frac{2\pi}{5}\right)}{\sin\left(\frac{\pi}{5}\right)}f_{n - 1} = f_{n-2} + \phi f_{n - 1} = \phi^{n - 1} $$
and thus we get
$$ V_{P(B)}(\e^{\im\frac{2\pi}{5}}) = (-A)^{3w_B}\phi^{\frac{n}{2} - \frac{3}{2}}\bra{\alpha}U_B\ket{\alpha} $$
which matches the theorem above, excepting the differences in $U_B$ and $\ket{\alpha}$.

\section{Numerical experiment details}
\label{app:simulations}

\subsection{Noisy circuit simulation}

\begin{figure}
    \includegraphics[width=0.5\textwidth]{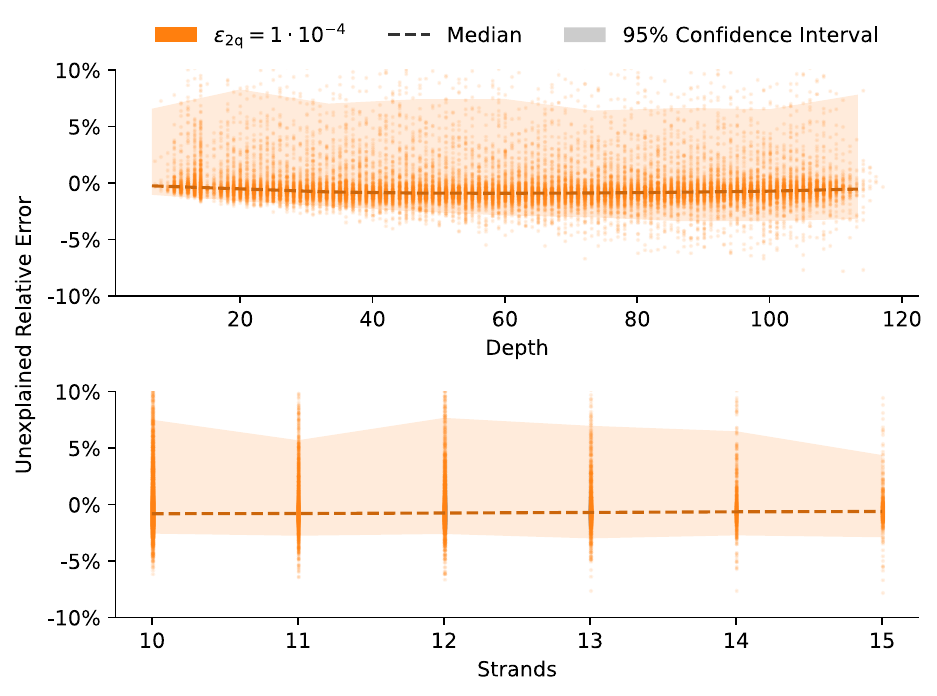}
    \caption{The dependence of the relative error in Fig.~\ref{fig:noise-characterisation} unexplained by the power-law fit for a set of 24k randomly genererated benchmark braids, when simulated with a depolarizing noise model and $\epsilon_\mathrm{2q} = 1\cdot 10^{-4}$. While the unexplained relative error can be relatively large ($5-10\%$), the median is low ($< 1\%$), and it is not correlated with either the depth of the circuits or the number of strands in each braid. This suggests that the total relative error of \texttt{cfev} depends mainly on the number of crossings.}
    \label{fig:unexplained-error}
\end{figure}

We simulated \texttt{cfev} as compiled to the native gateset of Quantinuum's H-series devices using a state-vector method. Random Pauli errors were inserted after every $U_\mathrm{1q}$ $1$- and $R_{ZZ}$ $2$-qubit gate to create a depolarizing channel. No errors were inserted for $R_Z$ gates, since in practice these are implemented entirely in software. Random bit-flip initialization errors were applied on each qubit at the beginning of each circuit, and asymmetric bit-flip errors were applied to the measurement results. See the repository \cite{repo} for the exact error rates and parameters. Non-Fibonacci error detection was applied to the output of these simulations, but the conjugate trick was not used (since no coherent errors were included, it is unnecessary). 

The simulations were run with a custom state-vector simulator built using the Triton GPU programming environment~\cite{tillet2019triton}. This was necessary to avoid the performance penalty incurred in existing state-vector simulators (for example, we saw up to 50x slow down relative to Qiskit-Aer~\cite{qiskit2024} on the same hardware) when simulating circuits which must vary between shots (this is due to the bitstring $\ket{s}$ being randomly sampled, and hence $V_\mathrm{cat}$ must be recompiled for each shot). A Qiskit-Aer-based simulation is also included in the repository, which we cross-checked against our custom implementation to ensure correctness. 

The benchmark set was generated using the technique from Sec.~\ref{sec:benchmark}, specifically according to the method in App.~\ref{app:braid-augmentation}. The number of layers in the initial brick-wall circuit was varied from $5$ to $100$ in steps of $5$. The number of strands varied from $10$ to $15$ with: $10$k braids on $10$ strands, $5$k each for $11$ and $12$ strands, $2.4$k for $13$ strands, $1.2$k for $14$ strands, and $640$ braids on $15$ strands. This was done to ensure that the time spent on each number of strands was roughly the same.

In Fig.~\ref{fig:noise-characterisation} we see how the relative error depends on the number of crossings. We use a power-law fit to extrapolate because it provided a better fit than a linear or exponential ($1 - \alpha^c$) regression. Fig.~\ref{fig:unexplained-error} shows that the relative error unexplained by the power-law fit does not depend strongly on the depth of the braid or the number of strands. Therefore, we did not believe it necessary to expand the benchmark set to include braids with more strands (which would have taken much more time to simulate).

\subsection{Tensor network methods}
\label{app:simulations-tns}

\begin{figure}
    \includegraphics[width=\textwidth]{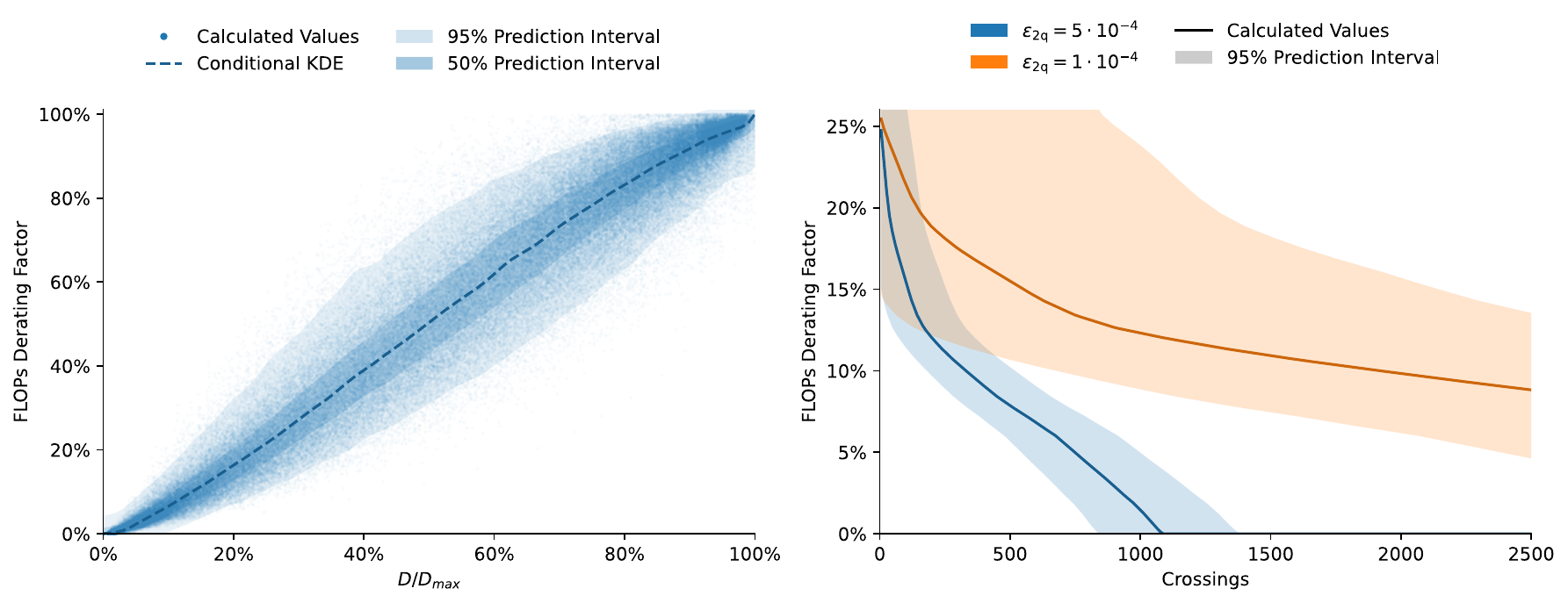}
    \caption{\emph{On the left:} The floating-point operations (FLOPS) required to evaluate \texttt{mpo-proj} relative to the total number of operations when the bond dimension is maximum $\chi = \chi_\mathrm{max}$ so that no truncation occurs, as it depends on the level of truncation. The level of truncation is expressed as the amount of intermediate memory used relative to the $\chi = \chi_\mathrm{max}$ case. It was determined for a set of 10k braids by measuring the total FLOPS and intermediate memory usage for $\chi = 2^{k}$ with $k \in [3, 10]$. \emph{On the right:} the amount of FLOPS needed for \texttt{mpo-proj} to achieve the same accuracy as \texttt{cfev}, as it depends on the number of crossings, relative to the number of FLOPS required for the $\chi = \chi_\mathrm{max}$ case. This is obtained by combining the left-hand plot with Fig.~\ref{fig:chi-and-mpo-crossings}(a).}
    \label{fig:flops-derating}
\end{figure}

\begin{figure}
    \includegraphics[width=0.5\textwidth]{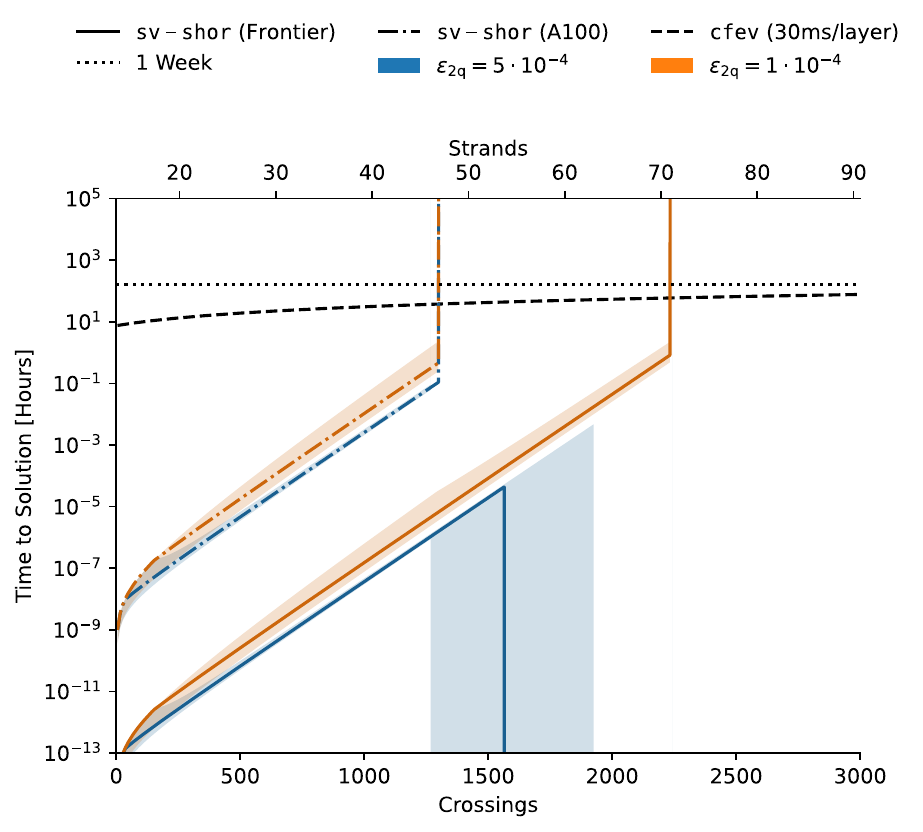}
    \caption{Predicted time needed to calculate $V_{M(B)}(\e^{\im\frac{2\pi}{5}})$ using the \texttt{sv-shor} method, on the same benchmark set as in Fig.~\ref{fig:chi-and-mpo-crossings}, to the same relative error achieved by \texttt{cfev}, as it depends on the number of crossings $c$ and the $2$-qubit error rate $\epsilon_\mathrm{2q}$. The solid lines represent computation on the Frontier cluster, while the dot-dashed lines represent the same computation on a single A100 GPU. The lines vertically downward represent the points where the relative error of \texttt{cfev} exceeds the 80\% threshold given by Fig.~\ref{fig:chi-and-mpo-crossings}, whereas the lines vertically upward represent the points where required memory exceeds the available memory of the classical hardware. For $\epsilon_\mathrm{2q} = 5 \cdot 10^{-4}$, the relative error of \texttt{cfev} grows large enough that \texttt{sv-shor} is always faster than \texttt{cfev}. For $\epsilon_\mathrm{2q} = 1 \cdot 10^{-4}$ the memory limit is reached before the point where \texttt{cfev} would become faster.}
    \label{fig:time-sv}
\end{figure}

\begin{figure}
    \includegraphics[width=\textwidth]{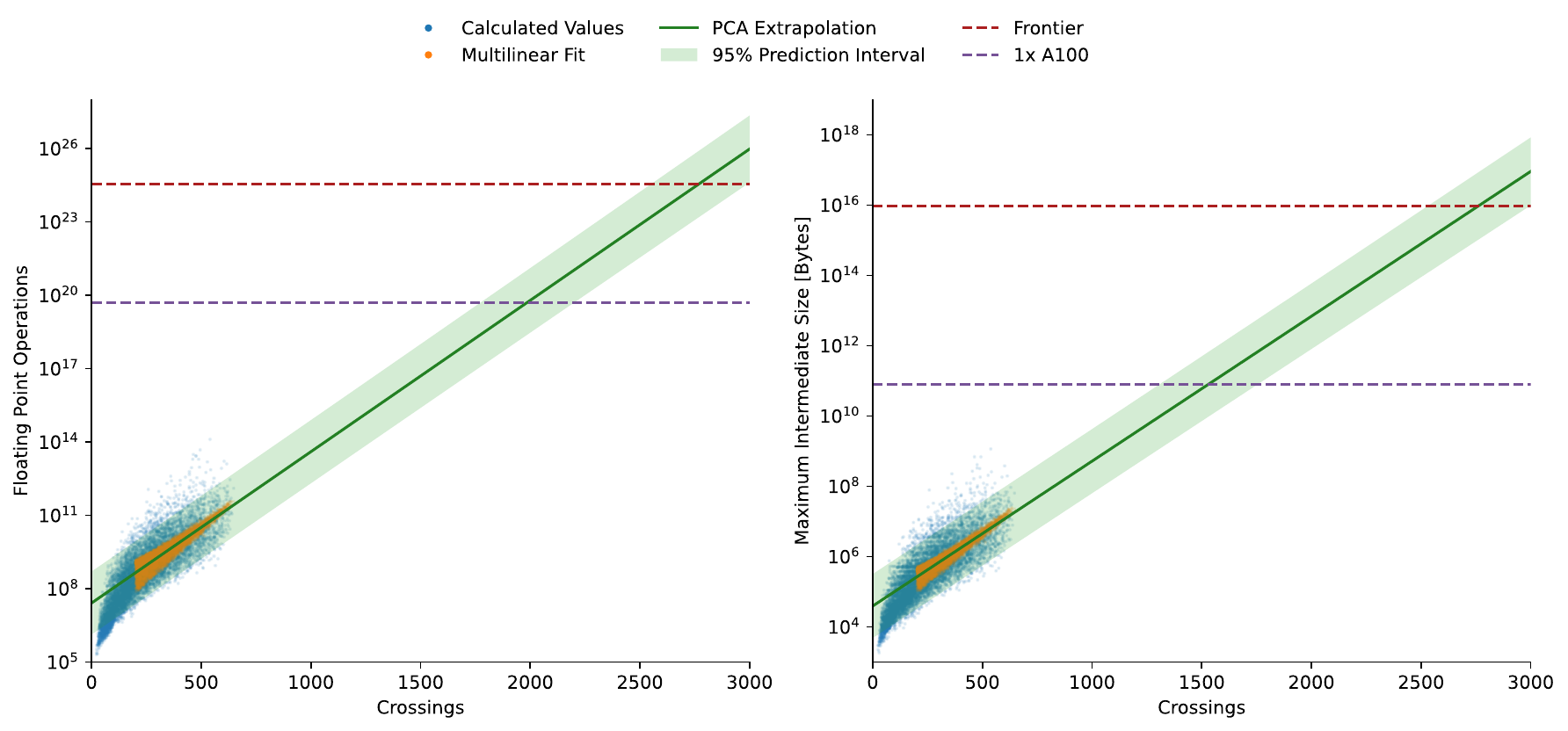}
    \caption{The resources required to evaluate $V_{M(B)}(\e^{\im\frac{2\pi}{5}})$ using the \texttt{mpo-proj} method for the benchmark set as in Fig.~\ref{fig:chi-and-mpo-crossings}. The bond dimension $\chi = \chi_\mathrm{max}$ was allowed to grow as large as needed to avoid any lossy compression. The extrapolation is a multi-linear fit against the number of strands $n - 1$, crossings $c$, and depth of the braid for $c \geq 200$. It is extrapolated along the principal component of the dataset to represent possible performance for larger datasets generated in the same way. \emph{On the left:} The number of floating point operations (FLOPS) required. The dashed lines labelled `Frontier' and `1x A100' represent one month of computation on the respective hardware, for scale. \emph{On the right:} The size of the largest intermediate tensor, in bytes. The dashed lines represent the memory limits of the respective hardware.}
    \label{fig:mem-mpo}
\end{figure}

\begin{figure}
    \includegraphics[width=\textwidth]{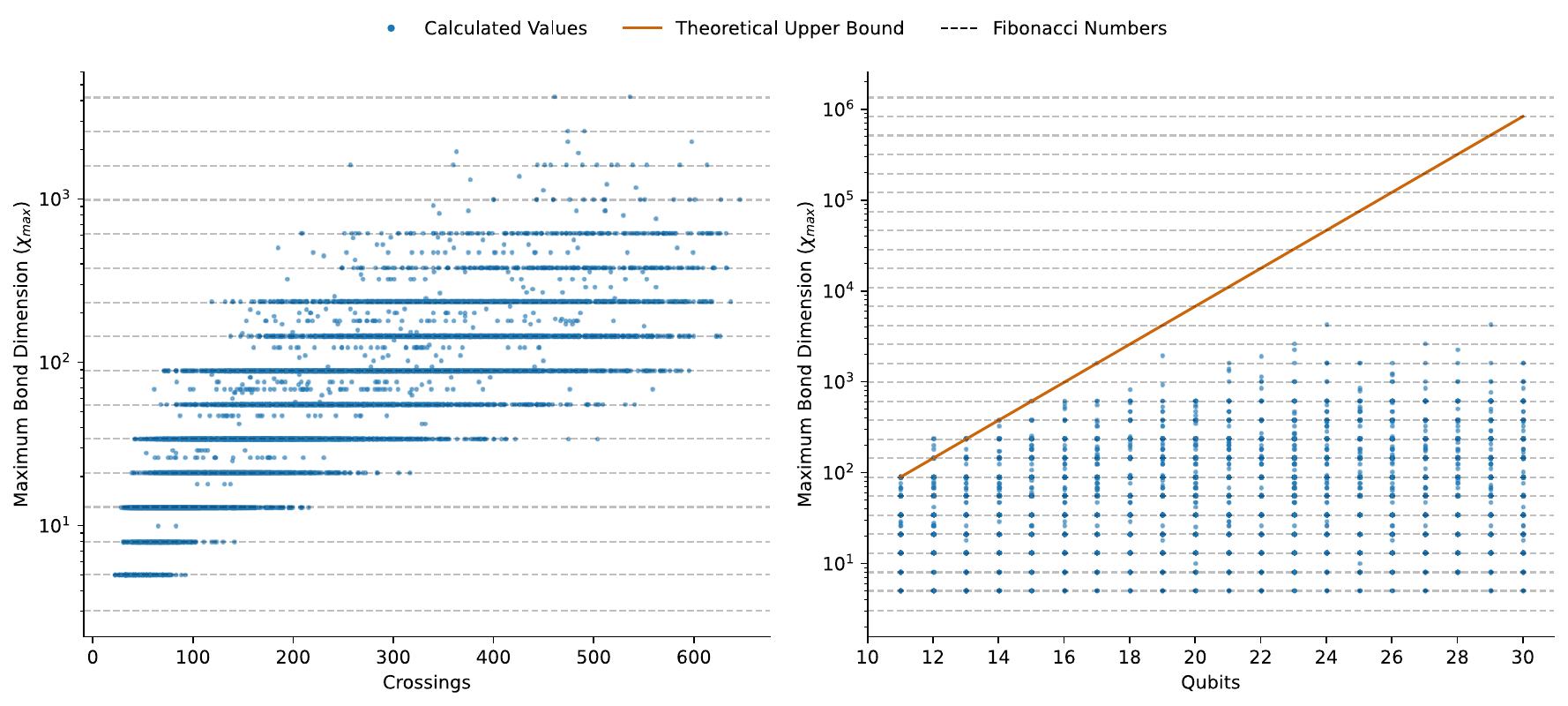}
    \caption{The maximum bond dimension $\chi_\mathrm{max}$ required to evaluate $V_{M(B)}(\e^{\im\frac{2\pi}{5}})$ without any compression using the \texttt{mpo-proj} method for the benchmark set as in Fig.~\ref{fig:chi-and-mpo-crossings}, as it depends on the number of qubits and the number of crossings. For almost all (96\%) of instances, $\chi_\mathrm{max}$ is equal to a Fibonacci number. This suggests that \texttt{mpo-proj} is successful at reducing the bond dimension by exploiting that $U_B$ acts non-trivially only on Fibonacci strings. Furthermore, $\chi_\mathrm{max}$ is usually much lower than the theoretical upper bound of $f_n$ for $n$ qubits, indicating that the complexity of braids in the benchmark set is likely limited by the number of crossings rather than the number of strands.}
    \label{fig:maxchi-fib}
\end{figure}

\begin{figure}
    \includegraphics[width=\textwidth]{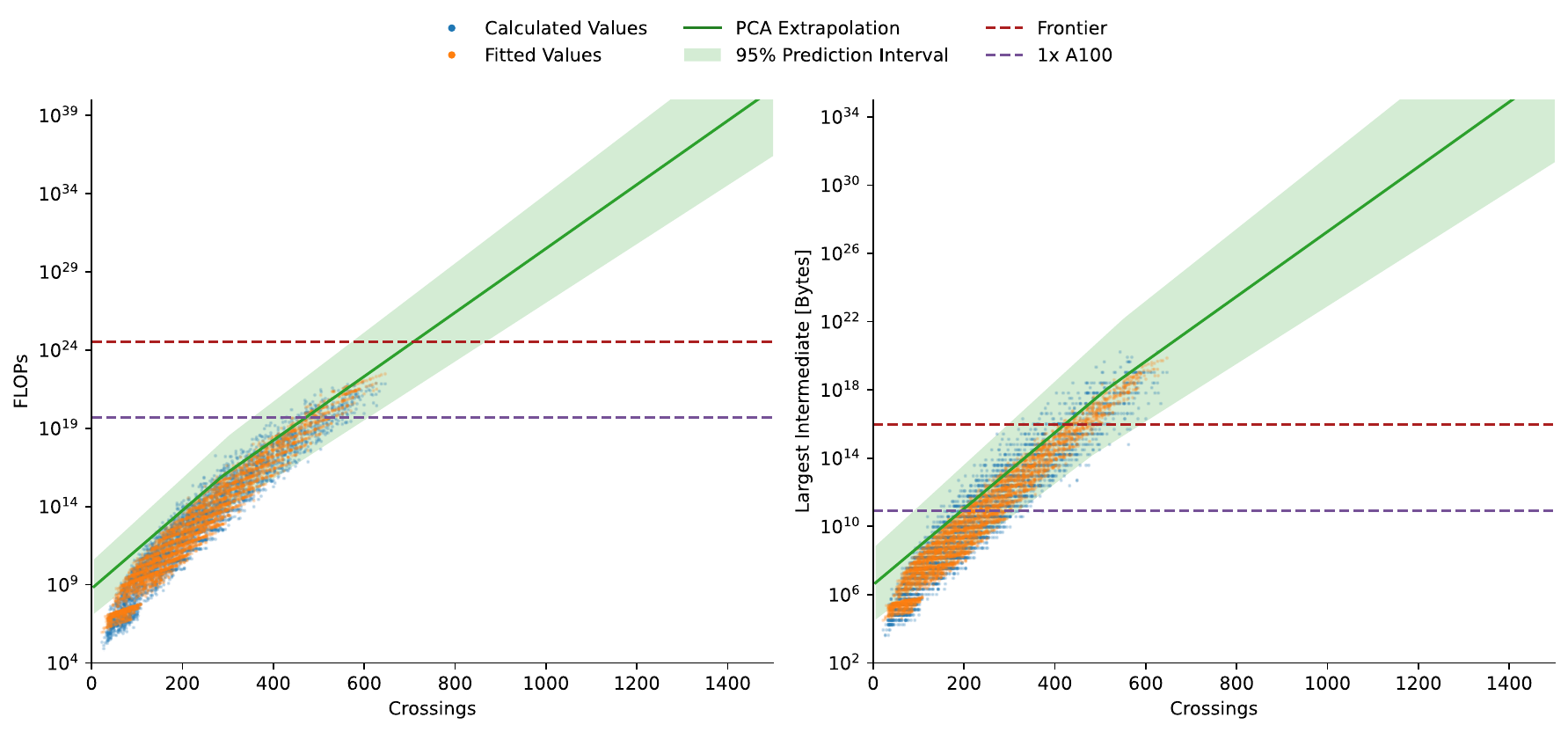}
    \caption{The resources required to evaluate $V_{M(B)}(\e^{\im\frac{2\pi}{5}})$ using the \texttt{tn-proj} method for the benchmark set as in Fig.~\ref{fig:chi-and-mpo-crossings}. The extrapolation is a multi-linear fit against the number of strands $n - 1$, crossings $c$, depth of the braid $d$, and $\min(n, d)$ (see App.~\ref{app:simulations-tns} for more details). It is extrapolated along the principal component of the dataset to represent possible performance for larger datasets generated in the same way. The abrupt change in slope around $c \sim 500$ is due to the $\min(n, d)$ term and results in a more conservative estimate. \emph{On the left:} The number of floating point operations (FLOPS) required. The dashed lines labelled `Frontier' and `1x A100' represent one month of computation on the respective hardware, for scale. \emph{On the right:} The size of the largest intermediate tensor, in bytes. The dashed lines represent the memory limits of the respective hardware.}
    \label{fig:mem-tn}
\end{figure}

The benchmark set for \texttt{tn-proj}, \texttt{mpo-proj}, and \texttt{bracket} was comprised of $10$k braids, generated as in App.~\ref{app:braid-augmentation} with initial brick-wall depths between $5$ and $50$ in increments of $5$ and between $10$ and $30$ strands. The braids were evenly distributed across all of these statistics. Since the time required to estimate the performance of \texttt{tn-proj}, \texttt{bracket},  and \texttt{mpo-proj} does not have a strong of a dependence on the number of strands (compared to the noisy simulations in the previous section), we elected to use a greater range of strands to have a more comprehensive benchmark.

We study the performance of \texttt{tn-proj} and \texttt{mpo-proj} by computing the weighted trace shown in Fig.~\ref{fig:tn-proj}. The description of the tensor network, including both the braid unitary and the projection matrices, is built up iteratively. The iteration is over the sequence of braid generators $U_{\sigma^+}$ and $U_{\sigma^-}$, here we do not use the compiled circuit representation of the braid generators, Fig.~\ref{fig:braid-unitary-and-compiled-circuit}(b), instead we use the exact 3-qubit matrix, Eq.~\eqref{eq:unitary}, where we set the action on the non-Fibonacci subspace to be a zero matrix. Once the braid is complete we similarly iterate over the projectors that apply the correct weightings in the trace. The inclusion of the projectors and the choice of the action of the braid generators on the non-Fibonacci subspace make the tensor network non-unitary. 

We can then evaluate this in two ways. For \texttt{tn-proj}, we use Cotengra~\cite{gray2021hyper} to find good contraction paths. We use the KaHyPar and random-greedy backends to Cotengra, configured to minimize total FLOPS. We used $128$ hyper-optimization samples for each network. We also tested $1024$ hyper-optimization samples on a subset of networks but did not observe any substantial improvement. We did not perform slicing of the network, but even if memory limitations are ignored, Fig.~\ref{fig:mpo-comparison-and-time-to-soln} indicates that this would not substantially change the boundaries for quantum advantage. We did not perform the actual contractions, but instead assumed that the calculated required FLOPS could be perfectly parallelized over our hardware targets. This is a conservative assumption that will bias the results in favour of classical algorithms. The results are shown in Fig.~\ref{fig:mem-tn}.

All matrix product operator (MPO) computations were carried out using the Quimb library \cite{gray2018quimb}. At each step of the tensor network construction, we apply compression to reduce the maximum bond dimension to a fixed value $\chi$. Once the compressed MPO is constructed in this way, we compute its trace and multiply by the prefactors to complete the estimate of the Jones polynomial for the Markov closure of the braid. From this, we can calculate the relative error by comparing against the known value of the Jones polynomial for the benchmark braids. We performed this for each braid for values of $\chi = 2^k$ with $k \in [3, 9]$. For each braid and for each $k$, we recorded the maximum intermediate size, and used this information to generate Fig.~\ref{fig:chi-and-mpo-crossings}(a). 

We found that $k = 13$ was sufficiently large that no compression needed to be applied for all braids in the benchmark set. The resources required in this case are given in Fig.~\ref{fig:mem-mpo}. Thus for $k = 13$ for each braid, we recorded the dimensions of each bond after the application of each unitary to the MPO. The largest achieved bond dimension defines $\chi_\mathrm{max} \leq 2^{13}$ for that braid, and the largest total intermediate defines $D_\mathrm{max}$. Then using the theoretical FLOP counts listed in~\cite{blackford1999lapack}, we could calculate the theoretical number of FLOPS needed to evaluate the MPO for any intermediate $\chi$ value by replaying the sequence of operations performed by Quimb in the MPO computation, but using truncated bond dimensions to calculate the FLOP count for each operation. With this data, Fig.~\ref{fig:noise-characterisation}, and Fig.~\ref{fig:chi-and-mpo-crossings}, we constructed Fig.~\ref{fig:flops-derating}, from which the time-to-solution of \texttt{mpo-proj} can be estimated. Again, we assumed that the calculated required FLOPS could be perfectly parallelized.

For \texttt{mpo-proj} we use a multilinear fit in Fig.~\ref{fig:mem-mpo} to extrapolate to larger numbers of crossings. Specifically, we take into account the number of crossings, strands, and depth of the braid. We fit to only the points with $c \geq 200$ because before this point the trend has not yet appeared to stabilize. We extrapolate along the principal axis of the dataset in terms of crossings, strands, and depth since this represents plausible larger instances generated in the same way as the original dataset.

For \texttt{tn-proj} in Fig.~\ref{fig:mem-tn} (and \texttt{bracket} discussed in the next section), we use a multilinear fit that includes crossings $c$, strands $n - 1$, depth $d$, and also the term $\min(n, d)$. The reasoning behind this is that rectangular planar tensor networks can be contracted in time that scales exponentially proportional to their shortest linear dimension (for instance, width or height). We would expect the networks generated by brick-wall style links to be roughly of this shape. More generally, planar tensor networks can be contracted in time that scales exponentially in the square root of the number of tensors \cite{kourtis2019fast}. This provides a more conservative scaling estimate than a multilinear fit of only $c$, $n$, $d$.

\subsection{Classical knot-theoretic algorithms}
\label{app:classicalalgos}

\begin{figure}
    \includegraphics[width=\textwidth]{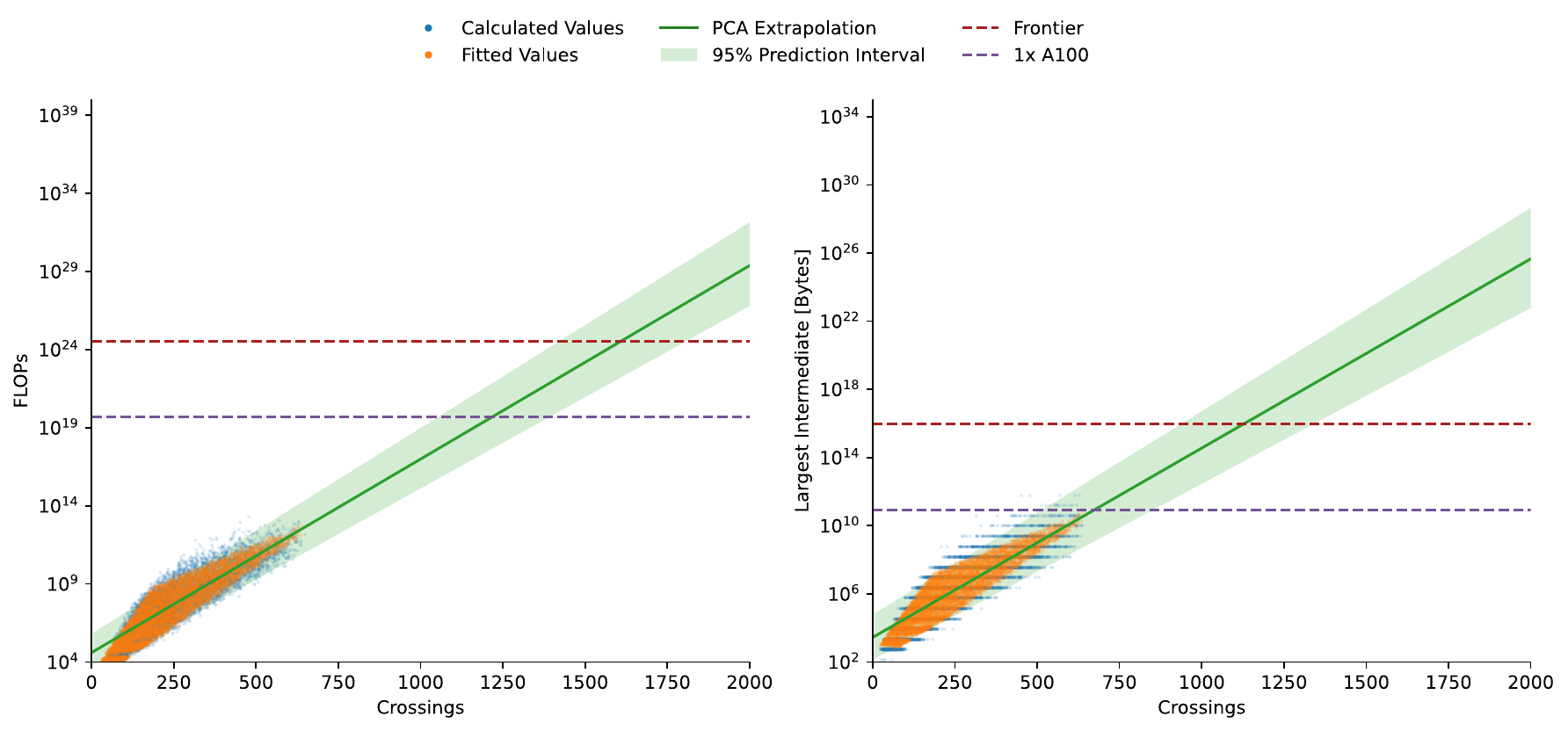}
    \caption{The resources required to evaluate $V_{M(B)}(\e^{\im\frac{2\pi}{5}})$ using the \texttt{bracket} method for the benchmark set as in Fig.~\ref{fig:chi-and-mpo-crossings}. This results from the optimization of the tensor network contraction path, making the conservative estimate that each tensor element takes 8 bytes of memory and each corresponding arithmetic operation costs only one floating-point operation (FLOP), see App.~\ref{app:classicalalgos} for more details. The extrapolation is a multi-linear fit against the number of strands $n - 1$, crossings $c$, depth of the braid $d$, and $\min(n, d)$ (see App.~\ref{app:simulations-tns} for more details). It is extrapolated along the principal component of the dataset to represent possible performance for larger datasets generated in the same way. \emph{On the left:} The number of floating point operations (FLOPS) required. The dashed lines labelled `Frontier' and `1x A100' represent one month of computation on the respective hardware, for scale. \emph{On the right:} The size of the largest intermediate tensor, in bytes. The dashed lines represent the memory limits of the respective hardware.}
    \label{fig:mem-bracket}
\end{figure}

There are various different classical algorithms for computing the Jones polynomial, or evaluating it at a certain point. Since doing this exactly is \#P-hard~\cite{jaeger1990computational}, we would expect these methods to scale exponentially. Indeed, they can be divided broadly into two paradigms based on whether this scaling is in terms of strands or crossings.

The first group of algorithms perform computations with a $2^{n - 1}$ dimensional vector (of Laurent polynomials or complex numbers, depending on the method). These can be derived for example through the connection to Hecke algebras~\cite{morton1990calculating} or quantum groups~\cite{tingley2015minus}, and has been implemented in practice~\cite{morton2014programs}. Such methods are essentially identical to quantum algorithms for evaluating the Jones polynomial, and in particular \texttt{sv-shor}, as it performs a statevector simulation of each amplitude on the diagonal of $U_B$ directly as a polynomial. Since they are not specialized to the $\e^{\im\frac{2\pi}{5}}$ evaluation point their cost scales as $O(2^n)$ in memory and $O(c4^n)$ in time.

The second group of algorithms expand the Jones polynomial as an exponential sum of terms using a skein relation (an equation that relates the original link to the links obtained by resolving a single crossing). For example, this includes direct expansion~\cite{kauffman1987state} (with complexity $O(2^c)$), methods based on `simple walks'~\cite{hajij2018efficient,boden2022braid}, or the connection to the Tutte polynomial of graphs~\cite{kyoko1995computing}. These computations can be optimized using dynamic programming, in a similar way to finding a good contraction path for a tensor network. Examples include \cite{ellenberg2013efficient} (with a stated complexity of at most $O(c2^{17\sqrt{c}})$), Bar-Natan's algorithm for the Khovanov homology~\cite{barnatan2006fast}, and an algorithm due to Kauffman~\cite[p. 104-124]{Kauffman2001} which we will call \texttt{bracket}. These methods are all abstractly similar to \texttt{tn-proj} in the same way that, taking another \#P-hard problem as an example, model counting relates to tensor contraction~\cite{kourtis2019fast}. Indeed, \texttt{tn-proj} scales exponentially in the width parameters (such as cut-width and tree-width) of the planar graph representing the quantum circuit~\cite{ogorman2019parameterization}, while~\cite{ellenberg2013efficient,barnatan2006fast} and \texttt{bracket} all scale exponentially in the width parameters of the planar graph representing the link (where each crossing is one vertex, and edges correspond to strands). This cannot be worse than the $O(2^n)$ scaling of the strand-based algorithms, so we choose to focus on these methods exclusively.

It is not a coincidence that both groups resemble different methods of simulating the corresponding quantum circuit; \cite[Section 5.5]{kuperberg2015how} gives a heuristic argument that this may be the optimal classical algorithm for large knots, by showing that the Jones polynomial provides a general model of planar quantum circuits -- the exact evaluation of the Jones polynomial corresponds to the exact simulation of post-selected quantum circuits, in the same way that the APPROX-JONES-PLAT problem corresponds to the simulation of quantum circuits up to bounded error. Moreover,~\cite{mann2017complexity} shows that for random braids, $U_B$ looks like a random quantum circuit. The theoretical scaling predicted for these algorithms is extremely poor, even making generous assumptions about the constant factors involved (e.g for $c = 2000$ and $n = 80$, $c2^{17\sqrt{c}} \approx 10^{232}$, $c4^{n} \approx 10^{51}$), so this would not be a suitable point of comparison for \texttt{cfev}. 

However, in the case of \texttt{bracket}, the method is given exactly by the contraction of a (non-unitary) tensor network over the ring of Laurent polynomials (that is, the elements of the tensors are Laurent polynomials). Thus, using Cotengra~\cite{gray2021hyper} to find good contraction paths for this network, and making the conservative assumption that each Laurent polynomial operation can be computed in only one floating-point operation, we can find a practical upper bound for the performance of \texttt{bracket}. We used the same contraction finding hyper-parameters as specified for \texttt{tn-proj} in the previous section. Since no large-scale implementations of any of these algorithms are available (e.g~\cite{hajij2018efficient} has $c \leq 12$, \cite{morton2014programs} has $n \leq 13$, ~\cite{barnatan2006fast} has $c \sim 50$), we will take this upper bound as representative of the best possible performance for all classical algorithms. Fig.~\ref{fig:mem-bracket} shows the upper bound calculated in this way for the benchmark braid set described in the previous section.

\subsection{Limitations}\label{app:limitations}

\begin{figure}
    \includegraphics[width=0.84\textwidth]{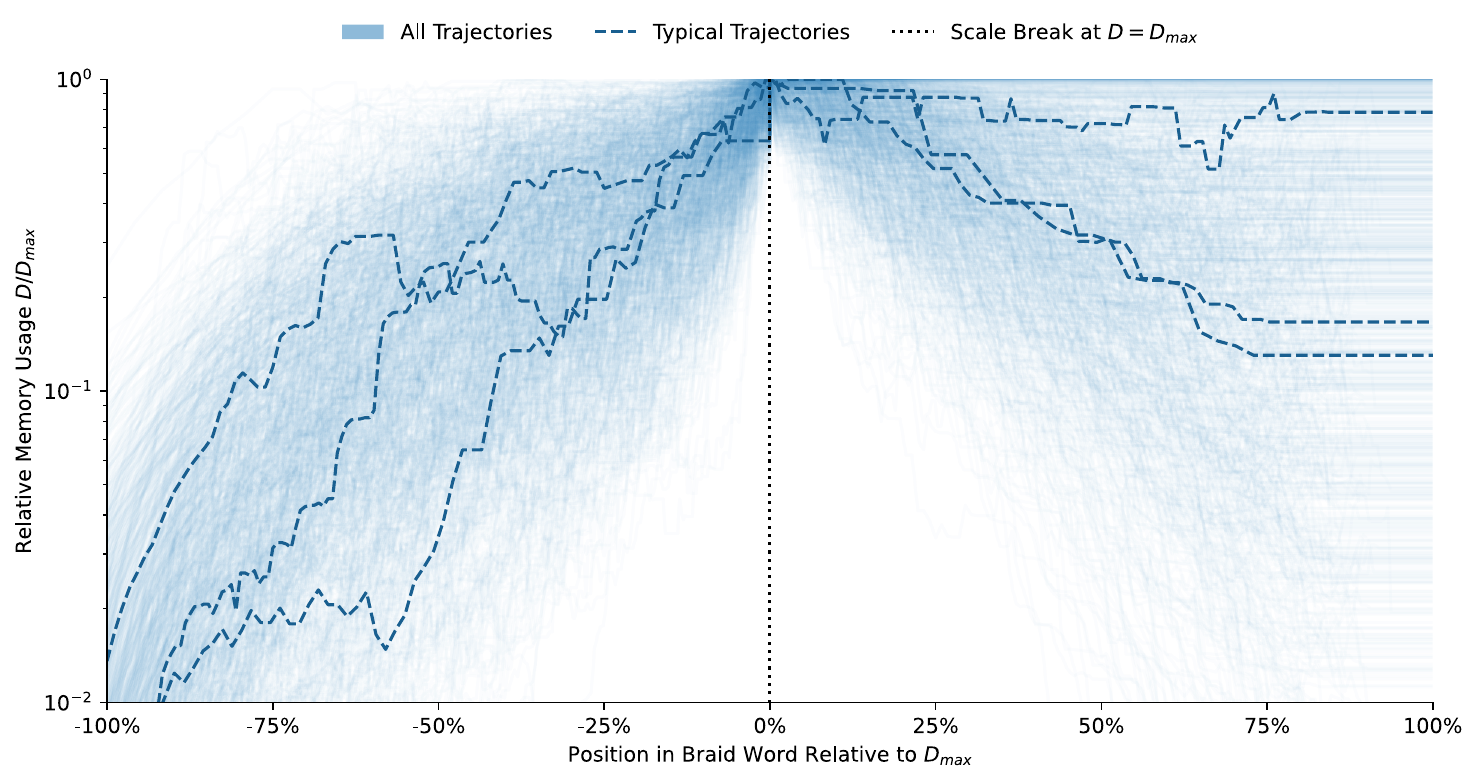}
    \caption{The memory usage required to evaluate $V_{M(B)}(\e^{\im\frac{2\pi}{5}})$ without any compression using the \texttt{mpo-proj} method for each braid in the benchmark set (as in Fig.~\ref{fig:chi-and-mpo-crossings}) as it evolves throughout the computation. We recorded the memory usage after each braid generator is applied, and normalize it relative to the maximum memory usage. The relative position in the braid word is measured in two different scales: -100\% to 0\% corresponds to all generators before the point $D = D_\mathrm{max}$ of maximum memory usage, and 0\% to 100\% is the same for all generators after that point. Three typical trajectories are shown with dashed lines. The memory usage increases roughly linearly (on a log-scale) up to the maximum before decreasing in the same way. This suggests that \texttt{mpo-proj} may be able to `see' the conjugate $ABA^{-1}$ structure of the braids in the benchmark set (see App.~\ref{app:braid-augmentation} for details), and hence it may perform differently for braids with a different structure.}
    \label{fig:maxchi-traj}
\end{figure}

There are several limitations of these numerical studies which are worth mentioning. Most importantly, the results in Fig.~\ref{fig:mem-bracket}, Fig.~\ref{fig:noise-characterisation} and especially Fig.~\ref{fig:chi-and-mpo-crossings} require significant extrapolation outside the range of data we were able to compute in order to estimate where advantage might occur. In the absence of a strong mathematical reason why linear and power-law fits ought to hold in general for these quantities, we cannot say with certainty where the threshold for time-to-solution or energy-efficiency advantage for \texttt{cfev} over classical algorithms will occur, or even if it necessarily must occur for the $\epsilon_\mathrm{2q}$ we are considering. The lack of further data in the case of \texttt{mpo-proj} is partially due to the fact that larger braids become difficult to benchmark on a single GPU due to VRAM constraints. This is promising as it suggests that this problem is indeed difficult to solve classically, but also means we must be less certain about our conclusions. Secondly, in using benchmark braids constructed according to the method in App.~\ref{app:braid-augmentation}, we cannot necessarily draw strong conclusions about the behaviour of braids that do not conform to this format (see Fig.~\ref{fig:maxchi-traj}). In the future, when a candidate braid for which advantage might be demonstrated is identified, it may be necessary to perform the same benchmarks with braids that more closely match this candidate.
 
\section{Generalization to other roots of unity}
\label{app:non-fibo-sketch}

Our compilation of Shor and Jordan's quantum algorithm \cite{shor2008estimating} has the appealing property that the unitaries corresponding to each crossing are \emph{local}, acting on only three qubits, and can thus be implemented in a hardware-efficient manner. On the other hand, the Aharonov-Jones-Landau (AJL) algorithm \cite{aharonov2006polynomial} can evaluate the Jones polynomial at any root of unity $\e^{\im\frac{2\pi}{k}}$, but the unitary for each crossing acts on \emph{all} qubits (in the worst case), so it is unlikely to be realizable on NISQ machines. Therefore, it would be advantageous to generalize Shor and Jordan's method so that roots of unity other than the fifth can be evaluated while preserving locality of crossings. Here we sketch such a generalization, with the understanding that all the error mitigation, error detection, and compilation tricks described in this paper can likely be generalized to this case also. Our generalization is based on the connection between the two algorithms for the $k = 5$ case described in Appendix C of \cite{shor2008estimating}.

The AJL algorithm utilizes the \emph{path representation} of the Temperley-Lieb algebra. In this model, given a braid of $n$ strands evaluated at the $\e^{\im\frac{2\pi}{k}}$ root of unity with $k \geq 3$, the Hilbert space that is acted on by the crossings consists of all walks of length $n$ on the path graph of $k - 1$ vertices that start at the left-most vertex. These paths are encoded as a list of transitions $\pm 1$, where $+1$ represents a step to the right, and $-1$ represents a step to the left. For example, in the $k = 5$ case, all the paths are generated as trajectories in the following finite state machine:
\begin{center}\begin{tikzpicture}
	\begin{pgfonlayer}{nodelayer}
		\node [style={Z phase dot (zh)}] (0) at (0, 0) {};
		\node [style={Z phase dot (zh)}] (1) at (2, 0) {};
		\node [style={Z phase dot (zh)}] (2) at (4, 0) {};
		\node [style={Z phase dot (zh)}] (3) at (6, 0) {};
		\node [style=none] (4) at (1.75, -0.5) {};
		\node [style=none] (5) at (0.25, -0.5) {};
		\node [style=none] (6) at (0.25, 0.5) {};
		\node [style=none] (7) at (1.75, 0.5) {};
		\node [style=none] (8) at (3.75, -0.5) {};
		\node [style=none] (9) at (2.25, -0.5) {};
		\node [style=none] (10) at (2.25, 0.5) {};
		\node [style=none] (11) at (3.75, 0.5) {};
		\node [style=none] (12) at (5.75, -0.5) {};
		\node [style=none] (13) at (4.25, -0.5) {};
		\node [style=none] (14) at (4.25, 0.5) {};
		\node [style=none] (15) at (5.75, 0.5) {};
		\node [style=none] (16) at (1, 1.25) {$+1$};
		\node [style=none] (17) at (3, 1.25) {$+1$};
		\node [style=none] (18) at (5, 1.25) {$+1$};
		\node [style=none] (19) at (1, -1.25) {$-1$};
		\node [style=none] (20) at (3, -1.25) {$-1$};
		\node [style=none] (21) at (5, -1.25) {$-1$};
	\end{pgfonlayer}
	\begin{pgfonlayer}{edgelayer}
		\draw [style=diredge, bend left] (4.center) to (5.center);
		\draw [style=diredge, bend left] (6.center) to (7.center);
		\draw [style=diredge, bend left] (8.center) to (9.center);
		\draw [style=diredge, bend left] (10.center) to (11.center);
		\draw [style=diredge, bend left] (12.center) to (13.center);
		\draw [style=diredge, bend left] (14.center) to (15.center);
	\end{pgfonlayer}
\end{tikzpicture}\end{center}
The inherent non-locality in the AJL algorithm comes from the fact that the action of each crossing depends on position of the path at a given point, and determining this position requires inspecting all transitions up to that point. To obtain Shor and Jordan's representation, first note that the action of the crossings on each path (given by Equation 4 and Definition 2.5 in \cite{aharonov2006polynomial}) is symmetric about the center of the graph; that is, paths that land on vertex $i$ or vertex $k - i - 2$ are treated exactly the same. Therefore, we can incorporate this symmetry into the finite state machine that generates these paths, by `folding' it over the line of symmetry:
\begin{center}\begin{tikzpicture}
	\begin{pgfonlayer}{nodelayer}
		\node [style={Z phase dot (zh)}] (0) at (-3, 0) {};
		\node [style={Z phase dot (zh)}] (1) at (-1, 0) {};
		\node [style={Z phase dot (zh)}] (2) at (1, 0) {};
		\node [style={Z phase dot (zh)}] (3) at (3, 0) {};
		\node [style=none] (4) at (-1.25, -0.5) {};
		\node [style=none] (5) at (-2.75, -0.5) {};
		\node [style=none] (6) at (-2.75, 0.5) {};
		\node [style=none] (7) at (-1.25, 0.5) {};
		\node [style=none] (8) at (0.75, -0.5) {};
		\node [style=none] (9) at (-0.75, -0.5) {};
		\node [style=none] (10) at (-0.75, 0.5) {};
		\node [style=none] (11) at (0.75, 0.5) {};
		\node [style=none] (12) at (2.75, -0.5) {};
		\node [style=none] (13) at (1.25, -0.5) {};
		\node [style=none] (14) at (1.25, 0.5) {};
		\node [style=none] (15) at (2.75, 0.5) {};
		\node [style=none] (16) at (0, 1.5) {};
		\node [style=none] (17) at (0, -1.5) {};
		\node [style=none] (18) at (5, 0) {$\to$};
		\node [style={Z phase dot (zh)}] (19) at (7, 0) {};
		\node [style={Z phase dot (zh)}] (20) at (9, 0) {};
		\node [style=none] (21) at (8.75, -0.5) {};
		\node [style=none] (22) at (7.25, -0.5) {};
		\node [style=none] (23) at (7.25, 0.5) {};
		\node [style=none] (24) at (8.75, 0.5) {};
		\node [style=none] (25) at (9.25, 0.5) {};
		\node [style=none] (26) at (9.25, -0.5) {};
	\end{pgfonlayer}
	\begin{pgfonlayer}{edgelayer}
		\draw [style=diredge, bend left] (4.center) to (5.center);
		\draw [style=diredge, bend left] (6.center) to (7.center);
		\draw [style=diredge, bend left] (8.center) to (9.center);
		\draw [style=diredge, bend left] (10.center) to (11.center);
		\draw [style=diredge, bend left] (12.center) to (13.center);
		\draw [style=diredge, bend left] (14.center) to (15.center);
		\draw [style=box edge] (17.center) to (16.center);
		\draw [style=diredge, bend left] (21.center) to (22.center);
		\draw [style=diredge, bend left] (23.center) to (24.center);
		\draw [style=diredge, bend left=120, looseness=3.50] (25.center) to (26.center);
	\end{pgfonlayer}
\end{tikzpicture}\end{center}
This can also be interpreted as `reflecting' the path over the line of symmetry (cf. Appendix C of \cite{shor2008estimating}):
\begin{center}\begin{tikzpicture}
	\begin{pgfonlayer}{nodelayer}
		\node [style=none] (0) at (0, 0) {};
		\node [style=none] (1) at (0, -1) {};
		\node [style=none] (2) at (0, -2) {};
		\node [style=none] (3) at (0, -3) {};
		\node [style=none] (4) at (5, 0) {};
		\node [style=none] (5) at (5, -1) {};
		\node [style=none] (6) at (5, -2) {};
		\node [style=none] (7) at (5, -3) {};
		\node [style=small black dot] (8) at (0, -3) {};
		\node [style=small black dot] (9) at (1, -2) {};
		\node [style=small black dot] (10) at (2, -1) {};
		\node [style=small black dot] (11) at (3, -2) {};
		\node [style=small black dot] (12) at (4, -1) {};
		\node [style=small black dot] (13) at (5, 0) {};
		\node [style=none] (14) at (7, -1.5) {$\to$};
		\node [style=none] (15) at (9, 0) {};
		\node [style=none] (16) at (9, -1) {};
		\node [style=none] (17) at (9, -2) {};
		\node [style=none] (18) at (9, -3) {};
		\node [style=none] (19) at (14, 0) {};
		\node [style=none] (20) at (14, -1) {};
		\node [style=none] (21) at (14, -2) {};
		\node [style=none] (22) at (14, -3) {};
		\node [style=small black dot] (23) at (9, -3) {};
		\node [style=small black dot] (24) at (10, -2) {};
		\node [style=small black dot] (25) at (11, -2) {};
		\node [style=small black dot] (26) at (12, -2) {};
		\node [style=small black dot] (27) at (13, -2) {};
		\node [style=small black dot] (28) at (14, -3) {};
		\node [style=none] (29) at (5.5, -1.5) {};
		\node [style=none] (30) at (-0.5, -1.5) {};
		\node [style=none] (31) at (14.5, -1.5) {};
		\node [style=none] (32) at (8.5, -1.5) {};
	\end{pgfonlayer}
	\begin{pgfonlayer}{edgelayer}
		\draw (7.center) to (3.center);
		\draw (2.center) to (6.center);
		\draw (5.center) to (1.center);
		\draw (0.center) to (4.center);
		\draw [style=diredge] (8) to (9);
		\draw [style=diredge] (9) to (10);
		\draw [style=diredge] (10) to (11);
		\draw [style=diredge] (11) to (12);
		\draw [style=diredge] (12) to (13);
		\draw (22.center) to (18.center);
		\draw (17.center) to (21.center);
		\draw (20.center) to (16.center);
		\draw (15.center) to (19.center);
		\draw [style=diredge] (23) to (24);
		\draw [style=diredge] (24) to (25);
		\draw [style=diredge] (25) to (26);
		\draw [style=diredge] (26) to (27);
		\draw [style=diredge] (27) to (28);
		\draw [style=box edge] (29.center) to (30.center);
		\draw [style=box edge] (31.center) to (32.center);
	\end{pgfonlayer}
\end{tikzpicture}\end{center}
Then Shor and Jordan's representation is obtained by instead recording the \emph{positions} of the path at each step, rather than the transitions between steps, and labelling each vertex in the state machine rather than each edge. This corresponds to the trajectories of the following finite state machine, which clearly generates only Fibonacci strings:
\begin{center}\begin{tikzpicture}
	\begin{pgfonlayer}{nodelayer}
		\node [style={Z phase dot (zh)}] (0) at (0, 0) {$0$};
		\node [style={Z phase dot (zh)}] (1) at (2, 0) {$1$};
		\node [style=none] (2) at (1.75, -0.5) {};
		\node [style=none] (3) at (0.25, -0.5) {};
		\node [style=none] (4) at (0.25, 0.5) {};
		\node [style=none] (5) at (1.75, 0.5) {};
		\node [style=none] (6) at (2.25, 0.5) {};
		\node [style=none] (7) at (2.25, -0.5) {};
	\end{pgfonlayer}
	\begin{pgfonlayer}{edgelayer}
		\draw [style=diredge, bend left] (2.center) to (3.center);
		\draw [style=diredge, bend left] (4.center) to (5.center);
		\draw [style=diredge, bend left=120, looseness=3.50] (6.center) to (7.center);
	\end{pgfonlayer}
\end{tikzpicture}\end{center}
In the example above, this maps the path $\{+1, +1, -1, +1, +1\}$ to $\{0, 1, 1, 1, 1, 0\}$. Recording the positions rather than the transitions enables crossings to be implemented by local gates, because the position at each step does not need to be calculated from the previous steps in the path. Therefore, to generalize this to $k > 5$, we can simply apply the same `folding' procedure to the corresponding path graph, and label each vertex. Note that for $k > 5$ the labels will require more than one qubit, since there are more than two vertices (a single qudit could also be used). For example for $k = 6, 7, 8$:
\begin{center}\input{figs/fsm-5.tikz}\end{center}
Then an $n$-strand braid will require $(n + 1)\lceil\log_2(\lfloor\frac{k}{2}\rfloor)\rceil$ qubits to encode. Indeed this gives the $n + 1$ we see in Shor and Jordan's algorithm when $k = 5$, and also shows that $k = 2, 3, 4$ (but not $k = 6$) must be trivial to compute since the folded state machines only admit a single trajectory. This is asymptotically more qubits than the $n + O(\log_2(k))$ used in the AJL algorithm, but the savings due to the locality (in terms of gates) may be worth the trade-off. 

Substituting these encodings into \cite[Equation 4]{aharonov2006polynomial} then allows one to derive the equivalent of our Eq.~\eqref{eq:unitary} for $k > 5$. Concretely, consider all positions on the left half of the original path graph (by symmetry, this is sufficient), and all combinations of the following two $\pm 1$ moves. This corresponds to a sequence of three positions in the reflected state machine, and the action of the Temperley-Lieb generators on this bit string can be read off from \cite[Equation 4]{aharonov2006polynomial}. Substituting this into \cite[Definition 2.5]{aharonov2006polynomial} then gives the action of a crossing on this bit string. For example for $k = 5$, we have:
$$\arraycolsep=5pt
    \begin{array}{cccc}
    \toprule
    &             & \multicolumn{2}{c}{\text{Positions}} \\ 
    \multirow{4}{*}[-3em]{\rotatebox{90}{Moves}} &             & \multicolumn{1}{c}{1}     & \multicolumn{1}{c}{2}    \\ \midrule
    & \{-1,-1\} &  - &   - \\ \cmidrule{3-4}
    & \{-1,+1\} &  - & {\scriptstyle  \begin{aligned} &U_\sigma\ket{101} = \phi^{-1}\e^{\im\frac{4\pi}{5}}\ket{101} +\phi^{-\frac{1}{2}}\e^{-\im\frac{3\pi}{5}}\ket{111}\end{aligned}} \\ \cmidrule{3-4}
    & \{+1,-1\} & {\scriptstyle  \begin{aligned} &U_\sigma\ket{010} = \e^{-\im\frac{4\pi}{5}}\ket{010} \end{aligned}} & {\scriptstyle  \begin{aligned} &U_\sigma \ket{111} = -\phi^{-1}\ket{111} + \phi^{-\frac{1}{2}}\e^{-\im\frac{3\pi}{5}}\ket{101}\end{aligned}} \\ \cmidrule{3-4}
    & \{+1,+1\} & {\scriptstyle  \begin{aligned} &U_\sigma\ket{011} = \e^{\im \frac{3\pi}{5}}\ket{011}\end{aligned}} & {\scriptstyle  \begin{aligned} &U_\sigma\ket{110} = \e^{\im \frac{3\pi}{5}}\ket{110}\end{aligned}}\\ \bottomrule
    \end{array}
$$
Which recovers Eq.~\eqref{eq:unitary} exactly. We also include the case of $k = 6$:
$$\arraycolsep=0pt
    \begin{array}{c@{\hskip 10pt}c@{\hskip 5pt}c@{\hskip 5pt}c@{\hskip 5pt}c}
    \toprule
    &             & \multicolumn{3}{c}{\text{Positions}} \\ 
    \multirow{4}{*}[-6.5em]{\rotatebox{90}{Moves}} &             & \multicolumn{1}{c}{1}     & \multicolumn{1}{c}{2}     & \multicolumn{1}{c}{3}   \\ \midrule
    & \{-1,-1\} &  - &  - & {\scriptstyle  \begin{aligned} &U_\sigma\ket{100100} = \e^{\im \frac{19\pi}{12}}\ket{100100}\end{aligned}}  \\ \cmidrule{3-5}
    & \{-1,+1\} & -  & {\scriptstyle  \begin{aligned} U_\sigma\ket{010001} &= \frac{\e^{\im\frac{\pi}{4}}}{\sqrt{3}}\ket{010001} \\&+ \sqrt{\frac{2}{3}}\e^{\im\frac{5\pi}{12}}\ket{011001}\end{aligned}} & {\scriptstyle  \begin{aligned} &U_\sigma\ket{100110} = \e^{\im\frac{\pi}{4}}\ket{100110} \end{aligned}} \\ \cmidrule{3-5}
    & \{+1,-1\} & {\scriptstyle  \begin{aligned} &U_\sigma\ket{000100} = \e^{\im\frac{\pi}{4}}\ket{000100} \end{aligned}} & {\scriptstyle  \begin{aligned} U_\sigma \ket{011001} &= \frac{\e^{\im\frac{\pi}{12}}}{\sqrt{3}}\ket{011001} \\&+ \sqrt{\frac{2}{3}}\e^{\im\frac{5\pi}{12}}\ket{010001}\end{aligned}} & {\scriptstyle  \begin{aligned} &U_\sigma\ket{100110} = \e^{\im\frac{\pi}{4}}\ket{100110} \end{aligned}}  \\ \cmidrule{3-5}
    & \{+1,+1\} & {\scriptstyle  \begin{aligned} &U_\sigma\ket{000110} = \e^{\im \frac{19\pi}{12}}\ket{000110}\end{aligned}} & {\scriptstyle  \begin{aligned} &U_\sigma\ket{011001} = \e^{\im \frac{19\pi}{12}}\ket{011001}\end{aligned}} & {\scriptstyle  \begin{aligned} &U_\sigma\ket{100100} = \e^{\im \frac{19\pi}{12}}\ket{100100}\end{aligned}} \\ \bottomrule
    \end{array}
$$
For higher $k$ values, the corresponding matrix elements of the unitary representations can be derived similarly.

Critically, whenever $k \geq 4$, the string $00\dots0$ is not a valid trajectory of the state machine (in the same way that it is not a Fibonacci basis state), so it is not constrained by the above equations. Thus we can arrange for it to be an eigenstate of the crossing unitary, allowing \texttt{cfev} to be implemented essentially unchanged. Additionally, in the same way that we can efficiently sample from $\mathcal{F}_n$ classically to generate $\ket{s}$ bitstrings for \texttt{cfev}, AJL provides an algorithm \cite[Claim 3.4]{aharonov2006polynomial} to sample paths uniformly for $k > 5$, which can then be `reflected' as above to be compatible with our representation.

\bibliographystyle{eptcs}
\bibliography{refs}

\end{document}